\title{Dual Lower Bounds for 
Approximate Degree and Markov-Bernstein Inequalities}
\author{Mark Bun\thanks{Harvard University, School of Engineering and Applied Sciences. Supported by an NDSEG Fellowship and NSF grant CNS-1237235.}
   \and 
Justin Thaler\thanks{Simons Institute for the Theory of Computing at UC Berkeley. This work was performed while the author was a graduate student at Harvard University, School of Engineering and Applied Sciences, and supported by an NSF Graduate Research Fellowship and NSF grants CNS-1011840 and CCF-0915922.}   }
\date{}
\newenvironment{proof}[1][Proof: ]{\noindent \textbf{#1}}{\qed\medskip}
\newenvironment{proofof}[1]{\noindent \textbf{Proof of #1:}}{\qed\medskip}
\newcommand{\R}{\mathbb{R}}
\newcommand{\sgn}{\widetilde{\mathrm{sgn}}}
\newcommand{\ignore}[1]{}
\newcommand{\provisionallyremove}[1]{}
\newcommand{\eps}{\vareps}
\newcommand{\vareps}{\varepsilon}
\newcommand{\subsecref}[1]{Subsection~\ref{#1}}
\renewcommand{\eqref}[1]{Eq.~(\ref{#1})}
\newcommand{\lemref}[1]{Lemma~\ref{#1}}
\newcommand{\corref}[1]{Corollary~\ref{#1}}
\newcommand{\thmref}[1]{Theorem \ref{#1}}
\newcommand{\propref}[1]{Proposition~\ref{#1}}
\newcommand{\appref}[1]{Appendix~\ref{#1}}
\newtheorem{theorem}{Theorem}
\newtheorem{lemma}[theorem]{Lemma}
\newtheorem{proposition}[theorem]{Proposition}
\newtheorem{corollary}[theorem]{Corollary}
\newtheorem{fact}[theorem]{Fact}
\newtheorem{remark}[theorem]{Remark}
\newcommand{\qed}{\hfill\rule{7pt}{7pt} \medskip}
\begin{document}
 
\maketitle

\begin{abstract}
The $\eps$-\emph{approximate degree} of a Boolean function $f: \{-1, 1\}^n \to \{-1, 1\}$ is the minimum degree of a real polynomial that approximates $f$ to within error $\eps$ in the $\ell_\infty$ norm. We prove several lower bounds on this important complexity measure by explicitly constructing solutions to the dual of an appropriate linear program.
 Our first result resolves the $\eps$-approximate degree of the two-level AND-OR tree for any constant $\eps > 0$. 
 We show that
 this quantity is $\Theta(\sqrt{n})$, closing a line of incrementally larger lower bounds \cite{ambainis, hoyer, NS94, sherstovFOCS, shi}.
 The same lower bound was recently obtained independently by Sherstov using related techniques \cite{sherstovnew}.
 Our second result gives an explicit \emph{dual polynomial} that witnesses a tight lower bound for the approximate degree of any symmetric Boolean function, addressing a question of \v{S}palek \cite{spalek}. Our final contribution is to reprove several Markov-type inequalities from approximation theory by constructing explicit dual solutions to natural linear programs. These inequalities underly the proofs of many of the best-known approximate degree lower bounds, and have important uses throughout theoretical computer science.
\end{abstract}\medskip 

\section{Introduction} 
Approximate degree is an important measure of the complexity of a Boolean function. It captures whether
a function can be approximated by a low-degree polynomial with real coefficients in the $\ell_{\infty}$ norm, and it has diverse applications in theoretical computer science.
For instance, lower bounds on approximate degree underly fundamental circuit complexity lower bounds \cite{mp, beigel93, sherstovseparate} and oracle separations between complexity classes \cite{beigelomb}. In quantum computing, many tight lower bounds on quantum query complexity have been proved via lower bounds on approximate degree \cite{aaronsonshi, klauck, beals}. Approximate degree lower bounds have also found important uses in communication complexity
\cite{patternmatrix, patternmatrixfollowup, ada, spalek, bvw, shizhu, sherstovmultidisj}, enabling the resolution of long-standing open problems regarding both randomized and quantum formulations of bounded-error, small-bias, and multiparty communication. Meanwhile, upper bounds on approximate degree have had several important algorithmic uses. For instance, in computational learning theory, approximate degree upper bounds
underly the best known algorithms for PAC learning DNF and read-once formulas, and agnostically learning disjunctions \cite{klivansservedio, learningreadonce, agnostic}.  


In this paper, we seek to advance our understanding of this fundamental complexity measure. 
We focus on proving approximate degree lower bounds by specifying explicit \emph{dual polynomials}, which are dual solutions to a certain 
linear program capturing the approximate degree of any function. These polynomials act as certificates of the high approximate degree of a function, and 
their construction is of interest because these dual objects have been used recently
to resolve several long-standing open problems in communication complexity (e.g. \cite{patternmatrix, patternmatrixfollowup, ada, spalek, bvw, shizhu}). See
the survey of Sherstov \cite{sherstovsurvey} for an excellent overview of this body of literature. 

\medskip
\noindent \textbf{Our Contributions.}
Our first result resolves the approximate degree of the function $f(x) = \wedge_{i=1}^N \vee_{j=1}^N x_{ij}$, showing this quantity is $\Theta(N)$.
Known as the two-level AND-OR tree, $f$ is perhaps the simplest function whose approximate degree was not previously characterized.
A series of works spanning nearly two decades proved incrementally larger lower bounds on the approximate degree of this function, and this question
was recently re-posed by Aaronson in a tutorial at FOCS 2008 \cite{scottslides}. 
Our proof not only yields a tight lower bound, but it specifies an explicit dual polynomial for the high approximate degree of $f$,
answering a question of \v{S}palek  \cite{spalek} in the affirmative.

Our second result gives an explicit dual polynomial witnessing the high approximate degree of any \emph{symmetric} Boolean function,
recovering a well-known result of Paturi \cite{paturi}. Our solution builds on work of \v{S}palek \cite{spalek},
who gave an explicit dual polynomial for the OR function, and addresses an open question from that work.

Our final contribution is to reprove several classical Markov-type inequalities from approximation theory. 
These inequalities bound the derivative of a polynomial in terms of its degree. Combined with the well-known symmetrization technique (see e.g. \cite{mp, scottslides}), Markov-type inequalties 
have traditionally been the primary tool used to prove approximate degree lower bounds on Boolean functions (e.g. \cite{aaronsonshi, ambainis, NS94, shi}). 
Our proofs of these inequalities specify explicit dual solutions to a natural linear program (that differs from the one used to prove our first two results). 
While these inequalities have been known for over a century \cite{bernstein, markov1, markov2}, to the best of our knowledge our proof technique is novel, and we believe it sheds new light on these results.

\section{Preliminaries}
We work with Boolean functions $f: \{-1, 1\}^n \to \{-1, 1\}$
under the standard convention that 1 corresponds to logical false, and $-1$ corresponds to logical true. We let $\|f\|_\infty = \max_{x \in \{-1, 1\}^n} |f(x)|$ denote the $\ell_\infty$ norm of $f$. 
The $\eps$-approximate degree of a function $f: \{-1, 1\}^n \rightarrow \{-1, 1\}$,
denoted $\deg_\eps(f)$, is the minimum (total) degree of any real polynomial $p$ such that $\|p - f\|_\infty \le \eps$, i.e., $|p(x) - f(x)| \leq \eps$ for all $x \in \{-1, 1\}^n$.
We use $\widetilde{\deg}(f)$ to denote $\deg_{1/3}(f)$, and use this to refer to the \emph{approximate degree} of a function without qualification. The choice of $1/3$ is arbitrary, as $\widetilde{\deg}(f)$ is related to $\deg_\eps(f)$ by a constant factor for any constant $\eps \in (0, 1)$. We let OR$_n$ and AND$_n$ denote the OR function and AND function on $n$ variables respectively,
and we let $\mathbf{1}_{n} \in \{-1, 1\}^{n}$ denotes the $n$-dimensional all-ones vector.  Define $\sgn(x) = -1$ if $x < 0$ and 1 otherwise.


In addition to approximate degree, \emph{block sensitivity} is also an important measure of the complexity of a Boolean function. We introduce this measure because functions with low block sensitivity are an ``easy case'' in the analysis of Theorem \ref{thm:andor} below. The block sensitivity $\text{bs}_x(f)$ of a Boolean function $f: \{-1, 1\}^n \rightarrow  \{-1, 1\}$ at the point $x$ is the maximum number of pairwise disjoint subsets $S_1, S_2, S_3, \dots \subseteq \{1, 2, \dots , n\}$ such that $f(x) \neq f(x^{S_1}) = f(x^{S_2}) = f(x^{S_3}) = \dots  $ Here, $x^S$ denotes the vector obtained from $x$ by negating each entry whose index is in $S$. The block sensitivity $\operatorname{bs}(f)$ of $f$ is the maximum of $\operatorname{bs}_x(f)$ over all $x \in \{-1, 1\}^n.$

\subsection{A Dual Characterization of Approximate Degree}
\label{sec:dualcharacterization}
For a subset $S \subseteq \{1, \dots, n\}$ and $x \in \{-1, 1\}^n$, let $\chi_S(x) = \prod_{i \in S} x_i$. Given a Boolean function $f$, let $p(x) = \sum_{|S| \leq d} c_S \chi_S(x)$ be a polynomial of degree $d$ that minimizes $\|p - f\|_\infty$, where the coefficients $c_S$ are real numbers. Then $p$ is an optimum of the following linear program.

\[ \boxed{\begin{array}{lll} 
    \text{min}  &     \eps    \\
    \mbox{such that} &\Big|f(x) - \sum_{|S| \leq d} c_S \chi_S(x)\Big| \leq \eps & \text{ for each } x \in \{-1, 1\}^n\\
    &c_S \in \mathbb{R} & \text{ for each } |S| \leq d\\
    &\eps \geq 0
    \end{array}}
\]

The dual LP is as follows.

\[ \boxed{\begin{array}{lll} 
    \text{max} &    \sum_{x \in \{-1, 1\}^n} \phi(x) f(x)   \\
    \mbox{such that} &\sum_{x \in \{-1, 1\}^n} |\phi(x)| = 1\\
    &\sum_{x \in \{-1, 1\}^n} \phi(x) \chi_S(x)=0  & \text{ for each } |S| \leq d\\
    &\phi(x) \in \mathbb{R} & \text{ for each } x \in \{-1, 1\}^n
    \end{array}}
\]

Strong LP-duality yields the following well-known dual characterization of approximate degree (cf. \cite{patternmatrix}).

\begin{theorem} \label{thm:prelim} Let $f: \{-1, 1\}^n \to \{-1, 1\}$ be a Boolean function. Then $\deg_\eps(f) > d$ if and only if there is a polynomial $\phi: \{-1, 1\}^n \rightarrow \mathbb{R}$ such that 
\begin{equation} \label{eq:prelim0} \sum_{x \in \{-1, 1\}^n} f(x) \phi(x) > \eps, \end{equation}
\begin{equation} \label{eq:prelim1} \sum_{x \in \{-1, 1\}^n} |\phi(x)| = 1,\end{equation}  and
\begin{equation} \label{eq:prelim2} \sum_{x \in \{-1, 1\}^n} \phi(x) \chi_S(x)=0   \text{ for each } |S| \leq d.\end{equation}
\end{theorem}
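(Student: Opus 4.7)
The approach is a direct application of strong LP duality to the pair of linear programs displayed in the excerpt. I would proceed in three steps.

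First, I would verify that the primal LP computes the correct quantity. Every real-valued function on $\bn$ admits a unique multilinear representation $\sum_S c_S \chi_S$, so as the coefficients $(c_S)_{|S| \leq d}$ range over $\reals$, the expression $\sum_{|S|\leq d} c_S \chi_S(x)$ ranges exactly over the restriction of degree-$d$ multilinear polynomials to $\bn$. Thus the primal's optimal value equals $\min_{\deg p \leq d} \|p - f\|_\infty$, and the statement $\deg_\epsilon(f) > d$ is literally the statement that this optimum exceeds $\epsilon$.

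Second, I would verify that the second LP is indeed the dual. The standard recipe splits each absolute-value constraint into the pair $f(x) - \sum_{|S|\leq d} c_S \chi_S(x) \leq \epsilon$ and $-f(x) + \sum_{|S|\leq d} c_S \chi_S(x) \leq \epsilon$, introduces nonnegative dual multipliers $\phi^+(x)$ and $\phi^-(x)$ for these two constraints, and sets $\phi(x) = \phi^+(x) - \phi^-(x)$. Matching the primal variables $c_S$ and $\epsilon$ to dual constraints produces $\sum_x \phi(x) \chi_S(x) = 0$ for each $|S| \leq d$ together with $\sum_x (\phi^+(x) + \phi^-(x)) \leq 1$. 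At optimality one can take $\phi^+(x) = \max(\phi(x),0)$ and $\phi^-(x) = \max(-\phi(x),0)$, so the latter becomes $\sum_x |\phi(x)| \leq 1$, which may be assumed tight by rescaling any nonzero $\phi$ (this only scales the objective up). The resulting objective is $\sum_x \phi(x) f(x)$, matching the boxed dual.

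Third, I would invoke strong LP duality. Both programs are finite-dimensional and feasible: the primal is feasible with objective at most $\|f\|_\infty = 1$ by taking all $c_S = 0$, and the dual is trivially feasible with $\phi \equiv 0$. Strong LP duality therefore gives that the two optima coincide. Consequently the primal optimum exceeds $\epsilon$ if and only if some dual-feasible $\phi$ achieves $\sum_x f(x)\phi(x) > \epsilon$, which combined with the first step is exactly the equivalence asserted in the theorem. I do not anticipate a real obstacle: the only bookkeeping concerns the reduction of absolute-value constraints to pairs of linear inequalities and the normalization $\sum_x|\phi(x)| = 1$, both of which are standard features of LP formulations of $\ell_\infty$-approximation problems.
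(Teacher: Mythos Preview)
Your proposal is correct and follows precisely the route the paper takes: the paper simply asserts that the theorem follows from strong LP duality applied to the displayed primal/dual pair (citing \cite{patternmatrix}), and your three steps spell out exactly that derivation. There is nothing to add.
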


If $\phi$ satisfies \eqref{eq:prelim2}, we say $\phi$ has \emph{pure high degree} $d$.
We refer to any feasible solution $\phi$ to the dual LP as a \emph{dual polynomial} for $f$.

\section{A Dual Polynomial for the AND-OR Tree}
Define $\operatorname{AND-OR}^{M}_{N}: \{-1, 1\}^{MN} \rightarrow \{-1, 1\}$ by $f(x) = \wedge_{i=1}^{M} \vee_{j=1}^{N} x_{ij}$.
$\operatorname{AND-OR}^{N}_{N}$ is known as the two-level AND-OR tree, and its approximate degree has resisted characterization for close to two decades.
Nisan and Szegedy proved an $\Omega(N^{1/2})$ lower bound on $\widetilde{\deg}(\text{AND-OR}^{N}_{N})$ in \cite{NS94}.
This was subsequently improved to $\Omega(\sqrt{N \log N})$ by Shi \cite{shi}, and improved further to $\Omega(N^{2/3})$ by Ambainis \cite{ambainis}.
Most recently, Sherstov proved an $\Omega(N^{3/4})$ lower bound in \cite{sherstovFOCS}, which was the best lower bound prior to our work. 
The best upper bound is $O(N)$ due to H\o yer, Mosca, and de Wolf \cite{hoyer}, which matches our new lower bound. 

By refining Sherstov's analysis in \cite{sherstovFOCS}, we will show that $\widetilde{\deg}(\text{AND-OR}^{M}_{N}) = \Omega(\sqrt{MN})$, which matches an upper bound 
implied by a result of Sherstov \cite{sherstovrobust}. In particular, this implies that the approximate degree of the two-level AND-OR tree is $\Theta(N)$.

\begin{theorem} \label{thm:andor} $\widetilde{\deg}(\operatorname{AND-OR}^{M}_{N}) = \Theta(\sqrt{MN})$. \end{theorem}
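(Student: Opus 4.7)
The upper bound $\widetilde{\deg}(\operatorname{AND-OR}^{M}_{N}) = O(\sqrt{MN})$ already follows from Sherstov's robust-polynomial machinery \cite{sherstovrobust}: compose a noise-tolerant degree-$O(\sqrt{N})$ approximation to $\operatorname{OR}_N$ with a degree-$O(\sqrt{M})$ approximation to $\operatorname{AND}_M$. The substance of the proof is therefore the matching lower bound $\Omega(\sqrt{MN})$, which I plan to establish by producing an explicit dual witness in the sense of Theorem~\ref{thm:prelim}.

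My plan is a refined dual block composition. Let $\psi: \bn \to \R$ be an inner dual witness for $\operatorname{OR}_N$ with pure high degree $d_1 = \Omega(\sqrt{N})$, unit $\ell_1$-norm, and the strong property $\sgn(\psi(x)) = \operatorname{OR}_N(x)$ for every $x$; such a $\psi$ is essentially produced by \v{S}palek's explicit construction for OR. Let $\Psi: \bits^M \to \R$ be an outer dual witness for $\operatorname{AND}_M$ with pure high degree $d_2 = \Omega(\sqrt{M})$, unit $\ell_1$-norm, and correlation $\sum_y \Psi(y) \operatorname{AND}_M(y) > 1/3$. Define
$$\phi(x_1,\dots,x_M) \;=\; 2^M \, \Psi\bigl(\sgn(\psi(x_1)),\dots,\sgn(\psi(x_M))\bigr)\,\prod_{i=1}^M |\psi(x_i)|.$$
To verify the three dual conditions for $\phi$, my key calculations are: (i) Fourier-expand $\Psi(y)=\sum_T\widehat\Psi(T)\chi_T(y)$ (only $|T|\geq d_2$ terms survive) and use the identity $\sgn(\psi)\cdot|\psi|=\psi$ to rewrite $\phi$ as $2^M\sum_{|T|\geq d_2}\widehat\Psi(T)\prod_{i\in T}\psi(x_i)\prod_{i\notin T}|\psi(x_i)|$; since pure high degree is superadditive under products on disjoint variable blocks, each summand has pure high degree at least $|T|\cdot d_1 \geq d_1d_2=\Omega(\sqrt{MN})$. (ii) Using $\sum_{x_i}\psi(x_i)=0$ (implied by pure high degree $\geq 1$), deduce that $\sum_{x_i:\sgn(\psi(x_i))=\pm 1}|\psi(x_i)|=1/2$ regardless of the sign, so the $2^M$ factor cancels against $2^{-M}$ and $\sum_x|\phi(x)|=\sum_y|\Psi(y)|=1$. (iii) Under the sign-alignment hypothesis $\sgn(\psi)=\operatorname{OR}_N$, the correlation $\sum_x\phi(x)f(x)$ collapses in the same way to $\sum_y\Psi(y)\operatorname{AND}_M(y)>1/3$, since the input to $\Psi$ now literally equals the vector of OR-values fed into $\operatorname{AND}_M$.

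The main obstacle is securing the inner witness $\psi$ with exact pointwise sign alignment $\sgn(\psi)=\operatorname{OR}_N$ while maintaining pure high degree $\Omega(\sqrt{N})$; this is precisely where Sherstov's $\Omega(N^{3/4})$ bound in \cite{sherstovFOCS} incurred a loss. I would build $\psi$ from \v{S}palek's explicit symmetric dual polynomial for $\operatorname{OR}_N$, verifying via a Paturi-style level analysis that its signs on the Hamming levels agree with $\operatorname{OR}_N$ (which takes value $-1$ on every nonzero Hamming weight and $1$ at the all-ones input) while retaining $\ell_1$-norm $1$ and pure high degree $\Omega(\sqrt{N})$. If exact alignment is not immediate on every level, the block-sensitivity notion introduced in the preliminaries handles the residual ``easy case'': on inputs whose block sensitivity is small one can invoke the classical $\widetilde{\deg}(g)=\Omega(\sqrt{\operatorname{bs}(g)})$ bound directly, while on the complementary high-block-sensitivity region the dual block composition above goes through with an error term controlled by the $\ell_1$-mass of the sign-disagreement set.
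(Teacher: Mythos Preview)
Your dual block-composition $\phi$ and the verifications in steps (i) and (ii) are exactly the paper's construction and argument. The genuine gap is in step (iii): the inner witness you postulate cannot exist. A $\psi$ with $\sgn(\psi(x))=\operatorname{OR}_N(x)$ for every $x$ and $\|\psi\|_1=1$ would satisfy $\sum_x\psi(x)\operatorname{OR}_N(x)=\sum_x|\psi(x)|=1$; if in addition $\psi$ had pure high degree $d\geq 1$, Theorem~\ref{thm:prelim} would give $\deg_\eps(\operatorname{OR}_N)>d$ for every $\eps<1$, i.e.\ no degree-$d$ polynomial can even sign-represent $\operatorname{OR}_N$. But $\operatorname{OR}_N$ has threshold degree $1$ (take any small multiple of $\sum_i x_i-(N-\tfrac12)$), so no such $\psi$ exists with pure high degree $\geq 1$, let alone $\Omega(\sqrt{N})$. \v{S}palek's construction does \emph{not} have pointwise sign agreement; its correlation with $\operatorname{OR}_N$ is a small constant.

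What is available---and what the paper actually exploits---is \emph{one-sided} alignment: any dual witness $\psi$ for $\operatorname{OR}_N$ automatically has $\psi(\mathbf{1}_N)>0$ (Lemma~\ref{facts}), hence $\sgn(\psi(x))=-1\Rightarrow\operatorname{OR}_N(x)=-1$, while the reverse implication fails on a set $A_1$ of $\ell_1$-mass $<\delta/2$. With only one-sided alignment your collapse in (iii) is invalid, and the correlation must be analyzed by conditioning on the sign vector $z=(\sgn(\psi(x_1)),\dots,\sgn(\psi(x_M)))$. The paper handles $z=-\mathbf{1}_M$ separately: one-sidedness forces $\operatorname{OR}_N(x_i)=-1$ for every $i$ there, so $\operatorname{AND}_M$ evaluates to $-1$ deterministically and this single term already contributes $-\Psi(-\mathbf{1}_M)>0$. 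For every other $z$ one uses that $\operatorname{bs}_z(\operatorname{AND}_M)=1$, so the $\delta$-mass disagreement set flips $\operatorname{AND}_M$ with probability at most $2\delta$ (Proposition~\ref{finalprop}), losing at most $4\delta$ in total correlation. Your fallback paragraph invokes block sensitivity of the wrong object (the composed function, rather than $\operatorname{AND}_M$ at individual sign vectors $z$) and the bound $\widetilde{\deg}\geq\Omega(\sqrt{\operatorname{bs}})$, which plays no role in the argument.
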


\medskip
\noindent \textbf{Independent work by Sherstov.} Independently of our work, Sherstov \cite{sherstovnew} has discovered the same $\Omega(\sqrt{MN})$ lower bound 
on $\widetilde{\deg}(\text{AND-OR}^{M}_{N})$. Both his proof and ours exploit the fact that the OR function has a dual polynomial with one-sided error. 
Our proof proceeds by constructing an explicit dual polynomial for $\text{AND-OR}^{M}_{N}$, by combining a dual polynomial
for OR$_N$ with a dual polynomial for AND$_M$. In contrast, Sherstov mixes the primal and dual views: his proof combines a dual polynomial for OR$_{N}$
with an approximating polynomial $p$ for $\text{AND-OR}^{M}_{N}$ to construct an approximating polynomial $q$ for $\text{AND}_{M}$. The proof in \cite{sherstovnew} shows that
$q$ has much lower degree than $p$, so the desired lower bound on the degree of $p$ follows from known lower bounds on the degree of $q$.

The proof of \cite{sherstovnew} is short (barely more than a page), while our proof has the benefit of yielding an explicit dual polynomial witnessing the lower bound.

\subsection{Proof Outline} \label{sec:and-or-outline}
Our proof is a refinement of a result of Sherstov \cite{sherstovFOCS}, which roughly showed that approximate degree increases multiplicatively under function composition. 
Specifically, Sherstov showed the following.

\begin{proposition}[{\cite[Theorem 3.3]{sherstovFOCS}}] \label{prop:sherstov} Let $F: \{-1, 1\}^{M} \rightarrow \{-1, 1\}$ and $f: \{-1, 1\}^{N} \to \{-1, 1\}$ be 
given functions. Then for all $\eps, \delta > 0$,
 $$\deg_{\eps - 4 \delta \operatorname{bs}(F)}(F(f, \dots, f)) \geq \deg_\eps(F) \deg_{1-\delta}(f).$$ \end{proposition}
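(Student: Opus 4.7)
The plan is to construct an explicit dual polynomial for $F(f,\dots,f)$ via \emph{dual block composition}, the standard technique for combining dual witnesses across function composition. By \thmref{thm:prelim}, fix dual witnesses $\psi : \bits^M \to \R$ for $F$ (with $\|\psi\|_1 = 1$, pure high degree at least $D := \deg_\eps(F)$, and $\sum_y \psi(y) F(y) > \eps$) and $\phi : \bits^N \to \R$ for $f$ (with $\|\phi\|_1 = 1$, pure high degree at least $d := \deg_{1-\delta}(f)$, and $\sum_x \phi(x) f(x) > 1-\delta$). I would set
\[
\Psi(x_1,\dots,x_M) \;=\; 2^M\, \psi\bigl(\sgn\phi(x_1),\dots,\sgn\phi(x_M)\bigr) \prod_{i=1}^M |\phi(x_i)|,
\]
and verify that $\Psi$ is a dual witness showing $\deg_{\eps - 4\delta \operatorname{bs}(F)}(F(f,\dots,f)) > Dd - 1$.

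Pure high degree is handled via the Fourier expansion $\psi(y) = \sum_{|S|\ge D} \widehat\psi(S) \chi_S(y)$ combined with the identity $\sgn\phi(x) \cdot |\phi(x)| = \phi(x)$, which rewrites $\Psi$ as $2^M \sum_{|S|\ge D} \widehat\psi(S) \prod_{i \in S} \phi(x_i) \prod_{i\notin S}|\phi(x_i)|$. For any $T \sse [M]\times [N]$ with $|T| < Dd$, partitioning $T = \bigsqcup_i T_i$ forces some $i \in S$ with $|T_i| < d$ for every $|S| \ge D$; the pure high degree of $\phi$ then makes $\sum_{x_i} \phi(x_i) \chi_{T_i}(x_i) = 0$, killing that $S$-term and hence the whole Fourier coefficient $\sum_x \Psi(x) \chi_T(x)$.

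The normalization $\|\Psi\|_1 = 1$ and the ``idealized'' correlation $\sum_x \Psi(x)\, F(\sgn\phi(x_1),\dots,\sgn\phi(x_M)) = \sum_y \psi(y) F(y) > \eps$ both follow from the key observation that $\phi$ having pure high degree at least $1$ forces $\sum_{x_i} \phi(x_i) = 0$. Consequently, $\sgn\phi(x_i)$ is uniform on $\bits$ under the distribution $\nu(x_i) = |\phi(x_i)|$, so under $\mu = \prod_i \nu_i$ the tuple $(\sgn\phi(x_1),\dots,\sgn\phi(x_M))$ is uniform on $\bits^M$, whereupon both identities reduce to direct pushforward calculations using $\|\psi\|_1 = 1$.

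The main obstacle is controlling the error from replacing the idealized inputs $\sgn\phi(x_i)$ to $F$ with the actual inputs $f(x_i)$. Writing $\sigma(x) = (\sgn\phi(x_1),\dots,\sgn\phi(x_M))$, decompose
\[
\sum_x \Psi(x) F(f(x_1),\dots,f(x_M)) \;=\; \sum_y \psi(y) F(y) + \sum_x \Psi(x)\bigl[F(f(x)) - F(\sigma(x))\bigr];
\]
it remains to bound the second sum by $4\delta \operatorname{bs}(F)$. The correlation property of $\phi$ together with the uniformity of $\sigma(x)$ under $\mu$ yields $\Pr_\mu[f(x_i) \neq \sgn\phi(x_i) \mid \sgn\phi(x_i) = y_i] \le \delta$ for each $y_i$, so conditional on $\sigma(x) = y$ the ``flip set'' $B = \{i : f(x_i) \neq \sgn\phi(x_i)\}$ is a product of independent Bernoullis with parameter at most $\delta$. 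The noise-sensitivity consequence of block sensitivity then gives $\Pr[F(y^B) \neq F(y) \mid \sigma(x) = y] \le 2\delta \operatorname{bs}(F)$ (most cleanly via the fractional certificate complexity $\operatorname{FC}_y(F) \le 2\operatorname{bs}(F)$, together with Markov applied to a fractional certificate that must be ``hit'' by $B$). Summing over $y \in \bits^M$ weighted by $|\psi(y)|$ (total mass $1$) and absorbing the factor of $2$ from $|F(f(x))-F(\sigma(x))|\le 2\mathbf{1}[F(f(x))\neq F(\sigma(x))]$ yields the desired $4\delta \operatorname{bs}(F)$ bound, completing the proof.
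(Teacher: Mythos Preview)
Your construction and the overall architecture of the argument coincide exactly with the paper's (and Sherstov's original): the dual block composition $\Psi$, the Fourier-expansion proof of pure high degree, the normalization via uniformity of $\sgn\phi(x_i)$ under $|\phi|$, and the decomposition of the correlation into the ``idealized'' term $\sum_y \psi(y)F(y)$ plus an error controlled by the probability that flipping a random set $B$ changes $F(y)$. All of that is correct and is precisely what the paper does.

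The gap is in your justification of the noise-sensitivity bound $\Pr[F(y^B)\neq F(y)\mid \sigma(x)=y]\le 2\delta\operatorname{bs}(F)$. Your proposed route---Markov's inequality applied to a fractional certificate, together with $\operatorname{FC}_y(F)\le 2\operatorname{bs}(F)$---does not work: by LP duality $\operatorname{FC}_y(F)=\operatorname{fbs}_y(F)\ge \operatorname{bs}_y(F)$, so the inequality you invoke points the wrong way (and in fact there are functions with $\operatorname{fbs}(F)/\operatorname{bs}(F)$ unbounded). The Markov argument therefore only yields $\delta\cdot\operatorname{FC}_y(F)$, which can be much larger than $2\delta\operatorname{bs}(F)$ and does not give the stated proposition.

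The paper (following Sherstov) proves the needed bound directly by a short coupling argument rather than via any certificate measure: set $r=\lfloor 1/\delta\rfloor$, form a random $r\times M$ matrix whose columns are uniform standard basis vectors, and observe that its rows $y^1,\dots,y^r$ are each marginally distributed as your $B$ but have pairwise disjoint supports. Hence the number of rows $j$ for which some $w\preceq y^j$ flips $F$ at $y$ is at most $\operatorname{bs}_y(F)$, and averaging over $j$ gives $\Pr[F(y)\neq F(y\cdot w)\text{ for some }w\preceq B]\le \operatorname{bs}_y(F)/r\le 2\delta\operatorname{bs}_y(F)$. Swap this in for your fractional-certificate step and the proof goes through.
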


Sherstov's proof of Proposition \ref{prop:sherstov} proceeds by taking a dual witness $\Psi$ to the high $\eps$-approximate degree of $F$, and combining it with a dual witness $\psi$ to the high $(1-\delta)$-approximate degree of $f$
to obtain a dual witness $\zeta$ for the high $(\eps - 4 \delta \text{bs}(F))$-approximate degree of $F(f, \dots, f)$. 
His proof proceeds in two steps: he first shows that $\zeta$ has pure-high degree at least  $\deg_\eps(F) \deg_{1-\delta}(f)$,
and then he lower bounds the correlation of $\zeta$ with $F(f, \dots, f)$. The latter step of this analysis yields a lower bound
on the correlation of $\zeta$ with $F(f, \dots, f)$ that deteriorates rapidly as the block sensitivity $\text{bs}(F)$ grows.

Proposition \ref{prop:sherstov} itself does not yield a tight lower bound for $\widetilde{\text{deg}}(\text{AND-OR}^{M}_{N})$, because
the function $\text{AND}_{M}$ has maximum block sensitivity $\text{bs}(\text{AND}_{M})=M$. We address this by refining the second step of Sherstov's analysis
in the case where $F=\text{AND}_{M}$ and $f=\text{OR}_{N}$. We leverage two facts. First, although the block sensitivity of $\text{AND}_{M}$ is high,
it is only high at one input, namely the all-true input. At all other inputs, $\text{AND}_{M}$ has low block sensitivity and the analysis of Proposition \ref{prop:sherstov} is tight. Second, we use the fact that any dual witness to the high approximate degree of $\text{OR}_N$ has one-sided error. Namely, if $\psi(x)<0$ for such a dual witness $\psi$, 
then we know that $\psi(x)$ agrees in sign with $\text{OR}_{N}(x)$. This property allows us to handle the all-true input to $\text{AND}_{M}$ separately: we use it to show that despite
the high block-sensitivity of $\text{AND}_M$ at the all-true input $y$, this input nonetheless contributes positively to the correlation between $\zeta$ and $F(f, \dots, f)$. The details of our construction follow.

\subsection{Proof of \thmref{thm:andor}}
\label{sec:andorproof}
\vspace{-4mm}
\begin{proof}[]

Nisan and Szegedy \cite{NS94} proved the now well-known result that for any constant $0 < \eps<1$, 
$\deg_\eps(\text{AND}_n) = \deg_\eps(\text{OR}_n) = \Theta(\sqrt{n})$. Let $\Psi: \{-1, 1\}^{M} \rightarrow \mathbb{R}$ be a dual witness for the $(1/3)$-approximate degree of 
$\text{AND}_{M}$ whose existence is guaranteed by Theorem 
\ref{thm:prelim}. There is some $\eps > 1/3$ and $d = \Theta(\sqrt{M})$ such that $\Psi$ satisfies:
\begin{equation} \label{eqPsi1} \sum_{x \in \{-1, 1\}^{M}} \Psi(x) \text{AND}_{M}(x) = \eps, \end{equation}
\begin{equation} \label{eqPsi2} \sum_{x \in \{-1, 1\}^{M}} |\Psi(x)| = 1,\end{equation}
\begin{equation} \label{eqPsi3} \sum_{x \in \{-1, 1\}^{M}} \Psi(x) \chi_S(x)=0   \text{ for each } |S| \leq d.\end{equation}

Likewise, let $\psi$ be the dual witness for the $(1-(\eps - 1/3)/4)$-approximate degree of $\text{OR}_{N}$. By \thmref{thm:prelim}, there is some $\delta < (\eps - 1/3)/4$ and some $d'=\Theta(\sqrt{N})$ such that $\psi$ satisfies:
\begin{equation} \label{eqpsi1} \sum_{x \in \{-1, 1\}^{N}} \psi(x) \text{OR}_{N}(x) = 1-\delta, \end{equation}
\begin{equation} \label{eqpsi2} \sum_{x \in \{-1, 1\}^{N}} |\psi(x)| = 1,\end{equation}
\begin{equation} \label{eqpsi3} \sum_{x \in \{-1, 1\}^{N}} \psi(x) \chi_S(x) =0   \text{ for each } |S| \leq d'.\end{equation}

We will also make use of the following easy lemma, which tells us the precise values
of $\psi(\mathbf{1}_{N})$ and $\Psi(-\mathbf{1}_{M})$. This is essentially a restatement of 
a result due to Gavinsky and Sherstov \cite{gavinskysherstov}.

\begin{lemma}\label{facts}
\begin{equation} \label{eq:facts1} 1-\delta = \sum_{x \in \{-1, 1\}^{N}} \psi(x) \operatorname{OR}_{N}(x) = 2 \psi(\mathbf{1}_{N}).\end{equation} In particular, $\psi(\mathbf{1}_{N}) > 0$. Similarly,
\begin{equation} \label{eq:facts2} \eps = \sum_{x \in \{-1, 1\}^{M}} \Psi(x) \operatorname{AND}_{M}(x) = -2 \Psi(-\mathbf{1}_{M}).\end{equation} In particular, $\Psi(-\mathbf{1}_{M}) < 0$. \end{lemma}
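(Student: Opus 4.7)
The plan is to derive both identities by exploiting the fact that $\psi$ and $\Psi$ are orthogonal to the constant function, so only the special inputs $\mathbf{1}_N$ and $-\mathbf{1}_M$ contribute to the correlation sums.

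First I would write $\operatorname{OR}_N$ and $\operatorname{AND}_M$ in a convenient closed form using the convention that $1$ encodes false and $-1$ encodes true. Under this convention, $\operatorname{OR}_N(x) = 1$ exactly when every bit of $x$ is $1$, i.e., when $x = \mathbf{1}_N$, and $\operatorname{OR}_N(x) = -1$ otherwise. Therefore
\[ \operatorname{OR}_N(x) = -1 + 2 \cdot \mathbf{1}[x = \mathbf{1}_N]. \]
Symmetrically, $\operatorname{AND}_M(x) = -1$ iff $x = -\mathbf{1}_M$ and $\operatorname{AND}_M(x) = 1$ otherwise, so
\[ \operatorname{AND}_M(x) = 1 - 2 \cdot \mathbf{1}[x = -\mathbf{1}_M]. \]

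Next I would invoke the pure-high-degree conditions. Since $d' \geq 1$, equation (\ref{eqpsi3}) applied to $S = \emptyset$ gives $\sum_{x} \psi(x) \chi_\emptyset(x) = \sum_x \psi(x) = 0$; the same argument using (\ref{eqPsi3}) gives $\sum_x \Psi(x) = 0$. Substituting the closed forms above,
\[ \sum_{x \in \{-1,1\}^N} \psi(x) \operatorname{OR}_N(x) = -\sum_x \psi(x) + 2\psi(\mathbf{1}_N) = 2\psi(\mathbf{1}_N), \]
\[ \sum_{x \in \{-1,1\}^M} \Psi(x) \operatorname{AND}_M(x) = \sum_x \Psi(x) - 2\Psi(-\mathbf{1}_M) = -2\Psi(-\mathbf{1}_M). \]
Combined with (\ref{eqpsi1}) and (\ref{eqPsi1}), these identities give $1-\delta = 2\psi(\mathbf{1}_N)$ and $\eps = -2\Psi(-\mathbf{1}_M)$, as desired. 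The sign assertions follow immediately: $\psi(\mathbf{1}_N) = (1-\delta)/2 > 0$ because $\delta < (\eps - 1/3)/4 < 1$, and $\Psi(-\mathbf{1}_M) = -\eps/2 < 0$ because $\eps > 1/3 > 0$.

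There is no real obstacle here; the lemma is a direct consequence of the fact that $\operatorname{OR}_N$ and $\operatorname{AND}_M$ agree with a constant except at a single input, together with orthogonality of $\psi$ and $\Psi$ to constants. The only thing one has to be careful about is the nonstandard $\{-1,1\}$ encoding of true/false, which determines the single distinguished input in each case.
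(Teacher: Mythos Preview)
Your proof is correct and follows essentially the same approach as the paper: write $\operatorname{OR}_N$ (resp.\ $\operatorname{AND}_M$) as a constant plus an indicator at the single distinguished input, then use orthogonality of $\psi$ (resp.\ $\Psi$) to the constant function. The paper's proof is terser but identical in substance.
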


\begin{proof}[Proof of Lemma \ref{facts}:]
The first part follows because
\[\sum_{x \in \{-1, 1\}^N} \psi(x) \text{OR}_{N}(x)  = 2 \psi(\mathbf{1}) - \sum_{x \in \{-1, 1\}^N} \psi(x).\]
The second term on the right-hand side is zero because $\psi$ is orthogonal 
to all polynomials of degree at most $d$, and in particular $\psi$ is orthogonal to the constant function. The proof for the second part is
similar.\end{proof}

As in Sherstov's proof of Proposition \ref{prop:sherstov}, we define
$\zeta: \left(\{-1, 1\}^{N}\right)^{M} \rightarrow \mathbb{R}$ by
\begin{equation} \label{eq:zeta} \zeta(x_1, \dots, x_{M}) := 2^{M} \Psi(\dots, \sgn(\psi(x_i)), \dots) \prod_{i=1}^{M} |\psi(x_i)|,\end{equation}
where $x_i = (x_{i, 1}, \dots, x_{i, N})$.

By \thmref{thm:prelim}, in order to show that $\zeta$ is a dual witness for the fact that the $(1/3)$-approximate degree of $\text{AND-OR}^M_N$
is $\Omega(\sqrt{MN})$, it suffices to show that 

\begin{equation} \label{eq:show1} \sum_{(x_1, \dots, x_{M}) \in \left(\{-1, 1\}^{N}\right)^M} \zeta(x_1, \dots, x_{M}) \text{AND-OR}^{M}_{N}(x_1, \dots, x_{M}) \geq 1/3.\end{equation}
\begin{equation} \label{eq:show2} \sum_{(x_1, \dots, x_{M}) \in \left(\{-1, 1\}^{N}\right)^M}  |\zeta(x_1, \dots, x_{M})| = 1. \end{equation}
\begin{equation} \label{eq:show3}  \sum_{(x_1, \dots, x_{M}) \in \left(\{-1, 1\}^{N}\right)^M}  \zeta(x_1, \dots, x_{M})\chi_S(x_1, \dots, x_M) =0   \text{ for each } |S| \leq d\cdot d'.\end{equation}

\eqref{eq:show3} is proved exactly as in \cite{sherstovFOCS}; we provide Sherstov's argument in Appendix \ref{app:andor2} for completeness. We now argue that Expression (\ref{eq:show1}) and \eqref{eq:show2} hold as well.

\begin{proof}[Proof of \eqref{eq:show2}.] Let $\mu$ be the distribution on $\left(\{-1, 1\}^{N}\right)^M$ given by $\mu(x_1, \dots, x_M) = \prod_{i=1}^{M} |\psi(x_i)|$. Since $\psi$ is orthogonal to the constant polynomial, it has expected value 0, and hence the string $(\dots, \sgn(\psi(x_i)), \dots)$ is distributed uniformly in $\{-1, 1\}^{M}$
when one samples $(x_1, \dots, x_{M})$ according to $\mu$. 
Thus,
$$ \sum_{(x_1, \dots, x_{M}) \in \left(\{-1, 1\}^{N}\right)^M} |\zeta(x_1, \dots, x_{M})| = \sum_{z \in \{-1, 1\}^M} |\Psi(z)| = 1$$
by 
\eqref{eqPsi2}, proving \eqref{eq:show2}. 
\end{proof}

\begin{proof}[Proof of Expression (\ref{eq:show1}).] Using the same distribution $\mu$ as in the proof of \eqref{eq:show2}, observe that
$$ \sum_{(x_1, \dots, x_{M}) \in \left(\{-1, 1\}^{N}\right)^M} \zeta(x_1, \dots, x_{M}) \text{AND-OR}^{M}_{N}(x_1, \dots, x_{M})$$
$$= 2^{M} \mathbf{E}_\mu [\Psi( \dots, \sgn(\psi(x_i)), \dots) \text{AND}_M\left( \dots, \text{OR}_N(x_i), \dots\right)]$$
\begin{equation} \label{eq1andor} = \sum_{z \in \{-1, 1\}^{M}} \Psi(z) \left(\sum_{(x_1, \dots, x_{M}) \in \left(\{-1, 1\}^N\right)^M} \text{AND}_M\left( \dots,  \text{OR}_N(x_i), \dots\right) \mu(x_1, \dots, x_{M}|z)\right),\end{equation}
where $\mu(\mathbf{x}|z)$ denotes the probability of $\mathbf{x}$ under $\mu$, conditioned on $(\dots, \sgn(\psi(x_i)), \dots)=z$.

Let $A_{1} = \{x \in \{-1, 1\}^{N}: \psi(x) \geq 0, \text{OR}_N(x) = -1\}$ and $A_{-1} = \{x \in \{-1, 1\}^{N}: \psi(x) < 0, \text{OR}_N(x) = 1\}$, so
$A_1 \cup A_{-1}$ is the set of all inputs $x$ where the sign of $\psi(x)$ disagrees with $\text{OR}_N(x)$. Notice that 
$\sum_{x \in A_1 \cup A_{-1}} |\psi(x)| = \delta/2$ because $\psi$ has correlation $1-\delta$ with $\text{OR}_N$. 

As noted in \cite{sherstovFOCS}, for any given $z \in \{-1, 1\}^{M}$, the following two random variables are identically distributed:

\begin{itemize}
\item The string $(\dots, \text{OR}_N(x_i), \dots)$ when one chooses $(\dots, x_i, \dots)$ from the conditional distribution 
$\mu(\cdot|z)$. 
\item The string $(\dots, y_iz_i, \dots)$, where $y \in \{-1, 1\}^{M}$ is a random string whose $i$th bit independently
takes on value $-1$ with probability $2 \sum_{x \in A_{z_i}} |\psi(x)| \le \delta$. 
\end{itemize}

Thus, Expression (\ref{eq1andor}) equals

\begin{equation} \label{eq2andor} \sum_{z \in \{-1, 1\}^{M}} \Psi(z) \cdot \mathbf{E}[\text{AND}_M(\dots, y_iz_i, \dots)],\end{equation}

where $y \in \{-1, 1\}^{M}$ is a random string whose $i$th bit independently
takes on value $-1$ with probability $2 \sum_{x \in A_{z_i}} |\psi(x)| \le \delta$. 

We first argue that the term corresponding to $z=-\mathbf{1}_{M}$ contributes $-\Psi(z)$ to Expression (\ref{eq2andor}).
By \eqref{eq:facts1} of \lemref{facts}, if $\text{OR}_N(x) = 1$ (i.e., if $x=\mathbf{1}_{N}$), then $\sgn(\psi(x))=1$. This implies that $A_{-1}$ is empty; that is, if $\sgn(\psi(x))=-1$,
then it must be the case that $\text{OR}_N(x)=-1$. Therefore, for $z=-\mathbf{1}_{M}$, the $y_i$'s are all $-1$ with probability 1, and hence $\mathbf{E}_y[\text{AND}_M\left(\dots, y_iz_i, \dots\right)] = \text{AND}_M\left(-\mathbf{1}_{M}\right) = -1$.
Thus the term corresponding to $z=-\mathbf{1}_{M}$ contributes $-\Psi(z)$ to Expression (\ref{eq2andor}) as claimed.

All $z \neq -\mathbf{1}_{M}$ can be handled as in Sherstov's proof of Proposition \ref{prop:sherstov}, because $\text{AND}_{M}$ has low block sensitivity
at these inputs. To formalize this, we invoke the following proposition, whose proof we provide in Appendix \ref{app:andor1} for completeness.

\begin{proposition}[\cite{sherstovFOCS}] \label{finalprop} Let $F: \{-1, 1\}^{M} \rightarrow \{-1, 1\}$ be a given Boolean function.
Let $y \in \{-1, 1\}^{M}$ be a random string whose $i$th bit is set to $-1$ with probability at most $\alpha \in [0, 1]$, 
and to $+1$ otherwise, independently for each $i$. Then for every $z \in \{-1, 1\}^{M}$, 
$$\mathbf{P}_y[F(z_1, \dots, z_{M}) \ne F(z_1y_1, \dots, z_My_M)] \leq 2\alpha\operatorname{bs}_z(F).$$ 
\end{proposition}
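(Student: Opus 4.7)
The plan is to rewrite the probability in terms of the random subset $S = \{i : y_i = -1\}$, so that $(z_1y_1, \ldots, z_My_M) = z^S$ and each coordinate $i \in [M]$ lies in $S$ independently with probability $p_i \le \alpha$. The claim then becomes
\[
\Pr\bigl[F(z^S) \ne F(z)\bigr] \;\le\; 2\alpha k, \qquad k := \operatorname{bs}_z(F).
\]

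I would set $k = \operatorname{bs}_z(F)$ and fix a maximal family $B_1, \ldots, B_k$ of pairwise disjoint \emph{minimal} sensitive blocks of $F$ at $z$. The central combinatorial observation is that whenever $F(z^S) \ne F(z)$, the set $S$ contains some inclusion-minimal sensitive subset $B$, which is itself a minimal sensitive block at $z$; by maximality of our family, this $B$ must share at least one coordinate with some $B_j$. Hence by a union bound
\[
\Pr\bigl[F(z^S) \ne F(z)\bigr] \;\le\; \sum_{j=1}^k \Pr[S \cap B_j \ne \emptyset].
\]

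From here, the naive estimate $\Pr[S \cap B_j \ne \emptyset] \le \sum_{i \in B_j} p_i \le \alpha|B_j|$ only gives the weaker bound $\alpha \sum_j |B_j|$, which can substantially exceed $2\alpha k$ when the blocks are large (e.g., for $F = \text{AND}_M$ at the all-false input one has $k = 1$ but $|B_1| = M$). To reach the claimed $2\alpha k$, I would split into two regimes. When $\alpha k \ge 1/2$ the target bound is automatic since $\Pr[\cdot] \le 1 \le 2\alpha k$, so the real content is in the regime $\alpha k < 1/2$. There, the minimality of each $B_j$ becomes essential: rather than bounding the intersection probability directly, I would decompose the change event at each $j$ into the piece where $B_j$ is fully contained in $S$ (contributing at most $\prod_{i\in B_j}p_i \le \alpha^{|B_j|} \le \alpha$) and the piece where $S$ meets $B_j$ only partially but still creates a sensitive sub-block, using the disjointness of the $B_j$'s to charge this second piece by an additional $\alpha$ per $j$.

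The main obstacle is exactly this second piece: the straightforward intersection union bound loses a factor proportional to $|B_j|$, and collapsing it to a factor-of-two-per-block bound requires a careful combinatorial accounting that exploits both the minimality of the $B_j$'s and their pairwise disjointness, to avoid double-counting sensitive $S$'s that meet several blocks.
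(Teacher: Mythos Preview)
Your reduction to the random set $S$ and the observation that any sensitive $S$ must meet some member of a maximal disjoint family $B_1,\dots,B_k$ of minimal sensitive blocks are both correct, and they give the valid union bound
\[
\Pr[F(z^S)\neq F(z)]\;\le\;\sum_{j=1}^k \Pr[S\cap B_j\neq\emptyset].
\]
The problem is that this inequality is genuinely too weak, and your proposed repair does not close the gap. Already for $k=1$ with a single minimal block of size $m\ge 3$ (e.g.\ $\text{AND}_M$ at an input with three $+1$ coordinates), the right-hand side is about $m\alpha$ while the target is $2\alpha$. Your decomposition into ``$B_j\subseteq S$'' versus ``$S$ meets $B_j$ partially but is still sensitive'' correctly bounds the first piece by $\alpha^{|B_j|}\le\alpha$, but you give no argument for why the second piece is at most $\alpha$. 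In general there is no reason it should be: once other minimal sensitive blocks overlap $B_j$, the event ``$S$ meets $B_j$ in a proper nonempty subset and $F(z^S)\neq F(z)$'' can be driven by those other blocks, and neither minimality of $B_j$ nor disjointness of the family $B_1,\dots,B_k$ controls its probability. The phrase ``charge this second piece by an additional $\alpha$ per $j$'' is a hope, not a mechanism.

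The paper's proof avoids this obstacle entirely by a coupling argument that never fixes a particular family of blocks. One first passes to the monotone event $\{F(z)\neq F(z^T)\text{ for some }T\subseteq S\}$, sets $r=\lfloor 1/\alpha\rfloor$, and realizes $S$ as one row of an $r\times M$ matrix in which each column contains exactly one $-1$, placed uniformly at random. The rows then correspond to \emph{disjoint} subsets of $[M]$, each marginally distributed like $S$ with flip probability $1/r$. The expected number of rows whose subset contains a sensitive set is therefore at most $\operatorname{bs}_z(F)$ by the very definition of block sensitivity, and dividing by $r$ gives the bound $\operatorname{bs}_z(F)/r\le 2\alpha\operatorname{bs}_z(F)$. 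The key idea you are missing is this amplification: manufacture $\approx 1/\alpha$ disjoint copies of $S$ and use block sensitivity to cap how many of them can simultaneously be sensitive.
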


In particular, since $\text{bs}_z(\text{AND}_M) = 1$ for all $z \neq - \mathbf{1}_M$, Proposition \ref{finalprop} implies that for all $z \neq -\mathbf{1}_{M}$,
and $F=\text{AND}_{M}$, $\mathbf{P}_y[F(z_1, \dots, z_{M})= F(z_1y_1, \dots, z_ky_k)] \geq 1-2\delta$.

Recalling that the term corresponding to $z = -\mathbf{1}_M$ contributes $-\Psi(-\mathbf{1}_M)$ to the sum, we obtain the following lower bound on Expression (\ref{eq2andor}).

$$\sum_{z \in \{-1, 1\}^{M}} \Psi(z) \cdot \mathbf{E}[\text{AND}_M\left(\dots, y_iz_i, \dots\right)] \geq 
-\Psi(-\mathbf{1}_{M}) + \left(\sum_{z \neq -\mathbf{1}_M} \Psi(z) \text{AND}_M(z)\right) - 4\delta \left( \sum_{z \neq -\mathbf{1}_M} |\Psi(z)| \right) $$
$$ \ge \left(\sum_{z \in \{-1, 1\}^M} \Psi(z) \text{AND}_M(z)\right) -4\delta = \eps - 4\delta>1/3.$$ \end{proof}

This completes the proof of Theorem \ref{thm:andor}.
\end{proof}

\begin{remark} \v{S}palek \cite{spalek} has exhibited an explicit dual witness showing that the $\eps$-approximate degree of both the $\text{AND}$ function and the $\text{OR}$ function is $\Omega(\sqrt{n})$, for $\eps = 1/14$ (in fact, we generalize \v{S}palek's construction to any symmetric function in Section \ref{secfour} below). In Section \ref{app:spalek} we show how to generalize \v{S}palek's argument in a different way to handle any constant $\eps \in (0, 1)$. With these dual polynomials in hand, the dual solution $\zeta$ given in our proof
is completely explicit. This answers a question of \v{S}palek \cite[Section 4]{spalek} in the affirmative.
\end{remark}

\section{Dual Polynomials for Symmetric Boolean Functions}
\label{secfour}
In this section, we construct a dual polynomial witnessing a tight lower bound on the approximate degree of any symmetric function. The lower bound we recover was first proved by Paturi \cite{paturi} via a symmetrization argument combined with the classical Markov-Bernstein inequality from approximation theory (see Section \ref{sec:markovbernstein}). Paturi also provided a matching upper bound. \v{S}palek \cite{spalek}, building on work of Szegedy, exhibited an explicit dual witness to the $\Omega(\sqrt{n})$ approximate degree of the $\operatorname{OR}$ function and asked whether one could construct an analogous dual polynomial for the symmetric $t$-threshold function \cite[Section 4]{spalek}. We accomplish this in the more general case of arbitrary symmetric functions by extending the ideas underlying \v{S}palek's dual polynomial for $\text{OR}$.

\subsection{Symmetric functions}

For a vector $x \in \{-1, 1\}^n$, let $|x| = \frac{1}{2}(n - (x_1 + \dots + x_n))$ denote the number of $-1$'s in $x$. A Boolean function $f : \{-1, 1\}^n \to \{-1, 1\}$ is symmetric if $f(x) = f(y)$ whenever $|x| = |y|$. That is, the value of $f$ depends only on the number of inputs that are set to $-1$. The simplest symmetric functions are the $t$-threshold functions:
\[\tau_t(x) = \begin{cases}
-1 & \text{if } |x| \ge t \\
1 & \text{otherwise}.
\end{cases}\]
Important special cases include $\operatorname{OR} = \tau_1$, $\operatorname{AND} = \tau_n$, and the majority function $\operatorname{MAJ} = \tau_{\lceil n/2 \rceil}$. Let $[n] = \{0, 1, \dots, n\}$. To each symmetric function $f$, we can associate a unique univariate function $F: [n] \to \{-1, 1\}$ by taking $F(|x|) = f(x)$. Throughout this section, we follow the convention that lower case letters refer 
to multivariate functions, while upper case letters refer to their univariate counterparts.

We now discuss the dual characterization of approximate degree established in  \thmref{thm:prelim} as it applies to symmetric functions. Following the notation in \cite{spalek}, the standard inner product $p \cdot q = \sum_{x \in \{-1, 1\}^n} p(x)q(x)$ on symmetric functions $p, q$ induces an inner product on the associated univariate functions:
\[P \cdot Q := \sum_{i = 0}^n {n \choose i} P(i) Q(i).\]
We refer to this as the \emph{correlation} between $P$ and $Q$. Similarly, the $\ell_1$-norm $\|p\|_1 = \sum_{x \in \{-1, 1\}^n} |p(x)|$ induces a norm $\|P\|_1 = \sum_{i = 0}^n {n \choose i} P(i)$. These definitions carry over verbatim when $f$ is real-valued instead of Boolean-valued.

If $f$ is symmetric, we can restrict our attention to symmetric $\phi$ in the statement of \thmref{thm:prelim} and it becomes convenient to work with the following reformulation of \thmref{thm:prelim}.

\begin{corollary}\label{cor:dual}
A symmetric function $f:\{-1, 1\}^n \to \{-1, 1\}$ has $\eps$-approximate degree greater than $d$ if and only if there exists a symmetric function $\phi: \{-1, 1\}^n \to \mathbb{R}$ with pure high degree $d$ such that
\[\frac{\Phi \cdot F}{\|\Phi\|_1} = \frac{\phi \cdot f}{\|\phi\|_1} > \eps.\]
(Here, $F$ and $\Phi$ are the univariate function associated to $f$ and $\phi$, respectively).
\end{corollary}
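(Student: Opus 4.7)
The plan is to derive Corollary \ref{cor:dual} directly from Theorem \ref{thm:prelim}, using a standard symmetrization argument to pass from an arbitrary dual witness to one that respects the symmetry of $f$, and then translating the multivariate $\ell_1$-inner-product identities into their univariate counterparts weighted by binomial coefficients.

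The ``if'' direction is essentially bookkeeping. Starting from a symmetric $\phi$ of pure high degree $d$ with $\frac{\phi \cdot f}{\|\phi\|_1} > \eps$, I would normalize by setting $\phi' = \phi/\|\phi\|_1$. Then $\phi'$ still has pure high degree $d$, still satisfies $\sum_x |\phi'(x)| = 1$, and still correlates strictly more than $\eps$ with $f$, so Theorem \ref{thm:prelim} immediately yields $\deg_\eps(f) > d$.

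For the ``only if'' direction, Theorem \ref{thm:prelim} produces \emph{some} dual witness $\phi$, but not necessarily a symmetric one. I would symmetrize it by averaging over the symmetric group: define $\bar\phi(x) = \frac{1}{n!} \sum_{\sigma \in S_n} \phi(\sigma \cdot x)$. This $\bar\phi$ is symmetric by construction. Pure high degree $d$ is preserved because, for any $|S| \le d$, a change of variables $x \mapsto \sigma^{-1}\cdot x$ in each term of the average rewrites $\sum_x \bar\phi(x) \chi_S(x)$ as the average of $\sum_x \phi(x) \chi_{\sigma^{-1}(S)}(x)$, each summand of which vanishes since $|\sigma^{-1}(S)| = |S| \le d$. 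The triangle inequality gives $\|\bar\phi\|_1 \le \|\phi\|_1 = 1$, and the invariance of $f$ under permutation of coordinates gives $\sum_x \bar\phi(x) f(x) = \sum_x \phi(x) f(x) > \eps$. Hence $\bar\phi$ is a symmetric dual witness with $\frac{\bar\phi \cdot f}{\|\bar\phi\|_1} \ge \bar\phi\cdot f > \eps$.

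Finally, I would verify the identity $\frac{\Phi \cdot F}{\|\Phi\|_1} = \frac{\phi \cdot f}{\|\phi\|_1}$ for a symmetric $\phi$ by grouping the sum over $\{-1,1\}^n$ by Hamming weight: exactly $\binom{n}{i}$ strings $x$ satisfy $|x| = i$, and on each such $x$ we have $\phi(x) = \Phi(i)$ and $f(x) = F(i)$, so $\phi \cdot f = \sum_{i=0}^n \binom{n}{i} \Phi(i) F(i) = \Phi \cdot F$, and similarly $\|\phi\|_1 = \|\Phi\|_1$. There is no real obstacle here beyond checking that pure high degree survives symmetrization; the main subtlety is the coordinate-permutation argument, which needs the observation that $S_n$ acts on subsets $S \subseteq [n]$ while preserving $|S|$, so the orthogonality condition at each level $|S| \le d$ is preserved under averaging.
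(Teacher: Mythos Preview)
Your proposal is correct and fills in exactly the standard symmetrization argument that the paper leaves implicit; the paper does not give a proof at all, merely remarking that ``if $f$ is symmetric, we can restrict our attention to symmetric $\phi$ in the statement of \thmref{thm:prelim}.'' Your averaging over $S_n$, together with the observation that the $S_n$-action permutes the sets $S$ within each degree level, is precisely the justification the paper omits, and the Hamming-weight grouping to pass to the univariate formulation is the intended reading of the definitions of $\Phi\cdot F$ and $\|\Phi\|_1$.
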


We clarify that the pure high degree of a multivariate polynomial $\phi$ does not correspond to the smallest degree of a monomial in the associated univariate function $\Phi$ (even though the ordinary degree of a symmetric $\phi$ is the largest degree of a monomial in $\Phi$). When we talk about the pure high degree of a univariate polynomial $\Phi$, we mean the pure high degree of its corresponding multilinear polynomial $\phi$.

We exploit the following method for constructing polynomials of pure high degree $d$. Let $\psi$ be a multivariate polynomial of degree $n-d$, and let $\chi_{[n]}(x)$ denote the parity function
on $n$ variables. Consider the function $\phi(x) = \psi(x)\chi_{[n]}(x)$, i.e., $\phi$ is obtained by multiplying $\psi$ by the parity function. It is straightforward to check that $\phi$ has pure high degree $d$. 
Notice that if $\psi$ is symmetric, then so is $\phi$, and the corresponding univariate polynomials satisfy $\Phi(k) = \Psi(k) \cdot (-1)^k$.
Therefore, to show that a symmetric function $f$ with a ``jump'' at $t$ has approximate degree greater than $d$, it is enough to exhibit an $(n-d)$-degree univariate polynomial $\Psi$ such that $(-1)^i \Psi(i)$ has high correlation with its associated univariate function $F$.

We are now in a position to state the lower bound that we will prove in this section. Paturi \cite{paturi} completely characterized the approximate degree of a symmetric Boolean function by the location of the layer $t$ closest to the center of the Boolean hypercube such that $F(t-1) \ne F(t)$. 

\begin{theorem} [{\cite[Theorem 4]{paturi}}]
Given a nonconstant symmetric Boolean function $f$ with associated univariate function $F$, let $\Gamma(f) = \min \{|2t - n - 1| : F(t-1) \ne F(t), 1 \le k \le n\}$. Then $\widetilde{\deg}(f) = \Theta(\sqrt{n(n - \Gamma(f)})$.
\end{theorem}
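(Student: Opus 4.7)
The plan is to establish the lower bound $\widetilde{\deg}(f) = \Omega(\sqrt{n(n-\Gamma(f))})$ via an explicit dual polynomial; the matching upper bound is Paturi's approximating polynomial, which can be cited. Fix an integer $t$ with $F(t-1)\ne F(t)$ and $|2t-n-1|=\Gamma(f)$, and set $d = c\sqrt{n(n-\Gamma(f))}$ for a small absolute constant $c>0$. By Corollary~\ref{cor:dual} combined with the parity-multiplication trick described just before the theorem, it suffices to construct a univariate polynomial $\Psi$ of degree at most $n-d$ such that $\Phi(k) := (-1)^k \Psi(k)$ satisfies $\sum_{k=0}^{n}\binom{n}{k}\Phi(k)F(k) > \epsilon \sum_{k=0}^{n}\binom{n}{k}|\Phi(k)|$ for some constant $\epsilon>0$.

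The construction of $\Psi$ generalizes \v{S}palek's dual polynomial for $\operatorname{OR}$, which handles the special case $t=1$. \v{S}palek's idea is to concentrate $|\Psi|$ near the jump by placing its roots between consecutive integers in the complementary region, so that the required sign pattern (coming from the parity twist) is enforced away from the jump while $|\Psi|$ peaks only near the jump. I would extend this by placing $\Psi$'s $n-d$ roots in the integer gaps outside a window $W$ of width $\Theta(d)$ centered at $t$, leaving the gaps inside $W$ root-free. The resulting product $\Psi(k)=\prod_i(k-r_i)$ has the alternating sign pattern $\mathrm{sgn}(\Psi(k))=(-1)^k F(k)$ outside $W$ and, by a routine product estimate, $|\Psi(k)|$ decays rapidly as $k$ moves away from $W$.

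The correlation analysis then breaks into inside-$W$ and outside-$W$ contributions. Inside $W$, $\mathrm{sgn}(\Phi(k)) = F(k)$ at all but $O(1)$ integers (the exceptions coming from other jumps of $F$, which are handled separately because $\Gamma(f)$ was chosen for the most central jump), so these terms dominate both sides of the desired inequality. Outside $W$, the rapid decay of $|\Psi|$ combined with the binomial weights ensures the remaining contributions are a constant factor smaller than the inside-$W$ mass. The main obstacle is obtaining the tight $\sqrt{n(n-\Gamma(f))}$ bound rather than the weaker $n-\Gamma(f)$ that a naive width-$d$ window analysis would suggest: the key point is that the binomial weights $\binom{n}{k}$ are concentrated on a length scale of $\sqrt{t(n-t)} = \Theta(\sqrt{n(n-\Gamma(f))})$ around $k=t$, so $\Psi$ only needs to ``work'' on that effective scale. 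Realizing this requires a careful choice of the root placements (Chebyshev-like extremal spacings in the outside-$W$ region so that $|\Psi|$ peaks exactly over the binomially-heavy subwindow of $W$), together with a quantitative local estimate of the binomial weights to convert the resulting product bounds into the required correlation inequality.
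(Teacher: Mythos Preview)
Your proposal has two genuine gaps that would prevent the argument from going through as stated.

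First, the sign-agreement claim inside the window is false. If $\Psi$ has no roots in $W$, then $\Psi$ has constant sign on the integers in $W$, so $\Phi(k)=(-1)^k\Psi(k)$ \emph{alternates} in sign there. But $F$ is constant on each side of the jump $t$ (and $\Gamma(f)$ was chosen for the innermost jump, so there are no other jumps inside $W$). Hence $\Phi(k)F(k)$ alternates in sign across $W$, and since $|\Psi|$ varies slowly inside $W$, the inside-$W$ correlation telescopes to something negligible rather than dominating $\|\Phi\|_1$. Your assertion that $\mathrm{sgn}(\Phi(k))=F(k)$ at all but $O(1)$ points of $W$ cannot hold for a root-free window of width $\Theta(d)$.

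Second, the claim that the binomial weights $\binom{n}{k}$ concentrate around $k=t$ on scale $\sqrt{t(n-t)}$ is incorrect: they concentrate around $k=n/2$ on scale $\sqrt{n}$, and when $t$ is far from $n/2$ the ratio $\binom{n}{k+1}/\binom{n}{k}\approx (n-t)/t$ is bounded away from $1$, so the weights change by a constant factor already over a unit step. This is not the mechanism that produces the $\sqrt{t(n-t)}$ exponent.

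The paper's construction is structurally different and sidesteps both issues. It places the roots of $\Psi$ \emph{at} integer points, namely at all of $[n]\setminus T$ for a carefully chosen set $T$ with $|T|=\Theta(\sqrt{t(n-t)})$; thus $\Phi$ is supported only on $T$. The key identity is that for $j\in T$,
\[
\binom{n}{j}\,|P(j)| \;=\; \frac{\mathrm{const}}{\prod_{i\in T,\; i\ne j}|j-i|},
\]
so the binomial weights cancel exactly and the problem reduces to choosing $T$ so that $\pi_T(j)=\prod_{i\in T,\,i\ne j}|j-i|$ is minimized near $t$. The set $T$ interleaves $\Theta(t)$ arithmetic translates of the quadratic set $\{tk^2\}$ (interpolating between \v{S}palek's square set for $\mathrm{OR}$ and an arithmetic progression for $\mathrm{MAJ}$), and two extra points $i^*\pm 1$ force quadratic decay of the mass away from $i^*$. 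The upshot is that the entire $\ell_1$ mass concentrates on just three points $i^*-1,i^*,i^*+1$, so the correlation analysis involves only those three terms against a tail of size at most $2/5$. That is what makes the constant-correlation bound go through; a window of width $\Theta(d)$ with nonzero values throughout cannot achieve this.
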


Paturi proved the upper bound non-explicitly by appealing to the Jackson theorems from approximation theory. He proved the lower bound by combining symmetrization with an appeal to the Markov-Bernstein inequality (see Section \ref{sec:markovbernstein}) -- however, his proof does not yield an explicit 
dual polynomial. We construct an explicit dual polynomial to prove the following proposition, which is easily seen to imply Paturi's lower bound.

\begin{proposition} \label{prop:sym-lb}
Given $f$ and $F$ as above, let $1 \le t \le n$ be an integer with $F(t-1) \ne F(t)$. Then $\widetilde{\deg}(f) = \Omega(\sqrt{t(n-t+1)})$.
\end{proposition}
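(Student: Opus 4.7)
The plan is to apply the parity-multiplication reduction described immediately before the proposition. Assuming without loss of generality that $F(t-1) = 1$ and $F(t) = -1$, the task reduces to constructing a univariate polynomial $\Psi$ of degree at most $n - d - 1$, where $d = c\sqrt{t(n-t+1)}$ for a small absolute constant $c > 0$, such that $\Phi(k) := (-1)^k \Psi(k)$ satisfies $\Phi \cdot F \ge \frac{1}{3}\|\Phi\|_1$. The multivariate polynomial $\phi(x) = \psi(x)\chi_{[n]}(x)$ is then the desired symmetric dual witness by Corollary~\ref{cor:dual}, and the pure high degree condition is automatic from the finite-difference identity $\sum_{k=0}^n \binom{n}{k}(-1)^k r(k) = 0$, valid for every polynomial $r$ of degree strictly less than $n$, combined with the degree bound on $\Psi$.

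To construct $\Psi$, I would generalize \v{S}palek's dual polynomial for $\operatorname{OR}_n$, whose underlying template is a product $C \prod_{i \in R}(k-i)$ for a carefully chosen root set $R$ of size $n - d - 1$. The goal in selecting $R$ is twofold: (i) the weighted mass $\binom{n}{k}|\Psi(k)|$ should concentrate on a window of size $\Theta(d)$ around the jump at $t$; and (ii) on this window, the sign of $(-1)^k \Psi(k)$ should agree with $F(k)$ for most $k$ in the sense of $\binom{n}{k}|\Psi(k)|$-weighted mass. I would combine some integer roots at positions far from the jump (forcing $\Psi$ to vanish or be tiny there) with additional roots placed to modulate the alternating sign pattern of $\Psi$ on the bulk of the window.

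A natural first attempt---taking $R$ to be the complement of a consecutive window of $d + 2$ integers around $t - 1/2$---makes $\binom{n}{k}|\Psi(k)|$ proportional to a shifted binomial distribution on the window. Evaluating $\Phi \cdot F$ via the identity $\sum_{j=0}^m (-1)^j \binom{d+1}{j} = (-1)^m \binom{d}{m}$ yields correlation only $\Theta(1/\sqrt{d})$, which is insufficient. The refinement---introducing non-integer roots positioned to cancel the alternating contributions in the regions where $F$ is constant---is what boosts the correlation to a constant. This refinement is possible because $\binom{n}{k}$ varies by only a constant factor on the scale $\sqrt{t(n-t+1)}$ around $k = t$, precisely the scale dictated by the Markov-Bernstein inequality (Section~\ref{sec:markovbernstein}), so mass can be redistributed near the jump without disrupting the degree budget $n - d - 1$.

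The principal obstacle is establishing the constant-correlation lower bound uniformly over all symmetric $F$ with a single prescribed jump at $t$: since $F$ is otherwise arbitrary, contributions from layers where $(-1)^k \Psi(k)$ and $F(k)$ disagree must be bounded against the correctly-signed mass near the jump. The relevant estimates reduce to binomial tail bounds at the scale $\sqrt{t(n-t+1)}$, and choosing $c$ sufficiently small ensures that the window fits inside $\{0, 1, \ldots, n\}$ and that these estimates close out the argument.
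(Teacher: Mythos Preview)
There is a genuine gap at the refinement step. You correctly diagnose that a consecutive-window root set gives correlation only $\Theta(1/\sqrt{d})$, but the proposed fix---adding non-integer roots ``to cancel the alternating contributions in the regions where $F$ is constant''---is neither specified nor viable within the degree budget. Each non-integer root you add raises $\deg\Psi$ by one; to keep $\deg\Psi \le n-d-1$ you must drop an integer root, which re-introduces support (and hence mass) at a point far from $t$. More fundamentally, your goal (i) asks only that the mass concentrate on a window of size $\Theta(d)$ around $t$. Since $F$ is arbitrary away from the single jump, mass spread over $\Theta(d)$ layers can never yield constant correlation uniformly in $F$: the adversary sets $F$ to disagree with $\sgn((-1)^k\Psi(k))$ at every layer except $t-1,t$, and you are back to $\Theta(1/\sqrt{d})$. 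No sign-modulation on the window can overcome this, because $F$ is chosen after $\Psi$.

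The paper's construction resolves this not by modulating signs but by concentrating \emph{mass} at $O(1)$ points. The root set is $[n]\setminus T$ for a \emph{sparse} (non-consecutive) set $T$ of size $\Theta(\sqrt{t(n-t+1)})$: specifically $T = S \cup \{i^*-1,i^*+1\}$, where $S$ interlaces $\Theta(t)$ translated copies of the scaled-square set $\{tk^2 : k^2 \le n/t\}$ with an arithmetic progression $\{t\pm 4\ell\}$ (Lemma~\ref{lemma:min-block}). The cancellation identity $\binom{n}{r}|P(r)| = \pi_S(i^*)/\pi_T(r)$ then forces the $\ell_1$ mass to pile up at the three points $i^*-1,i^*,i^*+1$, with the entire remainder of $T$ contributing at most $2/5$ (Lemmas~\ref{lemma:term-bound} and~\ref{lemma:tail-bound}). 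After multiplying by parity, two of those three points automatically agree in sign with $F$, and a short case analysis finishes. This quadratic-spacing trick is precisely what makes \v{S}palek's OR construction work, and it is the idea your proposal is missing: your ``first attempt'' already departs from \v{S}palek by taking a consecutive window rather than the squares, and the refinement does not recover it.
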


In particular, the approximate degree of the symmetric $t$-threshold function is $\Omega(\sqrt{t(n-t+1)})$. This special case serves as a useful model for understanding our construction.

\subsection{Proof outline}
 \label{sec:sym-outline}
We start with an intuitive discussion of \v{S}palek's construction of a dual polynomial for $\operatorname{OR}$, with the goal of elucidating how we extend the construction to arbitrary symmetric functions. Consider the perfect squares $S = \{k^2 : k^2 \le n\}$ and the univariate polynomial
\[R(x) = \frac{1}{n!}\prod_{i \in [n] \setminus S} (x - i).\]
This polynomial is supported on $S$, and for all $k^2 \in S$,
\[{n \choose k^2} |R(k^2)| = {n \choose k^2} \cdot \frac{1}{n!} \cdot \frac{\prod_{\substack{i \in [n] \\ i \ne k^2}}|k^2 - i|}{\prod_{\substack{i \in S \\ i \ne k^2}} |k^2 - i|} = \frac{1}{\prod_{\substack{i \in S \\ i \ne k^2}} |k^2 - i|}.\]
Note the remarkable cancellation in the final equality. This quotient is maximized at $k = 1$. In other words, the threshold point $t = 1$ makes the largest contribution to the $\ell_1$ mass of $R$. Moreover, one can check that $R(0)$ is only a constant factor smaller than $R(1)$. 

\v{S}palek exploits this distribution of the $\ell_1$ mass by considering the polynomial $P(x) = R(x) / (x - 2)$. The values of $P(x)$ are related to $R(x)$ by a constant multiple for $x = 0, 1$, but $P(k)$ decays as $|P(k^2)| \approx |R(k^2)|/k^2$ for larger values. This decay is fast enough that a \emph{constant fraction} of the $\ell_1$ mass of $P$ comes from the point $P(0)$.\footnote{It is also necessary to check that $P(2)$ is only a constant factor larger than $P(0)$.} Now $P$ is an $(n-\Omega(\sqrt{n}))$-degree univariate polynomial, so we just need to show that $Q(i) = (-1)^iP(i)$ has high correlation with $\operatorname{OR}$. We can write
\[Q \cdot \operatorname{OR} = 2Q(0) - Q \cdot \mathbf{1} = 2Q(0),\]
since the multilinear polynomial associated to $Q$ has pure high degree $\Omega(\sqrt{n})$, and therefore has zero correlation with constant functions. Because a constant fraction of the $\ell_1$ mass of $Q$ comes from $Q(0)$, it follows that $|Q \cdot \operatorname{OR}| / \|Q\|_1$ is bounded below by a constant. By perhaps changing the sign of $Q$, we get a good dual polynomial for $\operatorname{OR}$.

A natural approach to extend \v{S}palek's argument to symmetric functions with a ``jump'' at $t$ is the following:
\begin{enumerate}[{Step}~1:]
\item Find a set $S$ with $|S| = \Omega(\sqrt{t(n-t+1)})$ such that the maximum contribution to the $\ell_1$ norm of $R(x) = \frac{1}{n!}\prod_{i \in [n] \setminus S} (x - i)$ comes from the point $x = t$. Equivalently,
\[{n \choose j} |R(j)| = \frac{1}{\prod_{\substack{i \in S \\ i \ne j}} |j - i|}\]
is maximized at $j = t$.
\item Define a polynomial $P(x) = R(x) / (x-(t-1))(x-(t+1))$. Dividing $R(x)$ by the factor $(x-t-1)$ is analogous to \v{S}palek's division of $R(x)$ by $(x-2)$.
We also divide by $(x-t+1)$ because we will ultimately need our polynomial $P(x)$ to decay faster than \v{S}palek's by a factor of $|x-t|$
as $x$ moves away from the threshold. By dividing by both $(x-t-1)$ and $(x-t+1)$, we ensure that most of the $\ell_1$ mass of $P$ is concentrated at the points $t-1, t, t+1$.
\item Obtain $Q$ by multiplying $P$ by parity, and observe that $Q(t-1)$ and $Q(t)$ have opposite signs. Since $F(t-1)$ and $F(t)$ also have opposite signs, we can ensure that both $t-1$ and $t$ contribute positive correlation. Suppose these two points contribute a $1/2 + \eps$ constant fraction of the $\ell_1$-norm of $Q$. Then even in the worst case where the remaining points all contribute negative correlation, $Q \cdot F$ is still at least a $2\eps$ fraction of $\|Q\|_1$ and we have a good dual polynomial. Notice that the pure high degree of $Q$ is $|S| + 2$, yielding the desired lower bound.
\end{enumerate}

In Section \ref{sec:nextsec}, we carry out this line of attack in the case where $t = \Omega(n)$. This partial result also gives the right intuition for general $t$, although the details are somewhat more complicated. Namely, in Step 3, we may need to rely on the alternative points $t$ and $t + 2$ to contribute high positive correlation between $F$ and $Q$, rather than inputs $t-1$ and $t$.

\subsection{A Dual Polynomial for $\text{MAJ}$}
\label{sec:nextsec}
We first construct a dual polynomial that witnesses an $\Omega(t)$ lower bound for symmetric functions having a ``jump'' at $t \le n/2$. Notice that this bound matches \propref{prop:sym-lb} if $t = \Omega(n)$, but is weaker otherwise (e.g. in the case of OR). By setting $t = \lceil \frac{n}{2} \rceil$, we can write down down a clean dual polynomial for the majority function $\text{MAJ}$. This case is illustrative, as one
can view \v{S}palek's dual polynomial for $\text{OR}$ and our dual polynomial for $\text{MAJ}$ as two ends of a spectrum, with our general construction interpolating between the two extremes.

\begin{proposition} \label{prop:maj-lb}
Let $f: \{-1, 1\}^n \to \{-1, 1\}$ be a Boolean function with associated univariate function $F$. If $1 \le t \le n/2$ such that $F(t-1) \ne F(t)$, then $\widetilde{\deg}(f) = \Omega(t)$.
\end{proposition}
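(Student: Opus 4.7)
The plan is to carry out the three-step framework from the Proof Outline using a particularly simple parity-based choice of $S$ tailored to the regime $t \le n/2$. For Step~1, take $S = \{k : 0 \le k \le 2t \text{ and } k \equiv t \pmod{2}\}$, which contains $t$, omits both $t - 1$ and $t + 1$ (of opposite parity to $t$), and has size $t + 1$; it is a subset of $[n]$ since $2t \le n$. Setting $R(x) = \frac{1}{n!}\prod_{i \in [n] \setminus S}(x - i)$, I expect a direct factorial computation (using double-factorials for the product over odd integers in $[1, 2t-1]$) to collapse the identity $\binom{n}{k}|R(k)| = 1/\prod_{i \in S\setminus\{k\}}|k - i|$ into the clean closed form
\[
\binom{n}{k}|R(k)| = \frac{\binom{t}{j}}{2^t \, t!} \qquad \text{for } k = 2j \in S,
\]
assuming $t$ even (the odd case is analogous). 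This expression is notably independent of $n$ because the extra factors indexed by $(2t, n]$ cancel telescopically with the $\binom{n}{k}$ factor, and it is maximized at $k = t$ as required.

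For Step~2, set $P(x) = R(x)/((x - (t-1))(x - (t+1)))$, a polynomial of degree $n - t - 2$, and $\phi(x) = \chi_{[n]}(x) \cdot P(|x|)$, which has pure high degree $n - \deg P = t + 2 = \Omega(t)$. By Corollary~\ref{cor:dual} it then suffices, for Step~3, to lower-bound $|\phi \cdot f|/\|\phi\|_1$ by a positive constant. Writing $Q(k) = (-1)^k P(k)$, the key observations are: (i) $Q(t-1)$ and $Q(t)$ have opposite signs by parity, and $F(t-1) \ne F(t)$ by hypothesis, so after an appropriate orientation of $\phi$ both points contribute positively to $\phi \cdot f = \sum_k \binom{n}{k} Q(k) F(k)$; (ii) a sign calculation shows that each pair $(t - 2\ell, t + 2\ell)$ for $\ell \ge 1$ contributes with opposite signs and equal magnitudes---the latter via the binomial symmetry $\binom{t}{t/2 - \ell} = \binom{t}{t/2 + \ell}$---hence cancels exactly, and the pair $(t-1, t+1)$ cancels for the same reason. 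The correlation therefore collapses to a single surviving term, $|\phi \cdot f| = \binom{n}{t}|P(t)| = \binom{t}{t/2}/(2^t t!)$. For the denominator, using $\binom{t}{t/2 \pm \ell} \le \binom{t}{t/2}$ together with the telescoping identity $\sum_{\ell = 1}^\infty 1/(4\ell^2 - 1) = 1/2$, I will bound $\|\phi\|_1 = O(1) \cdot \binom{t}{t/2}/(2^t t!)$, yielding the desired constant ratio.

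The main obstacle is the sign analysis in Step~3. Removing $t-1$ and $t+1$ from the root set of $R$ introduces a ``sign twist'' in $P$ at $k = t$, perturbing the naive parity-alternation of $P$ on $S$. I will need to verify carefully that all the claimed pair cancellations occur with matching opposite signs, and that the surviving term $\binom{n}{t}|P(t)|$ inherits the correct (positive) sign under an appropriate orientation of $\phi$. The two essential algebraic facts making the argument close cleanly are the collapse of $\binom{n}{k}|R(k)|$ into the scaled binomial distribution $\binom{t}{j}/(2^t t!)$ and the telescoping series $\sum 1/(4\ell^2 - 1) = 1/2$, which together guarantee that after the division by $(x - (t-1))(x - (t+1))$ the mass at the surviving point $k = t$ constitutes a constant fraction of $\|\phi\|_1$.
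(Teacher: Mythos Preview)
Your Step~1 computation is correct and pleasant: with $S$ the even (or odd) integers in $[0,2t]$, one does get $\binom{n}{k}|R(k)|=\binom{t}{j}/(2^t t!)$, peaked at $k=t$. The gap is in Step~3.

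The cancellation claim is not valid for the proposition as stated. You assert that each pair $(t-2\ell,\,t+2\ell)$ contributes with opposite signs to $\phi\cdot f=\sum_k\binom{n}{k}Q(k)F(k)$ and hence cancels exactly. But the hypothesis gives you \emph{only} $F(t-1)\ne F(t)$; it says nothing about $F(t-2\ell)$ versus $F(t+2\ell)$. (A quick sign count in fact shows $Q(t-2\ell)$ and $Q(t+2\ell)$ have the \emph{same} sign, so cancellation would require $F(t-2\ell)=-F(t+2\ell)$---true for the threshold function $\tau_t$, but false for a generic symmetric $f$ with a jump at $t$.) The same objection applies to your claimed cancellation of $(t-1,t+1)$. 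So for general $f$ you are forced into the worst-case bound the paper uses: take the two terms at $t-1,t$ as guaranteed positive correlation and subtract the total $\ell_1$ mass of everything else.

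Here your spacing-$2$ choice of $S$ is fatal. Normalizing so that $\binom{n}{t}|P(t)|=1$, the symmetric terms $\binom{n}{t\pm1}|P(t\pm1)|$ are equal (so they cancel in the worst-case bound), and the remaining tail is
\[
2\sum_{\ell\ge 1}\frac{\binom{t}{t/2+\ell}}{\binom{t}{t/2}}\cdot\frac{1}{4\ell^2-1}\ \le\ 2\sum_{\ell\ge 1}\frac{1}{4\ell^2-1}=1,
\]
with the ratio $\binom{t}{t/2+\ell}/\binom{t}{t/2}\to 1$ for each fixed $\ell$ as $t\to\infty$. By dominated convergence the tail mass tends to $1=\binom{n}{t}|P(t)|$, so the worst-case correlation is $o(\|\phi\|_1)$ and you do not get a constant lower bound. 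This is precisely why the paper spaces $S$ by $4$ rather than $2$: the analogous tail constant becomes $2\sum_{\ell\ge1}1/(16\ell^2-1)<\pi^2/45<1/4$, leaving a uniform gap. Widening your spacing to $4$ (and dropping the cancellation claim in favor of the worst-case estimate) would repair the argument and bring it in line with the paper's proof.
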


\begin{proof}
We follow the proof outline given in the previous section. Define the set
\[S = \{t \pm 4 \ell : 0 \le \ell \le t/4\}.\]
Note that $|S| = \Omega(t)$. We claim that $\pi_S(i) := \prod_{j \in S, j \ne i} |j - i|$ is minimized at $i = t$. Notice that translating all points in $S$ by a constant does not affect $\pi_S(i)$, and scaling all points in $S$ 
by a constant does not affect $\text{argmin}_i \pi_S(i)$. Thus, it is enough to show that $\pi_{S^*}(i)$ is minimized at $i=0$ for the set $S^* = \{\pm \ell : \ell \le t\}$. In this case, $\pi_{S^*}(i)$ takes the simple form $(t - i)!(t + i)!$, 
and we see that
\[\frac{\pi_{S^*}(0)}{\pi_{S^*}(i)} = \frac{(t!)^2}{(t-i)!(t+i)!} = \frac{t}{t+|i|} \cdot \frac{t-1}{t+|i|-1} \cdot \dots \cdot \frac{t - |i| + 1}{t + 1}\]
is a product of terms smaller than $1$, so $\pi_{S^*}(i)$ is indeed minimized at $i = 0$.

With Step 1 completed, we let $T = S \cup \{t - 1, t + 1\}$ and define the polynomial
\[P(x) = (-1)^s \frac{4^{2h}(h!)^2}{n!} \prod _{j \in [n] \setminus T} (x - j),\]
where $h = \lfloor t/4 \rfloor$ and $s$ is a sign bit to be determined later. The normalization is chosen so that ${n \choose t} |P(t)| = 1$. We divide by both $(x - (t-1))$ and $(x - (t+1))$ to ensure that the rate of 
decay of $P(x)$ is at least quadratic as $x$ moves away from $t$. This will ultimately allow us to show that most of the $\ell_1$ mass of $P$ comes from the points $x = t-1$ and $x = t$.

Write the $\ell_1$ contribution due to the point $r$ as
\[{n \choose r}|P(r)| = {n \choose r} \frac{4^{2h}(h!)^2}{n!} \frac{\prod_{j \in [n] \setminus \{r\}} |r - j|}{\prod_{j\in T \setminus \{r\}} |r - j|} = \frac{4^{2h}(h!)^2}{\prod_{j \in T \setminus \{r\}} |r - j|}.\]
For $r = t \pm 1$ this becomes
\begin{align*}
\frac{4^{2h}(h!)^2}{2 \prod_{\ell = 1}^h (4\ell - 1)(4\ell + 1)} &= \frac{1}{2} \prod_{\ell = 1}^h \left(1 + \frac{1}{16\ell^2 - 1}\right) \\
&\le \frac{1}{2} \exp\left(\sum_{\ell = 1}^h \frac{1}{15\ell^2}\right) \\
&\le \frac{1}{2} e^{\pi^2 / 90} < 1,
\end{align*}
where the first inequality holds because $1+x \leq e^x$ for all $x \geq 0$.
This shows that the $\ell_1$ contributions of the points $t-1$ and $t+1$ are equal, and not too large:
\[{n \choose t - 1} |P(t-1)| = {n \choose t + 1} |P(t+1)| < 1.\]

Now we analyze the remaining summands, and show that their total contribution is much smaller than $1$. Recall that the choice $i = t$ minimizes $\pi_S(i)$, and that $\pi_S(t)=4^{2h}(h!)^2$. Therefore,
\[{n \choose t + 4\ell} |P(t + 4\ell)| = \frac{4^{2h}(h!)^2}{\prod_{j \in T \setminus \{t+4\ell\}} |t + 4\ell - j|} \le \frac{1}{|4\ell + 1||4\ell-1|} \le \frac{1}{15\ell^2}.\]
We can use this quadratic decay to bound the total $\ell_1$ mass of the points outside of $\{t-1, t, t+1\}$:
\[\sum_{j \in S \setminus \{t\}} {n \choose j} |P(j)| \le \sum_{\ell=-h}^{h} \frac{1}{15\ell^2} \le \frac{2}{15} \cdot \frac{\pi^2}{6} < \frac{1}{4}.\]

For the final part of our construction, we multiply $P$ by parity to get $Q(i) = (-1)^iP(i)$. Since $P(t-1)$ and $P(t)$ have the same sign, $Q(t-1)$ and $Q(t)$ have opposite signs. Since $F(t-1)$ and $F(t)$ also have opposite signs,
we can choose $s\in \{-1, 1\}$ to ensure that 
\begin{align*}
Q \cdot F &> {n \choose t - 1} |P(t-1)| + {n \choose t} |P(t)| - {n \choose t + 1} |P(t+1)| - \sum_{j \in S \setminus\{t\}} {n \choose j} |P(j)| \\
& \ge 1 - \frac{1}{4} = \frac{3}{4}.
\end{align*}
As the total $\ell_1$ mass $\|P\|_1$ is at most $3 + \frac{1}{4}$, we get that $(Q \cdot F) / \|Q\|_1 > \frac{3}{13}$. By \corref{cor:dual}, the $\frac{3}{13}$-approximate degree of $f$ is $\Omega(t)$.
\end{proof}

\subsection{General Symmetric Boolean Functions}
\label{sec:generalsymmfuncs}
We now show how to generalize our dual polynomial for MAJ and \v{S}palek's dual polynomial for OR to handle arbitrary symmetric functions. Recall that we are given a Boolean function $f$, an associated univariate function $F$, and a number $t$ such that $F(t-1) \ne F(t)$. As our goal is to show that $\widetilde{\deg}(F) = \Omega(\sqrt{t(n-t+1)})$, we may without loss of generality assume that $t \le n/2$ throughout this section. As the case of $t = 1$ is handled by \v{S}palek's construction, we may also assume $t \ge 2$ to improve the constants in our analysis.

As a first attempt at defining a suitable set $S$ for use in constructing a dual polynomial for $f$, we consider the set $S' = \{tk^2 : k^2 \le n/t\}$. Fact \ref{fact:min-prod} in \appref{app:sym-dual} implies that $\prod_{i \in S', i \ne j} |j - i|$ is minimized at $t$. Unfortunately, the set $S'$ is too small -- it has size only $\Theta(\sqrt{n/t})$ instead of $\Theta\left(\sqrt{t(n-t+1)}\right)$. The trick is to notice that the distance between any two points in $S'$ is at least $t$. Therefore, we should be able to interlace $\Theta(t)$ translated copies of $S'$, and still have the desired product minimized near $t$. The following lemma gives the details, but its proof is rather technical and deferred to \appref{app:sym-dual}.

\begin{lemma}\label{lemma:min-block}
Let 
\[S = \{tk^2 + 4\ell : 1 \le k \le \sqrt{\left(n-t+1\right)/t}, 0 \le \ell \le ct\} \cup \{t - 4\ell : 0 \le \ell \le ct\}\]
where $c \le 1/32$. Then for $i \in S$, the product
\[\pi_S(i) := \prod_{\substack{i' \in S \\ i' \ne i}} |i - i'|\]
is minimized for some $i^*$ with $(1-4c)t \le i^* \le (1 + 4c)t$.
\end{lemma}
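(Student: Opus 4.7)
My plan is to partition
\[
B_0 = \{t - 4\ell : 1 \le \ell \le ct\} \cup \{t + 4\ell : 0 \le \ell \le ct\}, \quad B_k = \{tk^2 + 4\ell : 0 \le \ell \le ct\}
\]
for $2 \le k \le \sqrt{(n-t+1)/t}$, so that $S = B_0 \cup \bigcup_{k \ge 2} B_k$. Since $B_0 \subseteq [(1-4c)t,(1+4c)t]$, it suffices to show that $\pi_S$ attains its minimum on $B_0$. I will establish the stronger statement that $\pi_S(i) > \pi_S(t)$ for every $i = tk^2 + 4m$ in any outer block $B_k$ with $k \ge 2$.

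Fix such an $i$ and decompose
\[
\frac{\pi_S(i)}{\pi_S(t)} = \frac{\pi_{B_k}^{\mathrm{self}}(i)}{\pi_{B_0}^{\mathrm{self}}(t)} \cdot \frac{\pi_{B_0}(i)}{\pi_{B_k}(t)} \cdot \prod_{\substack{k' \ge 2 \\ k' \ne k}} \frac{\pi_{B_{k'}}(i)}{\pi_{B_{k'}}(t)},
\]
where $\pi_B^{\mathrm{self}}(j) := \prod_{i' \in B \setminus \{j\}} |j-i'|$ and $\pi_B(j) := \prod_{i' \in B}|j-i'|$. A direct computation gives $\pi_{B_0}^{\mathrm{self}}(t) = (4^{ct}(ct)!)^2$ and $\pi_{B_k}^{\mathrm{self}}(i) = 4^{ct} m!(ct-m)!$, so the first factor is small. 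However, because $B_0$ has points on both sides of $i$ while $B_k$ lies entirely to one side of $t$, the second factor enjoys massive telescoping cancellation, yielding $\pi_{B_0}(i)/\pi_{B_k}(t) = \prod_{\ell=1}^{ct}(t(k^2-1) - 4\ell)$ at $m = 0$, with easily handled boundary corrections for general $m$. Under $c \le 1/32$ and $k \ge 2$ each factor here exceeds $t(k^2-1)/2$, and Stirling's formula shows that the first two factors combine to a gain of at least $(C(k^2-1)/c)^{ct}$ for some absolute constant $C > 0$.

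The main obstacle will be verifying that the third factor does not overwhelm this gain. Each block ratio $\pi_{B_{k'}}(i)/\pi_{B_{k'}}(t)$ is close to $(k'^2 - k^2)/(k'^2-1)$ for $k' > k$ and to $(k^2-k'^2)/(k'^2-1)$ for $k' < k$, with $O(1/k'^2)$ corrections arising from the $4\ell, 4m$ terms that I would bound termwise via $\log(1-x) \ge -x/(1-x)$. The key observation is that the master product
\[
\prod_{\substack{k' \ge 2 \\ k' \ne k}} \frac{|k^2 - k'^2|}{k'^2-1}
\]
telescopes to a ratio of factorials (roughly $(2k-1)!(k-1)!(k+1)!/((2k)!)$ up to lower-order boundary terms), which by Stirling is of order $\Theta(1/k^2)$. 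Raised to the $(ct+1)$-st power, this contributes a loss of order $k^{-2ct}$, which exactly cancels the $k^{2ct}$ hidden inside $(C(k^2-1)/c)^{ct}$, leaving a residual of at least $(C'/c)^{ct} \gg 1$ under $c \le 1/32$. The case of general $m$ follows from the same estimates, with the boundary factors in the second ratio only improving the bound.
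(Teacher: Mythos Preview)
Your block-by-block decomposition is a different route from the paper's, and it has a concrete error in the analysis of the third factor. You claim the master product
\[
P(k)\;:=\;\prod_{\substack{k'\ge 2\\k'\ne k}}\frac{|k^2-k'^2|}{k'^2-1}
\]
``telescopes to a ratio of factorials (roughly $(2k-1)!(k-1)!(k+1)!/(2k)!$)'' and is $\Theta(1/k^2)$. In fact a direct telescoping gives
\[
P(k)\;=\;\frac{(K+k)!\,(K-k)!}{k^2\,(K+1)!\,(K-1)!}\,,
\]
which depends on the upper limit $K$ and is \emph{not} $\Theta(1/k^2)$: for $k$ a fixed fraction of $K$ it is exponentially large in $K$ (e.g.\ $k=K/2$ gives growth like $(27/16)^{K/2}/k^2$). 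Your displayed formula omits $K$ entirely and cannot be right. Fortunately the error is in the harmless direction: what you actually need is the one-sided bound $P(k)\ge 1/k^2$, i.e.\ $(K+k)!(K-k)!\ge (K+1)!(K-1)!$, which is precisely the statement that $\prod_{j\ne k}|k^2-j^2|$ is minimized at $k=1$. So your argument is salvageable, but as written the key step is misstated and the claimed ``exact cancellation'' is not what happens.

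The paper's proof avoids all of this by decomposing the product over $S$ by \emph{offset} rather than by block. Fix the offset $\ell$ of $i=tk^2+4\ell$; then after cancelling a $k$-independent factor, $\pi_S(i)$ factors as a product over offsets $m=0,\dots,\lfloor ct\rfloor$ of
\[
\bigl(k^2-1+\tfrac{4(\ell+m)}{t}\bigr)\prod_{j\ne k}\bigl|k^2-j^2\bigr|\Bigl(1+\tfrac{4(\ell-m)}{t(k^2-j^2)}\Bigr).
\]
For each $m$ separately this is a small perturbation of $(k^2-1)\prod_{j\ne k}|k^2-j^2|$, which is minimized at $k=1$ by a one-line factorial computation; the perturbations are controlled by $\sum_{j\ne k}|k^2-j^2|^{-1}\le \pi^2/3$ and the choice $c\le 1/32$. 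This sidesteps your three-way decomposition, the Stirling estimates, and the analysis of the master product altogether.
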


Thus the product of differences is minimized somewhere in a $\Theta(t)$-sized neighborhood of $t$. The exact location depends delicately on $n$ and $t$. However, since the product $\pi_S(i)$ is invariant under translations of $S$, we can assume that the minimizer $i^*$ is one of the points $t-1, t,$ or $t + 1$.

For intuition, observe that one can view the set $S$ of \lemref{lemma:min-block} as interpolating between the set for OR used by \v{S}palek, and the set used to prove our $\Omega(t)$ lower bound in Proposition \ref{prop:maj-lb}. Notice that $S$ contains all points of the form $t \pm 4\ell$, plus additional points corresponding to perfect squares when $t = o(n)$.

Let $S$ be the set in \lemref{lemma:min-block} (or a translate thereof), and suppose the corresponding product is minimized at $i^*$. Let $T = S \cup \{i^* - 1, i^* + 1\}$. Define the polynomial
\[P(x) = (-1)^s\frac{\pi_S(i^*)}{n!}\prod_{j \in [n] \setminus T} (x - j),\]
where the sign bit $s$ is to be determined later. The choice of normalization is so that ${n \choose i^*} |P(i^*)| = 1$. Our goal is now to show that the $\ell_1$ mass of $P$ is concentrated at the points $i^* - 1, i^*$ and $i^* + 1$. The following lemma shows that the contribution of a point $x$ to the $\ell_1$ mass of $P$ decays at a quadratic rate as $x$ moves away from $i^*$. This is precisely because we include both points $i^* - 1$ and $i^* + 1$ in $T$. Full details are in \appref{app:sym-dual}.

\begin{lemma}\label{lemma:term-bound}
Let $r = tk^2 + 4\ell$ for $k \ge 2$ and $0 \le \ell \le ct$. Then ${n \choose r}|P(r)| \le 1/(t^2(k^2 - 2)^2)$. If $v = i^* + 4\ell$, then ${n \choose v}|P(v)| \le 1 / (16\ell^2 - 1)$.
\end{lemma}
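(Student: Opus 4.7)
The plan is to exploit the classical identity $\prod_{j \in [n],\, j \ne r}|r - j| = r!(n-r)!$ in order to obtain a clean cancellation in ${n \choose r}|P(r)|$, and then to invoke the minimization property established in Lemma \ref{lemma:min-block}.

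First, for any $r \in T$, I would observe that splitting the above identity as a product over $[n] \setminus T$ times a product over $T \setminus \{r\}$ gives
$${n \choose r}\cdot\frac{1}{n!}\prod_{j \in [n] \setminus T}|r - j| = \frac{1}{\prod_{j \in T \setminus \{r\}}|r - j|}.$$
Combined with the definition of $P$, this yields the key identity
$${n \choose r}|P(r)| = \frac{\pi_S(i^*)}{\prod_{j \in T \setminus \{r\}}|r - j|}.$$
For $r \in S$, the denominator factors as $\pi_S(r) \cdot |r - (i^*-1)| \cdot |r - (i^*+1)|$, and since Lemma \ref{lemma:min-block} gives $\pi_S(i^*) \le \pi_S(r)$, we conclude
$${n \choose r}|P(r)| \le \frac{1}{|r - i^* + 1|\cdot |r - i^* - 1|} = \frac{1}{(r - i^*)^2 - 1}.$$
This quadratic decay is precisely the payoff from having placed both $i^* - 1$ and $i^* + 1$ into $T$.

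It then remains to estimate $(r - i^*)^2 - 1$ in each of the two cases. For $r = tk^2 + 4\ell$ with $k \ge 2$ and $0 \le \ell \le ct$, using $i^* \in \{t-1, t, t+1\}$ and the standing assumption $t \ge 2$, I would show that $r - i^* \ge t(k^2 - 1) + 4\ell - 1 \ge t(k^2 - 2) + 1$, whence $(r - i^*)^2 - 1 \ge t^2(k^2-2)^2 + 2t(k^2-2) \ge t^2(k^2-2)^2$. For $v = i^* + 4\ell$, the computation is immediate: $(v - i^*)^2 - 1 = 16\ell^2 - 1$.

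The only real subtlety is the affine estimate $r - i^* \ge t(k^2-2) + 1$, which ultimately relies on the assumption $t \ge 2$ together with the fact that the $4\ell$ term only helps. Everything else is mechanical bookkeeping, so I do not anticipate any genuine obstacle beyond carefully tracking the position of $i^*$ relative to $t$ and verifying that the factors $r - i^* \pm 1$ are both positive in the regimes considered (which follows from $r - i^* \ge 3t - 1$ in the first case and $v - i^* = 4\ell \ge 4$ in the second).
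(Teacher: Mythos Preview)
Your proof is correct and follows essentially the same approach as the paper: both arguments reduce ${n \choose r}|P(r)|$ via the factorial identity, apply the minimality $\pi_S(i^*) \le \pi_S(r)$ from Lemma~\ref{lemma:min-block}, and then bound $|r-(i^*-1)||r-(i^*+1)|$ from below. The only cosmetic difference is that the paper keeps the coarser localization $i^* \in [(1-4c)t,(1+4c)t]$ from Lemma~\ref{lemma:min-block} (obtaining the intermediate bound $t^2(k^2-(1+4c+1/t))^2$ and then invoking $c\le 1/32$, $t\ge 2$), whereas you use the post-translation assumption $i^*\in\{t-1,t,t+1\}$ stated just before the lemma; either choice yields the claimed $t^2(k^2-2)^2$.
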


Since the sum of the inverse squares of the integers is bounded by a constant, the total contribution of these points to $\|P\|_1$ is dominated by the mass contributed by $P(i^*)$.

\begin{lemma}\label{lemma:tail-bound}
\[\sum_{j \in T \setminus \{i^*-1, i^*, i^*+1\}} {n \choose j} |P(j)| \le \frac{2}{5}.\]
\end{lemma}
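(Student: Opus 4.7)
The plan is to partition the summation index and apply the two bounds from \lemref{lemma:term-bound} to each piece. Since $T = S \cup \{i^*-1, i^*+1\}$ and $i^* \in S$, the index set $T \setminus \{i^*-1, i^*, i^*+1\}$ equals $S \setminus \{i^*\}$. After the translation of $S$ described in the text preceding the lemma, I may take $i^* = t$. Then $S \setminus \{t\}$ splits into two disjoint pieces: a \emph{near} cluster $\mathcal{N} = \{t + 4\ell : 1 \le |\ell| \le ct\}$, formed by joining the $k=1$ part of $S$ with the points $t - 4\ell$ and deleting the shared point $t$; and \emph{far} clusters $\mathcal{F}_k = \{tk^2 + 4\ell : 0 \le \ell \le ct\}$ for $k \ge 2$. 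These are disjoint because $\mathcal{N} \subseteq [t - t/8, t + t/8]$ while $\mathcal{F}_k \subseteq [4t, \infty)$ for $k \ge 2$ (using $c \le 1/32$).

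For the near cluster, the second bound of \lemref{lemma:term-bound} gives
\[\sum_{j \in \mathcal{N}} {n \choose j}|P(j)| \;\le\; 2 \sum_{\ell=1}^{\infty} \frac{1}{16\ell^2 - 1} \;=\; \frac{4 - \pi}{4},\]
where the closed form comes from the partial-fraction expansion of $\pi \cot(\pi a)$ at $a = 1/4$. A cruder but sufficient estimate is $\tfrac{1}{16\ell^2-1} \le \tfrac{1}{15\ell^2}$, giving a bound of $2\pi^2/90 = \pi^2/45$. Either way the near-cluster contribution is strictly less than $0.22$.

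For the far clusters, the first bound of \lemref{lemma:term-bound} gives
\[\sum_{k \ge 2} \sum_{j \in \mathcal{F}_k} {n \choose j}|P(j)| \;\le\; \frac{ct + 1}{t^2} \sum_{k=2}^{\infty} \frac{1}{(k^2 - 2)^2}.\]
The inner series is a fixed numerical constant; I would evaluate the first few terms ($k = 2,3,4,5$) directly and bound the tail using $(k^2-2)^2 \ge k^4/4$ for $k \ge 2$, yielding a bound below $0.29$. The prefactor $(ct+1)/t^2$, with $c \le 1/32$ and the assumption $t \ge 2$ made at the start of the section, is maximized at $t = 2$, where it equals $17/64$. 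Hence the far-cluster contribution is at most $0.29 \cdot 17/64 < 0.08$, and adding the two pieces gives at most $0.30 < 2/5$.

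The conceptual content is entirely in the quadratic decay supplied by \lemref{lemma:term-bound}; the proof amounts to recognizing two convergent $\sum 1/\ell^2$-type sums. I expect the only real hazard to be bookkeeping: ensuring that the disjointness of the clusters is correct and verifying that the explicit numerical constants simultaneously stay below $2/5$ in the pinch case $t=2$, where $(ct+1)/t^2$ fails to be small.
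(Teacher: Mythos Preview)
Your proposal is correct and follows essentially the same approach as the paper: split $S\setminus\{i^*\}$ into the near cluster $\{i^*+4\ell:\ell\ne 0\}$ and the far clusters $\{tk^2+4\ell:k\ge 2\}$, apply the two bounds of \lemref{lemma:term-bound}, and use $c\le 1/32$, $t\ge 2$ to control the constants. The paper bounds the far-cluster series in one shot via $(k^2-2)^2\ge k^4/4$ to get $\tfrac{17}{16}\bigl(\tfrac{\pi^4}{90}-1\bigr)$, while you evaluate initial terms first, but the numerics and the final inequality $\pi^2/45 + \text{(far)} < 2/5$ are the same.
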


This bound allows us to sketch a proof of \propref{prop:sym-lb} in full generality.
\medskip

\begin{proof}[Proof sketch of \propref{prop:sym-lb}:]
We consider three cases based on which of ${n \choose i^*-1} |P(i^*-1)|, {n \choose i^*} |P(i^*)| = 1, $ and ${n \choose i^*+1} |P(i^*+1)|$ is the smallest. The relationship between these terms determines how we choose the location of $i^*$ relative to the ``jump'' at $t$. We set $i^*$ so that after multiplying $P$ by parity to obtain a polynomial $Q$, the larger two of these terms contribute positively to $Q \cdot F$. They will hence dominate the (possibly negative) correlation due to the smallest term, as well as the contribution of size at most $2/5$ due to remaining points in $T$. Ultimately, we show that $(Q \cdot F) / \|Q\|_1 \le \frac{1}{14}$, which gives the asserted lower bound by \corref{cor:dual}. The calculations for each of these cases are analogous to those in the proof of \propref{prop:maj-lb} and given in \appref{app:sym-dual}.

\end{proof}

\subsection{On Complementary Slackness}
In this section, we give some additional intuition, based
on complementary slackness,
that helps to explain the structure of the dual polynomial 
exhibited in Section \ref{sec:generalsymmfuncs}. 
The discussion that follows is deliberately informal
and is meant to complement the formal argument given
in Section \ref{sec:generalsymmfuncs}.

We illustrate the idea by considering the symmetric $t$-threshold function $f$, which evaluates to $-1$ on
inputs of Hamming weight at least $t$ and evaluates
to $1$ on all other inputs.
In \cite{sherstov:incexc}, Sherstov gives an explicit, asymptotically optimal polynomial $p$ for approximating
$f$ in the $\ell_\infty$ norm. 
If this polynomial $p$ were in fact an \emph{exactly} optimal
solution to the primal linear program of Section \ref{sec:dualcharacterization}, then 
complementary slackness (cf. \cite[pg. 95]{schrijver})
would imply that the optimal dual polynomial $\phi$ is supported on the points corresponding to the constraints made tight by the primal optimal polynomial $p$.
That is, it would hold that $\phi(x)=0$ except for those $x\in \{-1, 1\}^n$ for which $|p(x)-f(x)| = \eps$. We will refer to such values of $x$ as \emph{maximum-error points of $p$.} 

While it is not clear whether Sherstov's polynomial $p$ is exactly optimal, our dual polynomial is still approximately consistent with the conditions obtained by applying complementary slackness to $p$. Sherstov's construction of $p$ works by taking a Chebyshev polynomial of degree $\Theta(\sqrt{n/t})$, shifting and scaling it, and then composing it with a Chebyshev polynomial of degree $\Theta(t)$. This is reminiscent of our dual solution, which interlaces $\Theta(t)$ copies of a set of size $\Theta(\sqrt{n/t})$. In general, it is difficult to determine the precise maximum-error points of Sherstov's polynomial $p$. However, our dual polynomial can be viewed as placing nonzero weight on close \emph{approximations} to the maximum-error points of $p$. We explain this viewpoint below.

Let $T_d: \mathbb{R} \rightarrow \R$ denote the degree-$d$
Chebyshev polynomial of the first kind. It is well-known
that the extreme points of $T_d$ are the degree-$d$ \emph{Chebyshev nodes},
which take the form $\cos(k \pi/d)$ for $0 \leq k \leq d$. 
Truncating the Taylor expansion of $\cos(x)=1-x^2/2+\dots$ after the quadratic term, one sees that for $d=\sqrt{n}$, 
$\cos(k \pi/d) \approx 1-(ck^2/d^2)=1-ck^2/n$ for some constant $c$. 

It is known \cite{NS94} that an appropriately shifted-and-scaled Chebyshev polynomial $Q_d$ of degree $d = \Theta(\sqrt{n})$ itself yields an asymptotically optimal approximation $Q_d(\sum_{i=1}^n x_i / n)$ to the OR function. Recall from Section
\ref{sec:sym-outline}
that \v{S}palek's dual polynomial for the OR function \cite{spalek}
only places nonzero weight on inputs of Hamming weight
equal to a perfect square (or equal to two). We can therefore view \v{S}palek's dual polynomial for the OR function as placing
nonzero weight only on points whose Hamming weight closely approximates 
a constant multiple of a Chebyshev node 
(Section \ref{app:spalek} shows that 
the dual polynomial for the OR function is
robust to scaling the non-vanishing Hamming weight values by constants, i.e., it suffices to place nonzero weight only on inputs
of Hamming weight $ck^2$ or $1$ for any constant $c\ge 2$).

Moving to the general case and eliding many details, Sherstov approximates the symmetric $t$-threshold function $f$ with a polynomial $p$ roughly of the form
$$p(x) = T_{t}\left(q\left(\sum_{i=1}^n x_i/n\right)\right),$$
where
$q$ is a shifted-and-scaled version of the Chebyshev polynomial
of degree $\sqrt{n/t}$.
Following the intuition above that the degree-$d$ Chebyshev nodes are approximated by points of the form $1-ck^2/d^2$ for some constant $c$, the inner polynomial $q(\sum_{i=1}^n x_i/n)$ in Sherstov's construction hits its extreme points at inputs of Hamming weight close to $ctk^2$ for the non-negative integers $k \leq \sqrt{n/t}$. Moreover, $q$ alternates between $-1$ and $+1$ at its extreme points. Thus, as the Hamming weight of the input $x$ increases from $ctk^2$ to $ct(k+1)^2$, the inner polynomial $q(\sum_{i=1}^n x_i/n)$ passes through all $t$ maximum-error points of the \emph{outer} polynomial $T_t$.

Note also that the degree-$d$ Chebyshev nodes $\cos(k\pi/d) \approx 1-k^2/d^2$ are clustered near the endpoints of the interval $[-1, 1]$ rather than the middle of the interval, so most of the maximum-error points
of the composed polynomial in fact fall very close to inputs of Hamming weight
$ctk^2$.

To see how these maximum-error points correspond to the support of our dual polynomial $\phi$ for the $t$-threshold function, recall that there is some constant $c'$ such $\phi$ takes nonzero values only on inputs with Hamming weight in the set
\[S = \{tk^2 + 4\ell : 1 \le k \le \sqrt{\left(n-t+1\right)/t}, 0 \le \ell \le c't\} \cup \{t - 4\ell : 0 \le \ell \le c't\}.\]
Roughly speaking, our dual witness thus takes nonzero values only on inputs of Hamming weight
very close to $tk^2$ for each $k \leq \sqrt{n/t}$ (i.e., for each Hamming weight of the form
$tk^2$, our dual witness takes nonzero values on $t$ distinct Hamming weights in the vicinity of $tk^2$), just as as predicted above.

\subsection{A Dual Polynomial for the $\eps$-Approximate Degree of OR} \label{app:spalek}

\v{S}palek \cite{spalek} constructed an explicit dual witness for the fact that the OR function on $n$ variables has $(1/14)$-approximate degree $\Omega(\sqrt{n})$. We extend his argument to exhibit a dual witness that shows that OR has $\eps$-approximate degree $\Omega(\sqrt{n})$ for \emph{any} constant $\eps \in (0, 1)$.

\begin{proposition} \label{prop:maj-lb}
Let $\eps \in (0, 1)$. Then $\operatorname{OR}_n$ has approximate degree $\Omega(\sqrt{n (1-\eps)})$.
\end{proposition}

\begin{proof}
As before, we associate with each symmetric function
$p$ a univariate function $P$ and vice versa. Let $c = \lceil 8 / (1-\eps)\rceil$. Let $m = \lfloor \sqrt{n/c} \rfloor$ and define the set
\[T = \{1\} \cup \{ck^2 : 0 \le k \le m\}.\]
Note that $|T| = \Omega(\sqrt{n/c})$. Define the polynomial
\[P(x) = (-1)^s \frac{c^{2m}(m!)^2}{n!} \prod_{j \in [n] \setminus T} (x - j),\]
where $s$ is a sign bit to be determined later. It is easy to check that $|P(0)| = 1$.

The $\ell_1$ contribution due to the $r$'th layer of the Boolean hypercube  is
\[{n \choose r}|P(r)| = {n \choose r} \frac{c^m(m!)^2}{n!} \frac{\prod_{j \in [n] \setminus \{r\}} |r - j|}{\prod_{j\in T \setminus \{r\}} |r - j|} = \frac{c^m(m!)^2}{\prod_{j \in T \setminus \{r\}} |r - j|}.\]

For $r=1$ the right hand side evaluates to
\[\frac{c^m(m!)^2}{\prod_{i = 1}^m (ci^2 - 1)} = \prod_{i=1}^m \frac{i^2}{i^2 - 1/c} \ge 1.\]
Thus, the total $\ell_1$ contribution of the inputs of Hamming weight 1 is at least $1$. 

For $r = ck^2$ where $k > 0$, we get
\begin{align*}
\frac{c^m(m!)^2}{(ck^2 - 1)\prod_{i \in [m] \setminus \{k\}} |ci^2 - ck^2|} &= \frac{(m!)^2}{(ck^2 - 1)\prod_{i \in [m] \setminus \{k\}} (i+k)|i-k|} \\
&= \frac{2(m!)^2}{(ck^2 - 1)(m+k)!(m-k)!} \\
&\le \frac{2}{ck^2-1}
\end{align*}
where the last inequality follows as in \cite{spalek} because
\[\frac{(m!)^2}{(m+k)!(m-k)!} = \frac{m}{m+k} \cdot \frac{m-1}{m+k-1} \cdot \ldots \cdot \frac{m-k+1}{m+1}\]
is a product of factors that are each smaller than 1. This shows that the total $\ell_1$ contribution of the Hamming layers excluding 0 and 1 is at most
\[\sum_{k=1}^{m} \frac{2}{ck^2 - 1} < \sum_{k=1}^\infty \frac{4}{ck^2} < \frac{8}{c}.\]

For the final part of our construction, we let $Q(i) = (-1)^iP(i)$. Then the multilinear polynomial corresponding to $Q$ has pure high degree $\Omega(\sqrt{n(1-\eps)})$. Since $P(0)$ and $P(1)$ have the same sign, $Q(0)$ and $Q(1)$ have opposite signs. Since $\operatorname{OR}(0)$ and $\operatorname{OR}(1)$ also have opposite signs,
we can choose $s \in \{-1, 1\}$ to ensure that 
\begin{align*}
Q \cdot \operatorname{OR} &\ge {n \choose 0} |P(0)| + {n \choose 1} |P(1)| - \sum_{j \in S \setminus\{0, 1\}} {n \choose j} |P(j)| \\
& \ge 1 + n|P(1)| - \frac{8}{c}.
\end{align*}
As the total $\ell_1$ mass $\|Q\|_1$ of $Q$ is at most $1 + n|P(1)| + 8/c$, we see that 
\[\frac{Q \cdot \operatorname{OR}}{\|Q\|_1} \ge \frac{1 + n|P(1)| - 8/c}{1 + n|P(1)| + 8/c} = 1 - \frac{16}{c + cn|P(1)| + 8} \ge 1 - \frac{16}{2c + 8}.\]
Since $c > 8/(1-\eps)$, the right hand side is at least $\eps$. \corref{cor:dual} then implies that the $\eps$-approximate degree of OR is $\Omega(\sqrt{n(1-\eps)})$.
\end{proof}

\section{A Constructive Proof of Markov-Bernstein Inequalities}
\label{sec:markovbernstein}
The Markov-Bernstein inequality for polynomials with real coefficients asserts that
$$|p'(x)| \leq \min\left\{ \frac{n}{\sqrt{1-x^2}} ,n^2\right\} \|p\|_{[-1,1]}, x \in (-1,1)$$
for every real polynomial of degree at most $n$. Here, and in what follows, $$\|p\|_{[-1, 1]} := \sup_{y \in [-1, 1]} |p(y)|.$$

This inequality has found numerous uses in theoretical computer science, especially in conjunction with symmetrization as a method for bounding
the $\eps$-approximate degree of various functions (e.g. \cite{paturi, beigelomb, patternmatrix, klivanssherstov, aaronsonshi, NS94, agnostic, colt}).

We prove a number of important special cases of 
this inequality based on linear programming duality. 
Our proofs are constructive in that we
exhibit explicit dual solutions to a linear program bounding the derivative of a constrained polynomial.

The special cases of the Markov-Bernstein inequality that we prove are sufficient for many applications in theoretical computer science.
The dual solutions we exhibit are remarkably clean, and we believe that they shed new light on these classical inequalities.

\subsection{Proving the Markov-Bernstein Inequality at $x=0$}
The following linear program with uncountably many constraints captures the problem of finding 
a polynomial $p(x) = c_n x^n + c_{n-1} x^{n-1} + \dots + c_1 x + c_0$ with real-valued coefficients 
that maximizes $|p'(0)|$ subject to the constraint that $\|p\|_{[-1, 1]}\leq 1$. 
Below
the variables are $c_0, \dots c_n$, and there is a constraint for every $x \in [-1, 1]$. To handle the case where $i = 0$ and $x=0$, we use the convention $0^0 = 1$.

\[ \boxed{\begin{array}{rll} 
    \text{max}      & c_1    \\
    \mbox{such that} & \sum_{i=0}^n c_i x^i \leq 1, \mbox{ }  \forall x \in [-1,1] \\
              	&  - \sum_{i=0}^n c_i x^i \leq 1, \mbox{ }  \forall x \in [-1,1] 
    \end{array}}
\]

One might initially be concerned that our goal is to bound $|p'(x)|$, while the above LP only yields an upper bound on $p'(x)$.
But for any polynomial $p$ satisfying $\|p\|_{[-1, 1]} \leq 1$ whose derivative is negative, $-p$ is a feasible solution
to the above LP achieving value $|p'(x)|$. Thus, the value of the above LP indeed equals $\sup_{p \in B_n} |p'(0)|$, where $B$ denotes the set of all degree $n$ polynomials $p$ satisfying $\|p'\|_{[-1, 1]} \leq 1$.

We will actually upper bound the value of the following LP, which is obtained
from the above by throwing away all but finitely many constraints. Not coincidentally, the constraints 
that we keep are those that are tight for the primal solution corresponding to the Chebyshev polynomials of the 
first kind. Throughout this section, we refer to this LP as {\sc Primal}.

\[ \boxed{\begin{array}{rl} 
    \text{max}      & c_1   \\
    \mbox{such that} &   \sum_{i=0}^n c_i x^i \leq 1, \mbox{ } \forall x=\cos(j\pi/n),  \mbox{ }  j\in \left\{0, 2, \dots, n-1\right\} \\
              	&   - \sum_{i=0}^n c_i x^i \leq 1,\mbox{ }  \forall x = \cos(j\pi/n),  \mbox{ }  j\in \left\{1, 3, \dots, n\right\}
    \end{array}
    }
\]

The dual to {\sc Primal} can be written as 

\[ \boxed{\begin{array}{rl} 
    \text{min}     & \sum_{i=0}^n y_i     \\
    \mbox{such that} &  Ay = e_1\\
    & y_j \geq 0 \mbox{ } \forall j \in \{0, \dots, n\}
    \end{array}
    }
\]

where $A_{ij} = (-1)^j \cos^i(j \pi/n)$ and $e_1=(0, 1, 0, 0, 0, \dots, 0)^T$, again taking $0^0 = 1$. We refer to this linear program as {\sc Dual}.

Our goal is to prove that {\sc Primal} has value at most $n$. For odd $n$, it is well-known that this value is achieved by the coefficients of 
$(-1)^{(n-1)/2} T_n(x)$, the degree $n$ Chebyshev polynomial of the first kind.
Our knowledge of this primal-optimal solution informed our search for a dual-optimal solution, 
but our proof makes no explicit reference to the Chebyshev polynomials, and we do not need to invoke strong LP duality; weak duality suffices. 

Our arguments make use of a number of trigonometric identities that can all be 
established by elementary methods. These identities are presented in \appref{app:markovbernstein}.

\begin{proposition}\label{prop:e1-soln}
Let $n = 2m + 1$ be odd. Define the $(n+1) \times (n+1)$ matrix $A$ by $A_{ij} = (-1)^{j + m} \cos^i(j\pi /n)$ for $0 \le i, j \le n$. Then
\[y = \frac{1}{n}(1/2, \sec^2(\pi/n), \sec^2(2\pi/n), \dots, \sec^2((n-1)\pi/n), 1/2)^T\]
is the unique solution to $Ay = e_1$, where $e_1 = (0, 1, 0, 0, \dots, 0)^T$.
\end{proposition}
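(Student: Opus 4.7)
The plan is to exploit the product structure $A = VD$, where $V_{ij} = \cos^i(j\pi/n)$ is a Vandermonde matrix on the Chebyshev--Lobatto nodes $x_j := \cos(j\pi/n)$ and $D$ is the diagonal matrix with $D_{jj} = (-1)^{j+m}$. Since the $n+1$ nodes $x_0, \dots, x_n$ are pairwise distinct, $V$ is invertible; hence so is $A$, and uniqueness of the solution is immediate. What remains is to exhibit any solution. Substituting $v_j := (-1)^{j+m} y_j$, the system $Ay = e_1$ becomes $\sum_{j=0}^n v_j x_j^i = [i=1]$ for $0 \le i \le n$, which by linearity is equivalent to $\sum_{j=0}^n v_j p(x_j) = p'(0)$ for every real polynomial $p$ of degree at most $n$. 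The Lagrange interpolation identity $p(x) = \sum_j p(x_j) L_j(x)$ (with $L_j$ the basis polynomial at $x_0, \dots, x_n$) then forces $v_j = L_j'(0)$. So the proposition reduces to verifying the three identities $L_0'(0) = (-1)^m/(2n)$, $L_n'(0) = (-1)^{m+1}/(2n)$, and $L_j'(0) = (-1)^{j+m}\sec^2(j\pi/n)/n$ for $1 \le j \le n-1$.

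To check these, I would work with the node polynomial $\omega(x) := \prod_{k=0}^n (x - x_k) = (x^2 - 1) U_{n-1}(x)/2^{n-1}$, where $U_{n-1}$ is the Chebyshev polynomial of the second kind, and use the standard identity $L_j(x) = \omega(x)/[(x - x_j)\omega'(x_j)]$. Differentiating this quotient at $x = 0$ gives $L_j'(0) = [-x_j\omega'(0) - \omega(0)]/[x_j^2 \omega'(x_j)]$. Two structural facts then collapse the computation. First, $n - 1 = 2m$ is even, so $U_{n-1}$ is an even polynomial; consequently $\omega'(0) = -U_{n-1}'(0)/2^{n-1} = 0$. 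Second, evaluating $U_{n-1}(\cos\theta) = \sin(n\theta)/\sin\theta$ at $\theta = \pi/2$ gives $U_{n-1}(0) = (-1)^m$, so $\omega(0) = (-1)^{m+1}/2^{n-1}$. For an interior index $1 \le j \le n-1$, the vanishing $U_{n-1}(x_j) = 0$ combined with $x_j^2 - 1 = -\sin^2(j\pi/n)$ and the derivative formula $U_{n-1}'(x_j) = -n(-1)^j/\sin^2(j\pi/n)$ (obtained by differentiating $\sin(n\theta)/\sin\theta$ at $\theta = j\pi/n$) yields $\omega'(x_j) = n(-1)^j/2^{n-1}$. At the boundary, $U_{n-1}(\pm 1) = n$ gives $\omega'(\pm 1) = \pm n/2^{n-2}$. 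Substituting these pieces back into the expression for $L_j'(0)$ produces the claimed values on the nose.

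The hard part will be the sign bookkeeping and the separate handling of interior versus boundary nodes, since the factor $(x^2 - 1)$ in $\omega(x)$ plays qualitatively different roles at $x_j = \pm 1$ (where it vanishes outright) versus at the interior zeros of $U_{n-1}$ (where $U_{n-1}(x_j) = 0$ instead). Once those cases are disentangled, the entire identity falls out of Lagrange interpolation together with the well-known special values of the Chebyshev polynomials $T_n$ and $U_{n-1}$.
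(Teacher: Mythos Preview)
Your proposal is correct and takes a genuinely different route from the paper. The paper verifies $Ay = e_1$ coordinate by coordinate: it writes out $(Ay)_i$ as an alternating sum of powers of cosines, dispatches the even-$i$ case by a simple pairing $\cos(\pi-\theta)=-\cos\theta$, and handles the odd-$i$ cases by appealing to two trigonometric summation identities proved in an appendix (one for $\sum_j (-1)^j \cos^i(j\pi/n)$ with odd $i<n$, and one for $\sum_j (-1)^j \sec(j\pi/n)$). Your argument instead factors $A=VD$, recognises that the equations $\sum_j v_j x_j^i = [i=1]$ are precisely the statement that the weights $v_j$ represent the functional $p\mapsto p'(0)$ on polynomials of degree at most $n$, and then reads off $v_j = L_j'(0)$ from Lagrange interpolation; the explicit values follow from the node polynomial $\omega(x)=(x^2-1)U_{n-1}(x)/2^{n-1}$ and standard Chebyshev facts. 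Your approach is more conceptual---it explains \emph{why} the solution has this shape and would generalise readily to other node sets or evaluation points---whereas the paper's direct computation trades that insight for a self-contained argument that avoids invoking Chebyshev polynomials but requires several bespoke trigonometric sums.
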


Before proving the proposition, we explain its consequences. Note that $y$ is clearly nonnegative, and thus is the unique feasible solution for {\sc Dual}. Therefore it is the dual-optimal solution, and exactly recovers the Markov-Bernstein inequality at $x = 0$:

\begin{corollary} \label{cor:odd-deriv-zero}
Let $p$ be a polynomial of degree $n = 2m + 1$ with $\|p\|_{[-1, 1]} \le 1$. Then $p'(0) \le n$.
\end{corollary}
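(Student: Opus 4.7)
The plan is to combine weak LP duality between {\sc Primal} and {\sc Dual} with the explicit dual witness $y$ supplied by \propref{prop:e1-soln}. The argument has three steps.

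First, I would observe that any real polynomial $p(x) = \sum_{i=0}^n c_i x^i$ of degree at most $n$ with $\|p\|_{[-1,1]} \le 1$ automatically satisfies the (finitely many) Chebyshev-node constraints of {\sc Primal}, since these form a subset of the constraints $|p(x)| \le 1$ on $[-1,1]$. Thus its coefficient vector is feasible for {\sc Primal} with objective value $c_1 = p'(0)$, so $p'(0)$ is bounded above by $\opt({\sc Primal})$.

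Second, I would verify that the vector $y$ produced by the proposition is feasible for {\sc Dual}: by \propref{prop:e1-soln} it satisfies the equality constraint $Ay = e_1$, and every coordinate is strictly positive since $\sec^2(j\pi/n) > 0$ for each $j = 1, \dots, n-1$. The hypothesis that $n$ is odd is essential here, as it prevents $\cos(j\pi/n)$ from vanishing for any such $j$. Weak LP duality then yields
\[ p'(0) \;\le\; \opt({\sc Primal}) \;\le\; \opt({\sc Dual}) \;\le\; \sum_{j=0}^{n} y_j. \]

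Finally, I would evaluate the dual objective. Plugging in the explicit form of $y$ gives
\[ \sum_{j=0}^{n} y_j \;=\; \frac{1}{n}\left( 1 + \sum_{k=1}^{n-1} \sec^2(k\pi/n) \right), \]
and the entire calculation reduces to the trigonometric identity $\sum_{k=1}^{n-1} \sec^2(k\pi/n) = n^2 - 1$, valid for odd $n$, which yields $\sum_j y_j = n$ and hence $p'(0) \le n$. The one step requiring real care is this identity, which I anticipate to be among those catalogued in \appref{app:markovbernstein}; it can be derived by writing $\sec^2 \theta = 1 + \tan^2\theta$ and summing over the roots of a suitable Chebyshev polynomial. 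Given it, the corollary follows immediately.
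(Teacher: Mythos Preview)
Your proposal is correct and follows essentially the same approach as the paper: invoke the dual witness $y$ from \propref{prop:e1-soln}, check feasibility, apply weak LP duality, and reduce to the trigonometric identity. Your identity $\sum_{k=1}^{n-1}\sec^2(k\pi/n)=n^2-1$ is exactly \lemref{lem:sec-sq-sum} from the appendix (which states it as $\sum_{k=0}^{n-1}\sec^2(k\pi/n)=n^2$, differing only by the $k=0$ term), and the paper proves it via $\sec^2\theta=1+\tan^2\theta$ just as you anticipate.
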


\begin{proof}
Let $y$ be as in \propref{prop:e1-soln}. This is the unique feasible point for {\sc Dual}. By \lemref{lem:sec-sq-sum} in \appref{app:markovbernstein},
\[\sum_{j=0}^{n-1} \sec^2\left(\frac{j\pi}{n}\right) = n^2,\]
so we immediately see that $\sum_{j=0}^n y_j = n$. By weak LP duality, the
value of {\sc Primal} is at most $n$. 
\end{proof}

While we have recovered the Markov-Bernstein inequality only for odd-degree polynomials at the point $x=0$, a simple ``shift-and-scale'' argument recovers the asymptotic bound for any $x$ bounded away from
the endpoints $\{-1, 1\}$.

\begin{corollary} \label{cor:scale}
Let $p$ be a polynomial of degree $n$ with $\|p\|_{[-1, 1]} \le 1$. Then for any $x_0 \in (-1, 1)$, $|p'(x_0)| \le \frac{n+1}{1-|x_0|}\|p\|_{[-1, 1]}$.
In particular, for any constant $\eps \in (0, 1)$, $\|p'\|_{[-1 + \eps, 1 - \eps]} = O(n)\|p\|_{[-1, 1]}$. 
\end{corollary}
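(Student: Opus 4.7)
The plan is to reduce the general statement to the special case $x_0 = 0$ and odd degree, already handled by Corollary~\ref{cor:odd-deriv-zero}, via a standard affine change of variables. By replacing $p$ with $p/\|p\|_{[-1,1]}$, I may assume $\|p\|_{[-1,1]} \le 1$ and it suffices to show $|p'(x_0)| \le (n+1)/(1 - |x_0|)$.

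Fix $x_0 \in (-1, 1)$ and set $r = 1 - |x_0|$, so $r \in (0, 1]$. Define the polynomial $q(t) := p(x_0 + r t)$. As $t$ ranges over $[-1, 1]$, the argument $x_0 + rt$ ranges over the interval $[x_0 - r, x_0 + r]$, which is $[2x_0 - 1, 1]$ when $x_0 \ge 0$ and $[-1, 2x_0 + 1]$ when $x_0 < 0$. In either case this interval lies inside $[-1, 1]$, so $\|q\|_{[-1,1]} \le \|p\|_{[-1,1]} \le 1$. Moreover, $q$ is a polynomial in $t$ of degree at most $n$, and the chain rule gives $q'(0) = r \cdot p'(x_0)$.

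The remaining task is to bound $|q'(0)|$. Corollary~\ref{cor:odd-deriv-zero} gives the bound $|q'(0)| \le \deg(q)$ only for odd degrees, so to cover both parities I simply view $q$ as a polynomial of degree $n+1$ (allowing a zero leading coefficient; if $n$ is even, $n+1$ is odd, and if $n$ is odd, $n+1$ is even so I bump once more, but more cleanly I just always treat $q$ as having degree $n' \le n+1$ with $n'$ odd by padding by at most one extra zero coefficient, which does not change the value or the supremum norm). This yields $|q'(0)| \le n + 1$, hence
\[
|p'(x_0)| \;=\; \frac{|q'(0)|}{r} \;\le\; \frac{n+1}{1 - |x_0|},
\]
which is the first claim. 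The second claim is then immediate: for $x_0 \in [-1 + \epsilon, 1 - \epsilon]$ one has $1 - |x_0| \ge \epsilon$, so $|p'(x_0)| \le ((n+1)/\epsilon) \|p\|_{[-1,1]} = O(n) \|p\|_{[-1,1]}$ for any fixed constant $\epsilon$.

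The only mild subtlety is the parity issue, which is why the bound degrades from $n$ to $n+1$; this is a harmless artifact of invoking Corollary~\ref{cor:odd-deriv-zero} only for odd degrees. Otherwise the argument is routine: the affine rescaling $t \mapsto x_0 + rt$ sends $[-1,1]$ into $[-1,1]$, so the hypothesis on $\|p\|_{[-1,1]}$ transfers to $\|q\|_{[-1,1]}$, and the factor $r$ appearing by the chain rule is exactly what produces the $1/(1-|x_0|)$ blowup.
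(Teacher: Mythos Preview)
Your proof is correct and follows essentially the same approach as the paper: both reduce to the $x_0=0$ case via the affine change of variables $t\mapsto x_0+(1-|x_0|)t$ (the paper writes it as $x\mapsto (1-x_0)x+x_0$ after assuming $x_0\ge 0$), then invoke Corollary~\ref{cor:odd-deriv-zero}, absorbing the parity issue into the $n+1$ in place of $n$. Your parenthetical about parity is slightly muddled (when $n$ is odd you need no bump at all, so the odd $n'\le n+1$ is simply $n$ itself), but the conclusion $|q'(0)|\le n+1$ is correct in all cases.
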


\begin{proof}
Assume without loss of generality that $x_0 \in [0, 1)$ -- an identical argument holds if $x_0 \in (-1, 0]$. By \corref{cor:odd-deriv-zero}, $|q'(0)| \le (n + 1) \|q\|_{[-1, 1]}$ for any polynomial $q$ of degree at most $n$. Define the degree-$n$ polynomial $q(x) = p((1 - x_0)x + x_0)$. Since $(1 - x_0)x + x_0 \in [-1, 1]$ for every $x \in [-1, 1]$, we have $\|q\|_{[-1, 1]} \le \|p\|_{[-1, 1]}$. Moreover, $q'(x) = p'((1 - x_0)x + x_0)(1-x_0)$, so $q'(0) = p'(x_0)(1 - x_0)$. Therefore,
\[|p'(x_0)| = \frac{|q'(0)|}{1 - |x_0|} \le  \frac{n+1}{1 - |x_0|} \|p\|_{[-1, 1]}.   \]
\end{proof}

We remark that the full Markov-Bernstein inequality guarantees that $|p'(x)| \le \frac{n}{\sqrt{1 - x^2}} \|p\|_{[-1, 1]}$, which has quadratically better dependence on the distance from $x$ to $\pm 1$. However, for $x$ bounded away from $\pm 1$ our bound is asymptotically tight and sufficient for many applications in theoretical computer science. Moreover, we can recover the Markov-Bernstein inequality near $\pm 1$ by considering a different linear program (cf. \subsecref{subsec:markovbernstein1}).
\medskip

\begin{proofof}{\propref{prop:e1-soln}}
We write
\begin{equation} \label{eq:prope1soln} (Ay)_i = \frac{(-1)^m}{2n} + \frac{(-1)^{i+m+1}}{2n} + \frac{1}{n}\sum_{j=1}^{n-1} (-1)^{j+m}\cos^{i-2}\left(\frac{j\pi}{n}\right).\end{equation}
Our goal is to show that $(Ay)_i = 1$ for $i = 1$, and $(Ay)_i = 0$ for all other $i$.
The case where $i$ is even is easy. Since $\cos(\pi - \theta) = -\cos\theta$, the terms in the sum naturally pair up. Specifically,
\[(-1)^{j+m}\cos^{i-2}\left(\frac{j\pi}{n}\right) + (-1)^{(n-j)+m}\cos^{i-2}\left(\frac{(n-j)\pi}{n}\right) = 0,\]
so the sum in \eqref{eq:prope1soln} is clearly zero.

Now suppose $i$ is odd and larger than $1$. Then \lemref{lem:odd-cosine-power-sum} in Appendix \ref{app:markovbernstein} implies that $(Ay)_i = 0$. All that remains is the case of $i = 1$. We write the sum explicitly as
\[(Ay)_1 = \frac{(-1)^m}{n} + \frac{1}{n} \sum_{j=1}^{n-1} (-1)^{j+m} \sec \left(\frac{j \pi}{n}\right).\]
By \lemref{lem:sec-sum} in Appendix \ref{app:markovbernstein}, this evaluates to $1$.
\end{proofof}

\subsection{Proving the Markov-Bernstein Inequality at $x=1$} \label{subsec:markovbernstein1}
A similar strategy allows us to bound the derivative of a degree-$n$ polynomial $p$ at the point $x = 1$. We can expand $p(x)$ around $1$ as $p(x) = c_n(x-1)^n + c_{n-1}(x-1)^{n-1} + \dots + c_1(x-1) + c_0$. Then $p'(1) = c_1$. A modest update to {\sc Primal} captures the problem of maximizing $p'(1)$ subject to boundedness constraints at the Chebyshev nodes. 

\[\boxed{ \begin{array}{rl} 
    \max      & c_1   \\
    \mbox{such that} &   \sum_{i=0}^n c_i (x-1)^i \leq 1, \mbox{ } \forall x=\cos(j\pi/n),  \mbox{ }  0 \le j \le n, \ j \text{ even} \\
              	  &-\sum_{i=0}^n c_i (x-1)^i \leq 1,\mbox{ }  \forall x = \cos(j\pi/n),  \mbox{ } 0 \le j \le n, \ j \text{ odd}
    \end{array}}
\]
The dual linear program takes the form
\[ \boxed{ \begin{array}{rl} 
    \min      & \sum_{i=0}^n y_i     \\
    \mbox{such that} &  By = e_1\\
    & y_j \geq 0 \mbox{ } \forall j \in \{0, \dots, n\}
    \end{array}
    }
\]
where $B_{0,0} = 1$ and $B_{ij} = (-1)^j(\cos(j\pi / n) - 1)^i$ otherwise. The determinant of $B$ is, up to sign, a Vandermonde determinant, and in particular is nonzero. Thus, $By = e_1$ has a unique solution. Again, we can write down this solution explicitly.

\begin{proposition} \label{prop:e1-soln-at-1}
Let $n$ be a natural number, and define the $(n + 1) \times (n + 1)$ matrix $B$ as above. Then 
\[y = \left(\frac{2n^2 + 1}{6}, \csc^2\left(\frac{\pi}{2n}\right), \csc^2\left(\frac{2\pi}{2n}\right), \dots, \csc^2\left(\frac{(n-1)\pi}{2n}\right), \frac{1}{2}\right)\]
is the unique solution to $By = e_1$.
\end{proposition}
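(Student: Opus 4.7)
The plan is to mirror the proof of Proposition~\ref{prop:e1-soln}: uniqueness is automatic because, as noted in the text, $\det B$ is (up to sign) a Vandermonde determinant in the $n+1$ distinct values $\cos(j\pi/n) - 1$, so it suffices to check the equation $By = e_1$ coordinate by coordinate. The key algebraic simplification is the half-angle identity $\cos(j\pi/n) - 1 = -2\sin^2(j\pi/(2n))$, which meshes with $y_j = \csc^2(j\pi/(2n))$ to collapse
\[(-1)^j (\cos(j\pi/n) - 1)^i \, y_j = (-1)^j (-2)^i \sin^{2i-2}\!\left(\frac{j\pi}{2n}\right)\]
for every $1 \le j \le n-1$ and $i \ge 1$.

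First I would dispatch the rows $i \ge 1$. Here the $j = 0$ column contributes $0$, and the $j = n$ column contributes $(-1)^n (-2)^i / 2$, since $y_n = 1/2$ rather than the ``pattern value'' $\csc^2(\pi/2) = 1$. Thus
\[(By)_i = (-2)^i \left[\sum_{j=1}^{n-1} (-1)^j \sin^{2i-2}\!\left(\frac{j\pi}{2n}\right) + \frac{(-1)^n}{2}\right].\]
For $i = 1$ the bracket is a direct parity check: $\sum_{j=1}^{n-1}(-1)^j$ equals $-1$ for even $n$ and $0$ for odd $n$, so the bracket is $-1/2$ in both cases and $(By)_1 = 1$. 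For $2 \le i \le n$ the bracket must vanish, reducing the task to the trigonometric identity
\[\sum_{j=1}^{n-1}(-1)^j \sin^{2i-2}\!\left(\frac{j\pi}{2n}\right) = -\frac{(-1)^n}{2},\]
which I would isolate as a lemma for the appendix alongside \lemref{lem:odd-cosine-power-sum}. Its proof is by expanding $\sin^{2i-2}$ via the binomial theorem applied to $(e^{i\theta} - e^{-i\theta})^{2i-2}$ and summing the resulting truncated geometric series in powers of $-e^{i\pi(i-1-k)/n}$; the range $2 \le i \le n$ is precisely what guarantees that none of these bases equals $1$, so no resonant term appears.

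Next I would handle the row $i = 0$. Using $B_{0,0} = 1$ and $B_{0,j} = (-1)^j$ for $j \ge 1$, the requirement $(By)_0 = 0$ becomes
\[\frac{2n^2 + 1}{6} + \sum_{j=1}^{n-1}(-1)^j \csc^2\!\left(\frac{j\pi}{2n}\right) + \frac{(-1)^n}{2} = 0.\]
This is a second trigonometric identity (easy to sanity-check at $n = 2, 3$, where both sides agree at $-2$ and $-8/3$ respectively). I would prove it by differentiating a standard alternating cotangent-sum identity at half-integer arguments, using $\csc^2\theta = -\frac{d}{d\theta}\cot\theta$, or equivalently by a partial-fraction expansion around the roots of $z^{2n} + 1$.

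The main obstacle is the $i \ge 2$ sine-power identity. The bookkeeping is parity-sensitive precisely because the right endpoint $y_n = 1/2$ differs from the ``natural'' value $\csc^2(\pi/2) = 1$, and the proof must confirm that this $1/2$ correction is exactly what absorbs the $(-1)^n$ residue of the sine-power sum for every $i$ in the required range. Once both trigonometric identities are established in the appendix, the verification of $By = e_1$ reduces to the short parity argument sketched above, and uniqueness from the Vandermonde structure completes the proof.
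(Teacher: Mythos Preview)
Your proposal is correct and follows essentially the same route as the paper: both use the half-angle identity to rewrite $(By)_i$ for $i\ge 1$ as $(-2)^i$ times an alternating sum of $\sin^{2i-2}(j\pi/(2n))$, dispatch $i=1$ by the elementary parity count, and reduce $i\ge 2$ and $i=0$ respectively to the two trigonometric identities you isolate (which the paper records as \lemref{lem:even-sine-power-sum} and \lemref{lem:alt-csc-sum}). The only cosmetic difference is in how those auxiliary identities are proved---your binomial/geometric-series expansion for the sine powers is the same calculation as the paper's power-reduction-plus-\lemref{lem:alt-cosine} argument, and your differentiated cotangent sum is an alternative to the paper's citation of Jolley.
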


\begin{proof}
We just need to show that $By = e_1$. First, if $i \ne 0$ then
\[(By)_i = \frac{(-1)^n(-2)^i}{2} + \sum_{j=1}^{n-1} (-1)^j \csc^2 \left(\frac{j\pi}{2n}\right) \left(\cos\left(\frac{j\pi}{n}\right) - 1\right)^i.\]
Using the half-angle identity $\sin^2 (\theta / 2) = (1 - \cos\theta)/2$, this becomes
\[(By)_i = (-1)^{i+n}2^{i-1} + (-1)^i2^i\sum_{j=1}^{n-1} (-1)^{j} \sin^{2i - 2} \left(\frac{j\pi}{2n}\right).\]
If $i = 1$, then the sine terms are identically $1$ so $(By)_1$ evaluates to $1$ (note that the calculation is slightly different depending on whether $n$ is even or odd). If $i > 1$, then by \lemref{lem:even-sine-power-sum} in Appendix \ref{app:markovbernstein}, the sum of sine terms evaluates to $\frac{1}{2}(-1)^n - (-1)^n = -\frac{1}{2} (-1)^n$. Therefore, $(By)_i = 0$ for all $i > 1$.

Finally, we need to show that $(By)_0 = 0$. We expand
\[(By)_0 = \frac{2n^2 + 1}{6} + \frac{1}{2}(-1)^n + \sum_{j=1}^{n-1}(-1)^j \csc^2 \left(\frac{j\pi}{2n}\right).\]
By \lemref{lem:alt-csc-sum}, this evaluates to 0.
\end{proof}

\begin{corollary} \label{cor:markovatone}
If $p$ is a polynomial of degree $n$ with $\|p\|_{[-1, 1]} \le 1$, then $p'(1) \le n^2$.
\end{corollary}

\begin{proof}
Let $y$ be as in \propref{prop:e1-soln-at-1}. Notice that $y_j \geq 0$ for all $j \in \{0, \dots, n\}$. Combined with Proposition \ref{prop:e1-soln-at-1}, it is clear that $y$ is dual-feasible.
By \lemref{lem:csc-sum},
\begin{align*}
\sum_{j=0}^n y_j &= \frac{2n^2 + 1}{6} + \frac{1}{2} + \sum_{j=1}^{n-1} \csc^2 \left(\frac{j\pi}{2n}\right) \\
&= \frac{n^2}{3} + \frac{2}{3} + \frac{4n^2 - 4}{6} = n^2.
\end{align*}
\end{proof}

By combining Corollary \ref{cor:markovatone} with a shifting and scaling argument similar to the one used to prove \corref{cor:scale},
we recover an asymptotic statement of Markov's inequality for the first derivative of a constrained polynomial.

\begin{corollary} \label{cor:scale2}
If $p$ is a polynomial of degree $n$, then for all  $x_0 \in [-1,1]$ with $x_0 \ne 0$, $|p'(x_0)| \le  \frac{n^2}{|x_0|} \|p\|_{[-1,1]}$. Thus, for any constant $\eps \in (0, 1)$, $\|p'\|_{[-1, -\eps] \cup [\eps, 1]} = O(n^2) \|p\|_{[-1, 1]}$.
\end{corollary}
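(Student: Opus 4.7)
The plan is to reduce the bound at an arbitrary nonzero $x_0 \in [-1,1]$ to the bound at $x = 1$ that was just established in Corollary~\ref{cor:markovatone}, by a linear change of variables analogous to the one used in the proof of Corollary~\ref{cor:scale}. First, applying Corollary~\ref{cor:markovatone} to both $p$ and $-p$ upgrades that statement to the two-sided inequality $|p'(1)| \le n^2 \|p\|_{[-1,1]}$ for every polynomial $p$ of degree at most $n$. Next, by considering $p(-x)$ in place of $p(x)$ it suffices to treat $x_0 \in (0, 1]$.

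For such an $x_0$, I would define the degree-$n$ polynomial $q(x) = p(x_0 x)$. Since $x_0 x \in [-|x_0|, |x_0|] \subseteq [-1,1]$ whenever $x \in [-1,1]$, we get $\|q\|_{[-1,1]} \le \|p\|_{[-1,1]}$. Differentiating, $q'(x) = x_0\, p'(x_0 x)$, so $q'(1) = x_0\, p'(x_0)$. Applying the two-sided version of Corollary~\ref{cor:markovatone} to $q$ gives
\[|p'(x_0)| \;=\; \frac{|q'(1)|}{x_0} \;\le\; \frac{n^2}{x_0}\, \|q\|_{[-1,1]} \;\le\; \frac{n^2}{|x_0|}\, \|p\|_{[-1,1]},\]
which is the first claim. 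The second claim, $\|p'\|_{[-1,-\eps] \cup [\eps,1]} = O(n^2)\|p\|_{[-1,1]}$, is then immediate because $|x_0| \ge \eps$ on that set.

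There is essentially no obstacle here: all of the substance lives in Corollary~\ref{cor:markovatone}, and the remaining argument is a routine rescaling. The only thing to be careful about is that the scaling $x \mapsto x_0 x$ sends $[-1,1]$ \emph{into} rather than \emph{onto} $[-1,1]$ when $|x_0| < 1$, so that we genuinely have $\|q\|_{[-1,1]} \le \|p\|_{[-1,1]}$ and the transfer of the bound is valid.
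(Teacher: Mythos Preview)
Your argument is correct and essentially identical to the paper's: the paper likewise invokes Corollary~\ref{cor:markovatone} together with the auxiliary polynomial $q(x) = p(|x_0|x)$, which is exactly your rescaling after your reduction to $x_0 > 0$. Nothing more is needed.
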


\begin{proof}
The argument is the same as in the proof of \corref{cor:scale}, except we instead use the auxiliary polynomial $q(x) = p(|x_0| x)$.
\end{proof}

Combining this with \corref{cor:scale}, we recover an asymptotically tight version of Markov's inequality for the whole interval $[-1, 1]$.

\begin{corollary} \label{cor:finalmarkov}
If $p$ is a polynomial of degree $n$, then for all  $x \in [-1,1]\texttt{•},$ $|p'(x)| \le  O(n^2) \|p\|_{[-1,1]}$.
\end{corollary}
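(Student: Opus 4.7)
The plan is to combine Corollary \ref{cor:scale} and Corollary \ref{cor:scale2} by splitting the interval $[-1,1]$ into a region near $0$ and a region near the endpoints $\pm 1$. The key observation is that Corollary \ref{cor:scale} gives a strong $O(n)$ bound when $x$ is bounded away from $\pm 1$, while Corollary \ref{cor:scale2} gives an $O(n^2)$ bound that is useful precisely when $x$ is bounded away from $0$. Together, these two regions cover all of $[-1,1]$.

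Concretely, I would fix a threshold, say $1/2$, and proceed by cases. For any $x \in [-1, 1]$ with $|x| \le 1/2$, applying Corollary \ref{cor:scale} yields
\[
|p'(x)| \le \frac{n+1}{1 - |x|} \|p\|_{[-1,1]} \le 2(n+1) \|p\|_{[-1,1]} = O(n) \|p\|_{[-1,1]},
\]
which is certainly $O(n^2) \|p\|_{[-1,1]}$. For $x \in [-1,1]$ with $|x| \ge 1/2$, Corollary \ref{cor:scale2} gives
\[
|p'(x)| \le \frac{n^2}{|x|} \|p\|_{[-1,1]} \le 2 n^2 \|p\|_{[-1,1]} = O(n^2) \|p\|_{[-1,1]}.
\]
Taking the maximum over both cases yields the claimed uniform bound on all of $[-1,1]$.

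There is essentially no obstacle here: the hard work was done in proving the two corollaries via the dual LP solutions of Propositions \ref{prop:e1-soln} and \ref{prop:e1-soln-at-1}, together with the shift-and-scale trick used to translate the derivative bounds at $0$ and at $1$ to bounds at arbitrary points. The only minor subtlety is that Corollary \ref{cor:scale2} excludes $x_0 = 0$, but this point lies in the ``near $0$'' case and is already handled by Corollary \ref{cor:scale}, so no boundary issue arises. This completes the proof.
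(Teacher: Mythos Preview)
Your proof is correct and follows exactly the approach the paper indicates: combine Corollary~\ref{cor:scale} (for $|x|$ bounded away from $1$) with Corollary~\ref{cor:scale2} (for $|x|$ bounded away from $0$), splitting at any fixed threshold such as $1/2$. The paper leaves this combination implicit, and your write-up simply makes the case split explicit.
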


\subsection{Markov's inequality for higher derivatives}
In 1892, V. Markov proved the following generalization of the Markov-Bernstein inequality to higher derivatives. Let $p$ be a real polynomial of degree at most $n$, and let $T_n$ be the $n$th Chebyshev polynomial of the first kind. Then
\[|p^{(k)}(x)| \le T^{(k)}_n(1) \|p\|_{[-1, 1]}\]
for every $x \in [-1, 1]$. We use complementary slackness to prove an important special case of this inequality, namely that $p^{(k)}(1) \le  T^{(k)}_n(1) \|p\|_{[-1, 1]}$. 

While A. A. Markov's inequality for the first derivative has a short proof (see \cite{bowlinggreen} for a proof using tools from approximation theory), the generalization to higher derivatives is considered a deep theorem \cite{shadrin}. The shortest known proof of this theorem proceeds in two steps \cite[Section 3.1]{shadrin}.
In the first step, it is shown that among all points $x \in [-1, 1]$, the quantity $\sup_{p \in B} |p^{(k)}(x)|$ is maximized at $x=1$, where again $B$ is the set of degree $n$ polynomials $p$ with real coefficients such that $\|p\|_{[-1, 1]} \le 1$. In the second step, it is shown that 
$p^{(k)}(1) \le  T^{(k)}_n(1) \|p\|_{[-1, 1]}$. It is this second step that we prove here using complementary slackness.

The following lemma, found in \cite{heineman}, relates the determinant of a Vandermonde matrix having the degrees of the monomials in its last $(n-k)$ rows incremented by $1$ to the determinant of an ordinary Vandermonde matrix. For each integer $0 \le k \le n$, we define the elementary symmetric polynomial
\[e_k(x_1, \dots, x_n) = \sum_{1 \le j_1 < j_2 < \dots < j_k \le n} x_{j_1} x_{j_2} \dots x_{j_k}.\]

\begin{lemma} \label{lemma:20}
Let $0 \le k \le n$. Then
\[\left|
\begin{array}{cccc}
1 & 1 & 1 & 1 \\
x_1 & x_2 & \dots & x_n \\
\vdots & \vdots & & \vdots \\
x_1^{k-1} & x_2^{k-1} & \dots & x_n^{k-1} \\
x_1^{k+1} & x_2^{k+1} & \dots & x_n^{k+1} \\
\vdots & \vdots & & \vdots \\
x_1^{n} & x_2^{n} & \dots & x_n^{n} \\
\end{array}
\right| = e_{n-k}(x_1, x_2, \dots, x_n)
\left|
\begin{array}{cccc}
1 & 1 & 1 & 1 \\
x_1 & x_2 & \dots & x_n \\
\vdots & \vdots & & \vdots \\
x_1^{k} & x_2^{k} & \dots & x_n^{k} \\
\vdots & \vdots & & \vdots \\
x_1^{n-1} & x_2^{n-1} & \dots & x_n^{n-1} \\
\end{array}
\right|\]
\end{lemma}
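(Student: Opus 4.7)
The plan is to introduce a free variable $t$, form an ordinary $(n+1) \times (n+1)$ Vandermonde-type determinant in the variables $x_1, \dots, x_n, t$, and compute it in two ways: once in closed form via the classical Vandermonde product formula, and once by cofactor expansion along the column containing the powers of $t$. Comparing coefficients of $t^k$ on both sides will deliver the identity.

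More precisely, the first step is to define
\[
P(t) := \det\!\begin{pmatrix}
1 & 1 & \cdots & 1 & 1 \\
x_1 & x_2 & \cdots & x_n & t \\
\vdots & \vdots & & \vdots & \vdots \\
x_1^n & x_2^n & \cdots & x_n^n & t^n
\end{pmatrix}.
\]
On one hand, $P(t)$ is the Vandermonde determinant in the variables $x_1, \dots, x_n, t$, so
\[
P(t) \;=\; V(x_1, \dots, x_n) \cdot \prod_{i=1}^n (t - x_i),
\]
where $V(x_1, \dots, x_n) := \prod_{1 \le i < j \le n}(x_j - x_i)$ is the right-hand Vandermonde determinant appearing in the statement of the lemma. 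Expanding the product $\prod_{i=1}^n (t - x_i) = \sum_{k=0}^n (-1)^{n-k} e_{n-k}(x_1, \dots, x_n)\, t^k$ shows that the coefficient of $t^k$ in $P(t)$ equals $(-1)^{n-k}\, V(x_1, \dots, x_n)\, e_{n-k}(x_1, \dots, x_n)$.

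On the other hand, expanding $P(t)$ along its last column yields $P(t) = \sum_{k=0}^n (-1)^{(k+1)+(n+1)}\, t^k M_k$, where $M_k$ is the $n \times n$ minor obtained by deleting the last column and the row containing $t^k$ (equivalently, deleting the row of $k$th powers). But $M_k$ is exactly the skip-row Vandermonde-type determinant on the left-hand side of the lemma. The sign is $(-1)^{n+k}$, which has the same parity as $(-1)^{n-k}$. Matching coefficients of $t^k$ in the two expressions for $P(t)$ and canceling the common sign yields $M_k = V(x_1, \dots, x_n)\, e_{n-k}(x_1, \dots, x_n)$, which is precisely the claimed identity.

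I do not anticipate a serious obstacle; the only thing that needs care is tracking the cofactor signs so that the parities of $(-1)^{n+k}$ on the expansion side and $(-1)^{n-k}$ on the product side match (they do, since their difference is $2k$). The argument is clean because it only uses the classical Vandermonde product formula and the generating-function identity for elementary symmetric polynomials; no case analysis on $k$ is needed.
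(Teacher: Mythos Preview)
Your proof is correct. The argument via the auxiliary $(n+1)\times(n+1)$ Vandermonde determinant in $x_1,\dots,x_n,t$ is the standard and cleanest route: the product formula gives the coefficient of $t^k$ as $(-1)^{n-k}e_{n-k}(x_1,\dots,x_n)V(x_1,\dots,x_n)$, cofactor expansion along the last column gives it as $(-1)^{n+k}M_k$, and the signs match.

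There is nothing to compare against in the paper itself: the authors do not prove Lemma~\ref{lemma:20} but simply cite it from Heineman~\cite{heineman}. Your self-contained argument is therefore a strict addition rather than an alternative to anything in the text. If you want to align with the cited source, Heineman's paper treats generalized Vandermonde determinants with arbitrary exponent sets via ratios to the ordinary Vandermonde and identifies the quotient with Schur-like symmetric functions; your argument is the specialization of that theory to the case where exactly one exponent is skipped, and is more direct for this particular statement.
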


\begin{proposition}
Let $p$ be a polynomial of degree $n$ with $|p(x)| \le 1$ for $x \in [-1, 1]$. Then 
\[p^{(k)}(1) \le T^{(k)}_n(1)\]
where $T_n(x)$ is the $n$-th Chebyshev polynomial of the first kind.
\end{proposition}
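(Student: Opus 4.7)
The plan is to apply the LP-duality framework used in Sections 4.1 and 4.2, but with the objective adapted to the $k$-th derivative and with complementary slackness used to identify the dual optimum. Expanding $p$ about $1$ as $p(x) = \sum_{i=0}^n c_i(x-1)^i$ gives $p^{(k)}(1) = k!\, c_k$, so it suffices to upper bound $c_k$. I would set up the primal LP maximizing $c_k$ subject to $\sum_i c_i(x_j - 1)^i \le 1$ for even $j$ and $-\sum_i c_i(x_j - 1)^i \le 1$ for odd $j$, where $x_j = \cos(j\pi/n)$ are the Chebyshev nodes. The corresponding dual minimizes $\sum_{j=0}^n y_j$ over $y \ge 0$ satisfying $By = e_{k+1}$, where $B$ is the matrix defined in \propref{prop:e1-soln-at-1} and $e_{k+1}$ is the basis vector with a $1$ in position $k$.

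The key observation is that $T_n$ is primal feasible (with objective value $T_n^{(k)}(1)/k! > 0$) and, by the equioscillation identity $T_n(x_j) = (-1)^j$, makes \emph{every} primal constraint tight. A brief argument — any primal-improving direction $d$ must alternate in sign across the $n+1$ Chebyshev nodes, which is impossible for $\deg d \le n$ unless $d \equiv 0$ — shows that $T_n$ is in fact primal optimal. Complementary slackness then imposes no support restrictions on the dual optimum, and since $B$ has nonzero determinant (its determinant is, up to sign, the Vandermonde in $u_j := \cos(j\pi/n) - 1$), the unique solution of $By = e_{k+1}$ is forced by strong duality to be nonnegative and to sum to $T_n^{(k)}(1)/k!$. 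Weak duality applied to any $p$ with $\|p\|_{[-1,1]} \le 1$ then yields $p^{(k)}(1) \le T_n^{(k)}(1)$.

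To make the argument fully constructive, I would compute $y$ via Cramer's rule: $y_j = \det(B_{(j)})/\det(B)$, where $B_{(j)}$ replaces column $j$ of $B$ by $e_{k+1}$. Expanding $\det(B_{(j)})$ along the replaced column reduces the computation to the determinant of a Vandermonde-style matrix in the $u_i$'s with the $k$-th row removed; \lemref{lemma:20} rewrites this as $e_{n-k}$ of the remaining $u_i$'s times a full Vandermonde, which cancels most factors in $\det(B)$ and leaves a clean closed form for $y_j$ as a ratio of elementary symmetric polynomials in the $u_i$'s.

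The main obstacle I anticipate is verifying non-negativity of this closed form directly. Since every $u_i \le 0$, one has $\mathrm{sgn}(e_r(u_{i_1}, \dots, u_{i_\ell})) = (-1)^r$ on any nontrivial subset, and combining these signs with the $(-1)^j$ Cramer's-rule factors requires careful parity bookkeeping. The complementary slackness argument above sidesteps this computation entirely — nonnegativity is forced by the existence of a dual optimum guaranteed by strong duality — and hence completes the proof without an explicit sign analysis.
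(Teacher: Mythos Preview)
Your alternation argument for primal optimality of $T_n$ is incorrect, and this is the load-bearing step of your proposed sidestep. A feasible direction $q$ at $T_n$ must satisfy $(-1)^j q(x_j) \le 0$ for $j = 0, \dots, n$, but this \emph{weak} alternation does not force $q \equiv 0$ for $\deg q \le n$. Indeed $q = -T_n$ itself satisfies these inequalities, as does every Lagrange-type polynomial $q_{j_0} = -\prod_{j \ne j_0}(x - x_j)$; the recession cone of the relaxed primal is a full $(n{+}1)$-dimensional simplicial cone, not $\{0\}$. What you actually need is the weaker (but still nontrivial) statement that every such $q$ has $q^{(k)}(1) \le 0$.

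That statement is true, but proving it is exactly the computation you are trying to avoid. Writing any feasible direction via Lagrange interpolation as a nonnegative combination of the extreme rays $q_{j_0}$ and checking that each $q_{j_0}$ has nonpositive $(x-1)$-expansion is the same calculation, in different clothing, as the Cramer's-rule sign analysis. More abstractly, your appeal to strong duality is circular: strong duality gives a dual optimum only if the primal is bounded, and the primal is bounded iff $d_k \le 0$ on the recession cone, which by Farkas is equivalent to dual feasibility --- i.e., to $y \ge 0$. So ``strong duality forces $y \ge 0$'' presupposes what it claims to prove.

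The paper therefore does not sidestep the sign check. It runs your Cramer's-rule computation through \lemref{lemma:20} (exactly as you outline) and then carries out the ``careful parity bookkeeping'' you flagged: it determines the sign of $\det B$, expands $\det B_j$ along the replaced column, and uses that every $\alpha_i = \cos(i\pi/n) - 1 \le 0$ so that $e_{n-k}(\alpha_0,\dots,\widehat{\alpha_j},\dots,\alpha_n)$ has sign $(-1)^{n-k}$. Tracking all factors shows each $y_j > 0$; complementary slackness is then invoked in the \emph{opposite} direction from yours (dual feasibility $+$ all primal constraints tight $\Rightarrow$ $T_n$ is primal optimal). Your writeup already contains all the right ingredients for this --- you just need to commit to the sign analysis rather than route around it.
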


\begin{proof}
This is obvious if $k = 0$, since $T_n(1) = 1$, so we assume $k > 0$. Recall the expansion $p(x) = c_n(x-1)^n + c_{n-1}(x-1)^{n-1} + \dots + c_1(x-1) + c_0$. Then the $k$-th derivative of $p$ at $1$ is simply $k!c_k$. We consider the linear program

\[ \boxed{\begin{array}{rl} 
    \max      & k!c_k   \\
    \mbox{such that} &   \sum_{i=0}^n c_i (x-1)^i \leq 1, \mbox{ } \forall x=\cos(j\pi/n),  \mbox{ }  0 \le j \le n, \ j \text{ even} \\
              	  &-\sum_{i=0}^n c_i (x-1)^i \leq 1,\mbox{ }  \forall x = \cos(j\pi/n),  \mbox{ } 0 \le j \le n, \ j \text{ odd}
    \end{array}}
\]
and its dual
\[ \boxed{\begin{array}{rl} 
    \min      & \sum_{i=0}^n y_i     \\
    \mbox{such that} &  By = k!e_k\\
    & y_j \geq 0 \mbox{ } \forall j \in \{0, \dots, n\}
    \end{array}}
\]
where $B_{0,0} = 1$ and $B_{ij} = (-1)^j(\cos(j\pi / n) - 1)^i$ otherwise. Notice that all primal constraints are tight for the primal solution corresponding to $T_n$, the degree $n$ Chebyshev polynomial of the first kind.

The determinant of $B$ is, up to sign, a Vandermonde determinant, and in particular is nonzero. Thus, $By = k!e_k$ has a unique solution. If we can show that this solution has positive entries, complementary slackness (cf. \cite[pg. 95]{schrijver})
implies that $T_n$ is a primal optimal solution, and the result will follow.

We now use Cramer's rule to investigate the solution to $By = k!e_k$. Recall that Cramer's rule tells us that entry $y_j$ is given by $\det B_j / \det B$ where the matrix $B_j$ is obtained from $B$ by replacing its $j$th column with $k!e_k$. Using the formula for the Vandermonde determinant, $\det B$ is given by
\[(-1)^{\lfloor (n+1)/2 \rfloor}\prod_{0 \le j < j' \le n} \left(\cos\left(\frac{j'\pi}{n}\right) - \cos\left(\frac{j\pi}{n}\right)\right).\]
Since $\cos(x)$ is  a decreasing function on the interval $[0, \pi]$, all the terms in the product are negative. Therefore, the sign of $\det B$ is $(-1)^{\lfloor (n+1)/2 \rfloor + {n + 1\choose 2}}$.

For convenience, let $\alpha_j = \cos(j\pi / n) - 1$. Consider the numerator of Cramer's rule for entry $y_j$. This is the determinant of the matrix $B_j$,
\[\begin{pmatrix}
1 & -1 & \dots & (-1)^{j-1} & 0 & (-1)^{j+1} & \dots & (-1)^n \\
0 & -\alpha_1  & \dots & (-1)^{j-1}\alpha_{j-1} & 0 & (-1)^{j+1}\alpha_{j+1} & \dots & (-1)^{n}(-2) \\
\vdots & \vdots & & \vdots & \vdots & \vdots & & \vdots \\
0 & -\alpha_1^k  & \dots & (-1)^{j-1}\alpha_{j-1}^k & k! & (-1)^{j+1}\alpha_{j+1}^k &\dots & (-1)^{n}(-2)^k \\
\vdots & \vdots & & \vdots & \vdots & \vdots & & \vdots \\
0 & -\alpha_1^n  & \dots & (-1)^{j-1}\alpha_{j-1}^n & 0 & (-1)^{j+1}\alpha_{j+1}^n & \dots & (-1)^{n}(-2)^n \\
\end{pmatrix}.\]
Taking the cofactor expansion along the replaced column, and factoring out $-1$ from each of the appropriate columns gives
\[k!(-1)^{\lfloor (n+1)/2 \rfloor+j} \cdot (-1)^{j+k}\left|
\begin{array}{cccccc}
1 & \dots & 1 & 1 & \dots & 1 \\
0 & \dots & \alpha_{j-1} & \alpha_{j+1} & \dots & -2 \\
\vdots & & \vdots & \vdots & \dots & \vdots \\
0 & \dots & \alpha_{j-1}^{k-1} & \alpha_{j+1}^{k-1} & \dots & (-2)^{k-1} \\
0 & \dots & \alpha_{j-1}^{k+1} & \alpha_{j+1}^{k+1} & \dots & (-2)^{k+1} \\
\vdots & & \vdots & \vdots & \dots & \vdots \\
0 & \dots & \alpha_{j-1}^{n} & \alpha_{j+1}^{n} & \dots & (-2)^{n} \\
\end{array}
\right|.\]
The matrix satisfies the conditions of Lemma \ref{lemma:20}, so we can write this as
\[k!(-1)^{\lfloor (n+1)/2 \rfloor+k}e_{n-k}(\alpha_0, \dots, \alpha_{j-1}, \alpha_{j+1}, \dots, \alpha_n) \prod_{\substack{0 \le i < i' \le n \\ i, i' \ne j}} (\alpha_{i'} - \alpha_i).\]
There are ${n \choose 2}$ strictly negative terms in the product, and as long as $k > 0$, $e_{n-k}$ has sign $(-1)^{n-k}$. So the sign of the whole product is $(-1)^{\lfloor (n+1)/2 \rfloor + n + {n \choose 2}}$. Dividing by the sign of $\det B$, we get $(-1)^{n + {n \choose 2} - {n+1 \choose 2}} = 1$.
\end{proof}

\section{Conclusion}
The approximate degree is a fundamental measure of the complexity of a Boolean function, with pervasive applications throughout theoretical computer science.
We have sought to advance our understanding of this complexity measure by resolving the approximate degree of the AND-OR tree, and reproving known lower bounds through
the construction of explicit dual witnesses. Nonetheless, few general results on approximate degree are known, and many interesting open questions remain. 

\begin{itemize}
\item Our understanding of the approximate degree of fundamental classes of functions remains incomplete. For example, the approximate degree of $\text{AC}^0$ remains open \cite{beame, aaronsonshi}: the best known lower bound is $\tilde{\Omega}(n^{2/3})$ \cite{aaronsonshi}, while no $o(n)$ upper bound is known. It is also open to determine the least approximate degree of any ``approximate majority'' function
(see \cite[Page 11]{openproblems}).\footnote{This open problem is due to Srikanth Srinivasan.} 

\item While polynomial relationships are known between approximate degree and other complexity measures such as decision-tree depth, exact degree (i.e. $\deg_0$), and block sensitivity, it is still open to determine the largest possible gaps between these quantities. For instance, the exact degree of the $\text{OR}_n$ function is $n$, exhibiting a quadratic gap between $\deg_0$ and $\widetilde{\deg}$, which is the largest known. Is this separation the best possible?

\item Finally, the proof of our lower bound on the approximate degree of the \text{AND-OR} tree relied crucially on the fact that a dual polynomial for OR has one-sided error. This same observation was used by Gavinsky and Sherstov \cite{gavinskysherstov} to separate the multiparty communication versions of NP and co-NP, and very recently by the current authors \cite{bunthaler} to derive new discrepancy and threshold weight bounds for $\text{AC}^0$. What other functions have dual polynomials with one-sided error, and are there further applications for these objects?
\end{itemize}

Resolving these open questions may require moving beyond traditional symmetrization-based
arguments, which transform a polynomial $p$ on $n$ variables into a polynomial $q$ on $m < n$ variables in such a way that $\widetilde{\deg}(q) \leq \widetilde{\deg}(p)$, before obtaining 
a lower bound on $\widetilde{\deg}(q)$. Symmetrization necessarily ``throws away'' information about $p$; in contrast, the method of constructing dual polynomials appears to be a very powerful and complete way of reasoning about approximate degree. 
Can progress be made on these open problems by directly constructing good dual polynomials?

\paragraph{Acknowledgements.} We are grateful to Ryan O'Donnell and Li-Yang Tan for posing the problem of proving Markov-type inequalities via the construction of a dual witness,
and to Karthekeyan Chandrasekaran, Troy Lee, Robert \v{S}palek, Jon Ullman, Andrew Wan, and the anonymous ICALP reviewers for valuable feedback on early versions of this manuscript. 

\appendix

\section{Final Details of \thmref{thm:andor}}
\label{app:andor}
\subsection{Proof of Proposition \ref{finalprop}}
\label{app:andor1}
\begin{proof}[]
Let $r = \lfloor 1 / \alpha\rfloor$. Then
\begin{equation} \label{eqsubsetbound}
\mathbf{P}_y[F(z) \ne F(z_1y_1, \dots, z_My_M)] \leq \mathbf{P}_y[F(z) \ne F(z_1w_1, \dots, z_Mw_M) \text{ for some } w \preceq y]
\end{equation}
where $w \preceq y$ if $\{i : w_i = -1\} \subseteq \{i : y_i = -1\}$. By monotonicity, it suffices to bound the right hand side under the assumption that each bit of $y$ takes the value $-1$ independently with probability exactly $1/r$.

Consider a matrix $Y \in \{-1, 1\}^{r \times M}$ where each column is chosen independently at random from the $r$ vectors having a $-1$ in one slot and a $+1$ in all the others. Let $y^1, y^2, \dots, y^r$ denote the rows of $Y$. While these rows are not independent, each is individually a random string whose $i$th bit independently takes the value $-1$ with probability $1/r$. Thus the right-hand side of Expression (\ref{eqsubsetbound}) equals
\begin{align*}
\frac{1}{r} \sum_{j=1}^r \mathbf{P}_Y[F(z) \ne &F(z_1w_1, \dots, z_Mw_M) \text{ for some } w \preceq y^j] \\
&=\frac{1}{r}\mathbf{E}_Y \left[\#\{j : F(z) \ne F(z_1w_1, \dots, z_Mw_M) \text{ for some } w \preceq y^j\}\right]
\end{align*}
The latter count has at most $\operatorname{bs}_z(F)$ nonzero terms because $y^1, \dots, y^j$ are the characteristic vectors of disjoint sets. The asserted inequality follows because $1/r = 1/\lfloor 1/\alpha \rfloor \le 2\alpha$.
\end{proof}

\subsection{Proof of Equation \ref{eq:show3}}
\label{app:andor2}
We prove that the polynomial $\zeta$ defined in \eqref{eq:zeta} satisfies \eqref{eq:show3}, reproduced here for convenience.

$$ \sum_{(x_1, \dots, x_{M}) \in \left(\{-1, 1\}^{N}\right)^M}  \zeta(x_1, \dots, x_{M})\chi_S(x_1, \dots, x_M) =0   \text{ for each } |S| \le d\cdot d'.\quad \quad (\ref{eq:show3})$$

To prove \eqref{eq:show3}, notice that since $\Psi$ is orthogonal on $\{-1, 1\}^M$ to all polynomials of degree at most $d$, we have the Fourier representation
$$\Psi(z) =	\sum_{\substack{T \subseteq \{1, \dots, M\} \\ |T| > d}}	\hat{\Psi}(T) \chi_T(z)$$
for some reals $\hat{\Psi}(T)$. We can thus write
\[\zeta(x_1, \dots, x_M) = 2^M \sum_{|T| > d} \hat{\Psi}(T) \prod_{i \in T} \psi(x_i)\prod_{i \notin T} |\psi(x_i)|.\]
Given a subset $S \subseteq \{1, \dots, M\} \times \{1, \dots, N\}$ with $|S| \le d \cdot d'$, partition $S = (\{1\} \times S_1) \cup \dots \cup (\{M\} \times S_M)$ where each $S_i \subseteq \{1, \dots, N\}$. Then
\begin{align*}
\sum_{(x_1, \dots, x_{M}) \in \left(\{-1, 1\}^{N}\right)^M}& \zeta(x_1, \dots, x_{M})\chi_S(x_1, \dots, x_M) \\
&= 2^M \sum_{|T| > d} \hat{\Psi}(T) \prod_{i \in T} \underbrace{\left( \sum_{x_i \in \{-1, 1\}^N} \psi(x_i) \chi_{S_i}(x_i)\right)} \prod_{i \notin T} \left(\sum_{x_i \in \{-1, 1\}^N} |\psi(x_i)| \chi_{S_i}(x_i)\right).
\end{align*}
Since $|S| \le d \cdot d'$, by the pigeonhole principle, $|S_i| \le d'$ for at least $M - d$ indices $i \in \{1, \dots, M\}$. Thus for each set $T$, at least one of the underbraced factors is zero, as $\chi_{S_i}$ is orthogonal to $\psi$ whenever $|S_i| \le d'$.

\section{Dual Polynomials for Symmetric Functions} \label{app:sym-dual}

\begin{proofof}{\lemref{lemma:min-block}}
Fix an $\ell$ such that $\ell \in [\lfloor ct \rfloor]$ and let $i(k) = tk^2 + 4\ell$. It is enough to show that $\prod_{i' \in S, i' \ne i} |i - i'|$ is minimized at $k = 1$. We can expand this product as
\[\prod_{\substack{i' \in S \\ i' \ne i}} |i - i'| = \prod_{m = 0}^{\lfloor ct \rfloor} \left((tk^2 + 4\ell - (t - 4m)) \prod_{\substack{j = 1 \\ j \ne k}} ^ {\lfloor \sqrt{(n-t+1)/t} \rfloor} |tk^2 + 4\ell - (tj^2 + 4m)|\right) \times \prod_{\substack{m = 0 \\ m \ne \ell}} ^ {\lfloor ct \rfloor} |4\ell - 4m|.\]
Cancelling the factor independent of $k$ and considering each index $m$ separately, we just need to show that for any fixed $0 \le \ell, m \le ct$, the product
\[(tk^2 + 4\ell - (t - 4m))\prod_{j \ne k} |tk^2 + 4\ell - (tj^2 + 4m)|\]
as a function of $k \ge 1$ is minimized at $k = 1$. Divide each factor by $t$ to obtain
\begin{equation}\label{perturbed_prod}
\left(k^2 - 1 + \frac{4(\ell + m)}{t}\right) \prod_{j \ne k} |k^2 - j^2| \left(1 + \frac{4(\ell - m)}{t(k^2 - j^2)}\right).
\end{equation}

We first obtain a lower bound for this expression when $k \ge 2$. Consider the following two facts.

\begin{fact} \label{fact:min-prod}
Let $k \le m$ be nonnegative integers. Then
\begin{equation} \label{eqlemmaminprod}  \prod_{\substack{j \in [m] \\ j \ne k}} |k^2 - j^2| \ge \prod_{\substack{j \in [m] \\ j \ne 1}} |1 - j^2|.\end{equation}
In other words, this product of differences of squares is minimized at $k = 1$.
\end{fact}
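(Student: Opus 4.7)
My plan is to reduce the inequality to a clean closed-form expression for $\pi(k) := \prod_{j \in [m], j \ne k}|k^2 - j^2|$ and then directly verify that this closed form is minimized at $k=1$. The key observation is the factorization $|k^2 - j^2| = |k-j| \cdot (k+j)$, which decouples the product into two telescoping factorial-like pieces.

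More concretely, I would first handle the generic case $1 \le k \le m$. The product $\prod_{j=0, j \ne k}^{m} |k-j|$ evaluates to $k!(m-k)!$, and $\prod_{j=0, j \ne k}^{m} (k+j)$ equals $(k+m)!/((k-1)! \cdot 2k) = (k+m)!/(2 \cdot k!)$, since the full product $\prod_{j=0}^m (k+j) = (k+m)!/(k-1)!$ and we must remove the $j = k$ factor of $2k$. Multiplying gives the clean formula
\[
\pi(k) = \frac{(m-k)!\,(m+k)!}{2} \qquad (1 \le k \le m).
\]

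With this in hand, the comparison $\pi(k) \ge \pi(1)$ for $k \ge 2$ reduces to showing that $(m-k)!(m+k)! \ge (m-1)!(m+1)!$. Writing the ratio as
\[
\frac{\pi(k)}{\pi(1)} = \frac{(m+2)(m+3)\cdots(m+k)}{(m-k+1)(m-k+2)\cdots(m-1)},
\]
each product has exactly $k-1$ positive factors (positivity of the denominator uses $k \le m$), and pairing the $i$-th factor of the numerator with the $i$-th factor of the denominator gives a ratio $(m+1+i)/(m-k+i) > 1$ since their difference is $k+1 > 0$. Hence the entire product exceeds $1$.

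The remaining case is $k = 0$, which the formula above does not cover. A direct evaluation yields $\pi(0) = (m!)^2$, and a one-line calculation gives $\pi(0)/\pi(1) = 2m/(m+1) \ge 1$ for $m \ge 1$ (with equality only at $m = 1$, which is a degenerate case). Combining these three pieces finishes the proof. There is no serious obstacle here: the only subtlety is correctly handling the excluded index $j = k$ when collapsing the two products into factorials, and making sure the $k = 0$ boundary is treated separately from the uniform formula valid for $k \ge 1$.
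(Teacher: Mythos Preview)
Your proof is correct and follows essentially the same approach as the paper: both factor $|k^2-j^2|=|k-j|(k+j)$, collapse the products to obtain the closed form $\pi(k)=\tfrac{1}{2}(m-k)!(m+k)!$ for $k\ge 1$, and then show the ratio $\pi(k)/\pi(1)$ is a product of terms each at least $1$. The only cosmetic difference is that the paper dismisses the $k=0$ case as ``clear'' while you compute $\pi(0)/\pi(1)=2m/(m+1)$ explicitly.
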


\begin{proof}[Proof of Fact \ref{fact:min-prod}]
This is clear if $k = 0$, so suppose $k \ge 2$. Then the left-hand side of Expression (\ref{eqlemmaminprod}) can be written as
\[\prod_{\substack{j \in [m] \\ j \ne k}} |k^2 - j^2| = \prod_{\substack{j \in [m] \\ j \ne k}} (k + j)|k - j| = \frac{(m + k)!}{2k(k-1)!} \cdot k!(m - k)! = \frac{1}{2} (m + k)!(m-k)!.\]
Taking the ratio of the left-hand side of Expression (\ref{eqlemmaminprod}) to the right gives us
\[\frac{(m+k)!(m-k)!}{(m+1)!(m-1)!} = \frac{(m+k)(m+k-1)\dots(m+2)}{(m-1)(m-2)\dots(m-k+1)}\]
which is a product of numbers that are all at least $1$.
\end{proof}

\begin{fact} \label{fact:thisveryfact}
Let $k$ be a nonnegative integer. Then
\[\sum_{\substack{j \in \mathbb{Z} \\ j \ne k}} \frac{1}{|j^2 - k^2|} \le \frac{\pi^2}{3}.\]
\end{fact}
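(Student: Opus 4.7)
The plan is to reduce everything to Basel's identity $\sum_{n \geq 1} 1/n^2 = \pi^2/6$ via an explicit evaluation of the sum. First I would dispose of the case $k = 0$: by Basel, $\sum_{j \in \mathbb{Z}, j \neq 0} 1/j^2 = 2 \sum_{n \geq 1} 1/n^2 = \pi^2/3$, so the claimed bound is in fact tight at the boundary. I would also briefly note that for $k \geq 1$ the sum must implicitly exclude $j = -k$ as well as $j = k$ (since otherwise the summand at $j = -k$ is undefined); equivalently, we sum over $j$ with $j^2 \neq k^2$.

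For $k \geq 1$, I would use the involution $j \mapsto -j$ to collapse the sum to
\[ f(k) := \sum_{j^2 \neq k^2} \frac{1}{|j^2 - k^2|} = \frac{1}{k^2} + 2 \sum_{\substack{m \geq 1 \\ m \neq k}} \frac{1}{|m^2 - k^2|}, \]
where the $1/k^2$ contribution comes from $j = 0$. I would then apply the partial-fraction identity
\[ \frac{1}{(m-k)(m+k)} = \frac{1}{2k}\left(\frac{1}{m-k} - \frac{1}{m+k}\right), \]
carefully tracking signs. The infinite tail $m \geq k+1$ telescopes after the substitution $n = m - k$ to $\sum_{n \geq 1}\!\left(\frac{1}{n} - \frac{1}{n+2k}\right) = H_{2k}$, where $H_{2k} = \sum_{i=1}^{2k} 1/i$, so the tail contributes $H_{2k}/(2k)$. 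For the finite piece $1 \leq m \leq k-1$ (present only when $k \geq 2$), one has $|m^2 - k^2| = (k-m)(k+m)$ and the analogous identity $\frac{1}{(k-m)(k+m)} = \frac{1}{2k}(\frac{1}{k-m} + \frac{1}{k+m})$ picks up a plus sign; reindexing the two resulting sums gives $H_{k-1} + (H_{2k-1} - H_k) = H_{2k} - 3/(2k)$, so this piece contributes $H_{2k}/(2k) - 3/(4k^2)$. Combining yields the clean closed form
\[ f(k) = \frac{2 H_{2k}}{k} - \frac{1}{2k^2}. \]

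To finish, I would bound $f(k)$ by $\pi^2/3 \approx 3.29$ for every $k \geq 1$. The $k = 1$ case is direct: $f(1) = 2 H_2 - 1/2 = 5/2$. For general $k \geq 1$, the elementary estimate $H_{2k} \leq 3/2 + \ln k$ (obtained by bounding $\sum_{i=3}^{2k} 1/i$ by $\int_2^{2k} dx/x = \ln k$) gives $\frac{2 H_{2k}}{k} \leq \frac{3 + 2\ln k}{k}$, which is a decreasing function of $k$ on $[1, \infty)$ with value $3$ at $k=1$ (its derivative $(-1 - 2\ln k)/k^2$ is manifestly negative). Hence $f(k) \leq 3 < \pi^2/3$ for all $k \geq 1$, with equality in the original fact attained precisely at $k = 0$.

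The only real obstacle is the partial-fraction bookkeeping in the $m < k$ regime, where $m^2 - k^2$ is negative and the relevant identity carries a plus sign rather than a minus; one must reindex both resulting finite sums and merge with the telescoping infinite tail to arrive at the clean expression $\frac{2 H_{2k}}{k} - \frac{1}{2k^2}$. Once that identity is established, the final bound against $\pi^2/3$ is a routine harmonic-number estimate.
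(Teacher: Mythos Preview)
Your proof is correct, but it takes a substantially different route from the paper's. The paper's argument is a two-line comparison: for $j > k$ one has $j^2 - k^2 = (j-k)^2 + 2k(j-k) \geq (j-k)^2$, whence $\sum_{j > k} \frac{1}{|j^2 - k^2|} \leq \sum_{n \geq 1} \frac{1}{n^2} = \pi^2/6$, and ``a similar argument holds for $j < k$.'' (As you implicitly note, this last phrase requires some care for negative $j$, since the comparison $|j^2 - k^2| \geq (j-k)^2$ fails there; one should really fold negative $j$ onto positive $j$ first, or---as in the paper's actual application---restrict to nonnegative $j$ from the outset.) Your approach instead computes the sum exactly via partial fractions and telescoping, obtaining the closed form $f(k) = \frac{2H_{2k}}{k} - \frac{1}{2k^2}$ for $k \geq 1$, and then bounds this by $3 < \pi^2/3$ using the elementary estimate $H_{2k} \leq \frac{3}{2} + \ln k$. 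The paper's method is quicker and uses only Basel; yours is longer but yields sharper information---in particular, it shows the bound is attained only at $k = 0$ and that $f(k) \leq 3$ for $k \geq 1$, with $f(k) \to 0$---and it cleanly handles the $j = -k$ issue that the paper's phrasing glosses over.
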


\begin{proof}[Proof of Fact \ref{fact:thisveryfact}]
First suppose $j > k$. Then
\[j^2 - k^2 = (j-k)^2 + 2jk - 2k^2  > (j - k)^2.\]
Thus
\[\sum_{j > k} \frac{1}{j^2 - k^2} < \sum_{j > k} \frac{1}{(j-k)^2} = \frac{\pi^2}{6}.\]
A similar argument holds for $j < k$.
\end{proof}

Combining the two facts, Expression (\ref{perturbed_prod}) is at least

\begin{align*}
(k^2 - 1) \prod_{j \ne k} \left(1 - \frac{4c}{|k^2 - j^2|}\right) \prod_{j \ne 1} |1 - j^2| &\ge \left(1 - \sum_{j \ne k} \frac{4c}{|k^2 - j^2|}\right)\prod_{j \ne 1} |1 - j^2| \\
&\ge \left(1 - \frac{4c\pi^2}{3}\right)\prod_{j \ne 1} |1 - j^2|
\end{align*}
whenever $k \ge 2$. On the other hand, setting $k = 1$ in Expression (\ref{perturbed_prod}) gives us at most
\begin{align*}
8c \prod_{j\ne 1} |1 - j^2| \left(1 + \frac{4c}{|1 - j^2|}\right) &\le 8c\exp\left(\sum_{j \ne 1} \frac{4c}{|1 - j^2|}\right)\prod_{j\ne 1} |1 - j^2| \\
&\le 8c\exp\left(\frac{4c\pi^2}{3}\right)\prod_{j\ne 1} |1 - j^2| \\
\end{align*}
which is easily verified to be smaller than our lower bound for the $k \ge 2$ case if $c \le 1/32$.
\end{proofof}

\begin{proofof}{\lemref{lemma:term-bound}}
Write
\begin{align*}
|P(r)| &= \frac{\pi_S(i^*)}{n!} \frac{\prod_{j \in [n] \setminus \{r\}} |r - j|}{|r - (i^*-1)| |r - (i^* + 1)|\prod_{j \in S \setminus \{r\}} |r - j|} \\
&\le \frac{1}{n!} \frac{r!(n-r)!}{|r - (i^*-1)| |r - (i^* + 1)|} & \text{by definition of $i^*$}\\
&\le \frac{1}{{n \choose r}} \frac{1}{t^2(k^2 - (1 +4c + 1/t))^2}.
\end{align*}
The bound follows since $c \le 1/32$ and $t \ge 2$. The calculation for $v$ is similar.
\end{proofof}

\begin{proofof}{\lemref{lemma:tail-bound}}
Using the bounds from the previous lemma, as well as the facts that $c \le 1/32$ and $t \ge 2$, the left hand side is at most
\begin{align*}
\sum_{\ell \ne 0} \frac{1}{16\ell^2 - 1} + \sum_{k \ge 2} \sum_{\ell = 0}^{\lfloor ct \rfloor} \frac{1}{t^2(k^2-2)^2} &\le 2 \sum_{\ell = 1}^\infty \frac{1}{15\ell^2} + \frac{ct+1}{t^2} \sum_{k=2}^\infty \frac{1}{(k^2 - 2)^2} \\ 
& \le \frac{\pi^2}{45} +  \frac{17}{64} \sum_{k=2}^\infty \frac{4}{k^4}\\
&= \frac{\pi^2}{45} + \frac{17}{16} \left(\frac{\pi^4}{90} - 1\right) \\
&\le \frac{2}{5}.
\end{align*}
\end{proofof}

\begin{proofof}{\propref{prop:sym-lb}}
By symmetry, we can assume that $t \le n /2$. Moreover, we may assume that $t$ is the largest such integer with $F(t-1) \ne F(t)$. We have already handled a few special cases: The case of $t = 1$ corresponds to \v{S}palek's construction for the OR function \cite{spalek}, and the case of $t = \Omega(n)$ follows from Proposition \ref{prop:maj-lb}. We can therefore assume that $2 \le t \le n/4$. We now consider three separate cases based on which of the terms ${n \choose i^*-1} |P(i^*-1)|, 1, {n \choose i^*+1} |P(i^*+1)|$ is the smallest. In all of these cases, we will show that we can construct a polynomial $Q$ such that $(Q \cdot F) / \|Q\|_1 \ge 1/14$.

\paragraph{Case 1:} ${n \choose i^*-1} |P(i^*-1)|, 1 \ge {n \choose i^*+1} |P(i^*+1)|$.

Recall that by translating $S$ by at most $4ct$ (thereby keeping it a subset of $[n]$), we can assume that $i^* = t$. Let $Q(i) = (-1)^iP(i)$. Then the multilinear polynomial associated to $Q$ has pure high degree $|T| = \Omega(\sqrt{t(n-t+1)})$. The $\ell_1$ norm of $Q$ is
\begin{align*}
\|Q\|_1 &= \sum_{i \in T} {n \choose i} |Q(i)| \\
&\le {n \choose t-1}|Q(t-1)| + 1 + {n \choose t+1}|Q(t+1)| + \frac{2}{5} & \text{by \lemref{lemma:tail-bound}}\\
&\le 2{n \choose t-1}|Q(t-1)| + \frac{7}{5}.
\end{align*}
Choose the sign bit $s$ in the definition of $P$ so that $Q(t) = F(t)$. Since $P(t-1)$ has the same sign as $P(t)$, it holds that $\sgn(Q(t-1)) = \sgn(F(t-1))$. Therefore,
\begin{align*}
Q \cdot F &= {n \choose t-1}|Q(t-1)| + 1 + \sum_{i \in T \setminus \{t-1, t\}}  {n \choose i} F(i) Q(i) \\
&\ge {n \choose t-1}|Q(t-1)| + 1 - {n \choose t+1}|Q(t+1)| - \frac{2}{5} \\
&\ge \frac{1}{2} {n \choose t-1}|Q(t-1)| + \frac{1}{2} - \frac{2}{5} \\
&= \frac{1}{2} {n \choose t-1}|Q(t-1)| + \frac{1}{10}.
\end{align*}
Using the fact that $(A + B) / (C + D) \ge \min(A/C, B/D)$ for positive $A, B, C, D$,
\[\frac{Q \cdot F}{\|Q\|_1} \ge \frac{1}{14}.\]

\paragraph{Case 2:} $1, {n \choose i^*+1} |P(i^*+1)| \ge {n \choose i^*-1} |P(i^*-1)|$.

This time, translate $S$ so that $i^* = t-1$. We remark that under this translation we still have $T \subseteq [n]$, since we assumed $t \ge 2$. The remainder of the analysis is identical to Case 1, interchanging the roles of $t-1$ and $t+1$.

\paragraph{Case 3:} ${n \choose i^*-1} |P(i^*-1)|, {n \choose i^*+1} |P(i^*+1)| \ge 1$.

Translate $S$ so that $i^* = t + 1$, and choose $s$ so that $Q(t) = (-1)^{i^*-1}P(i^*-1) = F(t)$. Observe that $F(t+2) = F(t)$, since we chose $t \leq n/4$ to be the largest such integer with $F(t-1) \neq F(t)$. Then $Q(t+2) = (-1)^{i^*+1}P(i^*+1)$ has the same sign as $F(t+2)$. The $\ell_1$ norm calculation follows as in Case 1 to give
\[\|Q\|_1 \le {n \choose t}|Q(t)| + {n \choose t+2}|Q(t+2)| + \frac{7}{5} \le \frac{17}{10}{n \choose t}|Q(t)| + \frac{17}{10}{n \choose t+2}|Q(t+2)|.\]
The correlation with $F$ is
\begin{align*}
Q \cdot F &= {n \choose t}|Q(t)| + {n \choose t+2}|Q(t+2)| + \sum_{i \in T \setminus \{t, t+2\}}  {n \choose i} F(i) Q(i) \\
&\ge {n \choose t}|Q(t)| + {n \choose t+2}|Q(t+2)| - \frac{7}{5} \\
&\ge \frac{3}{10}{n \choose t}|Q(t)| + \frac{3}{10}{n \choose t+2}|Q(t+2)|,
\end{align*}
so $(Q \cdot F) / \|Q\|_1 \ge 3/17$.
\end{proofof}

\section{Index of Trigonometric Identities}
\label{app:markovbernstein}

\begin{lemma} \label{lem:sec-sq-sum}
Let $n$ be odd. Then
\[\sum_{k=0}^{n-1} \sec^2 \left(\frac{k\pi}{n}\right) = n^2.\]
\end{lemma}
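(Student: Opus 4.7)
My plan is to prove the identity by double logarithmic differentiation of a product formula for $\sin(n\theta)$, followed by specialization at $\theta = \pi/2$. The appeal of this route is that the sum $\sum_{k=0}^{n-1} \sec^2(k\pi/n)$ has the shape of a ``sum of a nice function at equally spaced points,'' which is exactly what one obtains by differentiating $\log \sin(n\theta)$ expressed as a product.

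First, I would invoke (or briefly verify) the classical factorization
\[\sin(n\theta) = 2^{n-1} \prod_{k=0}^{n-1} \sin\!\left(\theta - \frac{k\pi}{n}\right),\]
which holds because both sides are entire functions with the same simple zeros on $[0,\pi)$ and the same leading coefficient in $\theta$ near $0$. Taking the logarithmic derivative of both sides with respect to $\theta$ yields the cotangent identity
\[n \cot(n\theta) = \sum_{k=0}^{n-1} \cot\!\left(\theta - \frac{k\pi}{n}\right).\]
Differentiating once more, and using $(\cot t)' = -\csc^2 t$, I would obtain
\[n^2 \csc^2(n\theta) = \sum_{k=0}^{n-1} \csc^2\!\left(\theta - \frac{k\pi}{n}\right).\]

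Finally, I would specialize at $\theta = \pi/2$. Since $\sin(\pi/2 - k\pi/n) = \cos(k\pi/n)$, the right-hand side collapses to $\sum_{k=0}^{n-1} \sec^2(k\pi/n)$, which is the desired sum. On the left, odd $n$ makes $n\pi/2$ an odd multiple of $\pi/2$, so $|\sin(n\pi/2)| = 1$ and $\csc^2(n\pi/2) = 1$. The identity $\sum_{k=0}^{n-1} \sec^2(k\pi/n) = n^2$ follows immediately.

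The only obstacle is bookkeeping: one must check that $\theta = \pi/2$ avoids the poles of both sides. For odd $n$, $\pi/2 \ne k\pi/n$ for any integer $k$ (this would force $n = 2k$, contradicting parity), so every summand on the right is finite; likewise $n\pi/2$ is not a multiple of $\pi$, so the left side is finite. With that verified, the argument is routine, and the entire proof occupies only a few lines.
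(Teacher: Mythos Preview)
Your proof is correct. The factorization $\sin(n\theta)=2^{n-1}\prod_{k=0}^{n-1}\sin(\theta-k\pi/n)$ holds for odd $n$ (one can check the leading coefficient at $\theta=0$ using $\prod_{k=1}^{n-1}\sin(k\pi/n)=n/2^{n-1}$), and in any case the sign is irrelevant once you take the logarithmic derivative. The two differentiations, the specialization $\theta=\pi/2$, and the parity check that avoids the poles are all sound.

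Your route differs from the paper's. The paper does not derive the sum from first principles; it quotes the identity $\sum_{k=0}^{n-1}\tan^2(\theta+k\pi/n)=n^2\cot(n\pi/2+n\theta)+n(n-1)$ from Jolley's \emph{Summation of Series}, sets $\theta=0$ (where the $\cot$ term vanishes for odd $n$), and then substitutes $\tan^2=\sec^2-1$. So the paper's argument is a two-line reduction to a reference, whereas yours is self-contained: it manufactures the needed $\csc^2$ sum by twice differentiating the product formula for $\sin(n\theta)$ and then shifts by $\pi/2$ to turn $\csc^2$ into $\sec^2$. Your approach has the advantage of not leaning on an external tabulated identity and of explaining \emph{why} such a closed form exists (it is the second logarithmic derivative of $\sin(n\theta)$); the paper's approach is shorter if one is willing to cite Jolley. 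Interestingly, the intermediate identity you obtain, $n\cot(n\theta)=\sum_k\cot(\theta-k\pi/n)$, is essentially equivalent (after one differentiation and the substitution $\cot^2=\csc^2-1$) to the Jolley identity the paper invokes, so the two proofs are cousins rather than unrelated.
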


\begin{proof}
We start with the identity \cite[No. 445]{jolley}
\[\sum_{k=0}^{n-1} \tan^2 \left(\theta + \frac{k\pi}{n}\right) = n^2 \cot \left(\frac{n\pi}{2} + n\theta\right) + n(n-1).\]
Letting $\theta = 0$, this evaluates to $n(n-1)$ as long as $n$ is odd. Substituting $\tan^2(k\pi / n) = \sec^2(k\pi / n) - 1$ into the left-hand side gives the identity.
\end{proof}

\begin{lemma}[{\cite[No. 429]{jolley}}] \label{lem:alt-cosine}
\[\sum_{j=0}^n (-1)^j \cos\left(j\theta\right) = \frac{1}{2} + (-1)^n\frac{\cos((n+1/2)\theta)}{2\cos(\theta/2)}.\]
\end{lemma}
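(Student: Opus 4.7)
My plan is to give a short telescoping argument based on the product-to-sum identity
\[
2\cos(\theta/2)\cos(j\theta) = \cos\bigl((j-\tfrac{1}{2})\theta\bigr) + \cos\bigl((j+\tfrac{1}{2})\theta\bigr).
\]
The idea is to multiply both sides of the claimed identity by $2\cos(\theta/2)$ and show the resulting sum telescopes. This completely avoids complex exponentials and keeps everything within the world of real trigonometric identities used elsewhere in Appendix \ref{app:markovbernstein}.

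Concretely, I would first multiply the left-hand side $\sum_{j=0}^n (-1)^j \cos(j\theta)$ by $2\cos(\theta/2)$ and apply the product-to-sum identity termwise, obtaining two sums:
\[
\sum_{j=0}^n (-1)^j \cos\bigl((j-\tfrac{1}{2})\theta\bigr) \;+\; \sum_{j=0}^n (-1)^j \cos\bigl((j+\tfrac{1}{2})\theta\bigr).
\]
Next, I would reindex the first sum by $j \mapsto j+1$ so that both sums are indexed by $\cos((j+\tfrac{1}{2})\theta)$; the sign factor in the reindexed first sum becomes $-(-1)^j$. Using the evenness of cosine to handle the $j=-1$ term as $\cos(\theta/2)$, the two sums nearly cancel term-by-term and only the endpoint contributions survive, leaving
\[
\cos(\theta/2) + (-1)^n \cos\bigl((n+\tfrac{1}{2})\theta\bigr).
\]
Dividing by $2\cos(\theta/2)$ then yields the claimed formula.

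There isn't really a main obstacle here; the only thing to be careful about is the bookkeeping of the signs and indices in the reindexing step, and the fact that the identity is only meaningful when $\cos(\theta/2) \ne 0$, which matches how the formula is stated. If one prefers a more mechanical derivation, an equivalent route is to write $(-1)^j\cos(j\theta) = \operatorname{Re}\bigl((-e^{i\theta})^j\bigr)$, sum the geometric series $\sum_{j=0}^n (-e^{i\theta})^j = \tfrac{1 + (-1)^n e^{i(n+1)\theta}}{1+e^{i\theta}}$, multiply numerator and denominator by $e^{-i\theta/2}$ to symmetrize, and take the real part; this produces exactly the right-hand side. Either approach is a few lines and uses only elementary identities already implicit in the paper.
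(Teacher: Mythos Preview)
Your telescoping argument is correct: multiplying by $2\cos(\theta/2)$, applying the product-to-sum identity, reindexing, and cancelling leaves exactly $\cos(\theta/2) + (-1)^n\cos((n+\tfrac12)\theta)$, and dividing back gives the stated formula. The bookkeeping you flag (the $j=-1$ endpoint and the evenness of cosine) is handled correctly.

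There is nothing to compare against here: the paper does not supply its own proof of this lemma but simply cites it from Jolley~\cite[No.~429]{jolley}. Your short derivation is therefore a fine self-contained substitute, and the alternative geometric-series route via $(-1)^j\cos(j\theta)=\operatorname{Re}\bigl((-e^{i\theta})^j\bigr)$ that you sketch would work equally well.
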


\begin{lemma}\label{lem:odd-cosine-power-sum}
Let $i < n$ be odd natural numbers. Then
\[\sum_{j=0}^n (-1)^j \cos^i \left(\frac{j\pi}{n}\right) = 1.\]
\end{lemma}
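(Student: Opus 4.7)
The plan is to reduce the sum to a linear combination of sums of the form in Lemma \ref{lem:alt-cosine}, via the standard power-reduction identity for odd powers of cosine, and then to exploit the parities of both $i$ and $n$ to simplify each term to a constant.

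First I would invoke the power-reduction formula
\[\cos^i \theta \;=\; \frac{1}{2^{i-1}} \sum_{k=0}^{(i-1)/2} \binom{i}{k} \cos\bigl((i-2k)\theta\bigr),\]
valid for odd $i$. Substituting $\theta = j\pi/n$ and interchanging the order of summation gives
\[\sum_{j=0}^n (-1)^j \cos^i\left(\frac{j\pi}{n}\right) \;=\; \frac{1}{2^{i-1}} \sum_{k=0}^{(i-1)/2} \binom{i}{k} \sum_{j=0}^n (-1)^j \cos\!\left(\frac{(i-2k)j\pi}{n}\right).\]

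Next, for each $k$, write $m = i - 2k$, which is odd and satisfies $1 \le m \le i < n$. Apply Lemma \ref{lem:alt-cosine} with $\theta = m\pi/n$ to the inner sum. Since $m$ is odd, $\cos((n+\tfrac12)\theta) = \cos(m\pi + m\pi/(2n)) = (-1)^m \cos(m\pi/(2n)) = -\cos(m\pi/(2n))$, while $\cos(\theta/2) = \cos(m\pi/(2n))$, which is nonzero because $0 < m/(2n) < 1/2$. The quotient collapses to $-\tfrac12$, leaving
\[\sum_{j=0}^n (-1)^j \cos\!\left(\frac{mj\pi}{n}\right) \;=\; \frac{1}{2} + (-1)^n\cdot\left(-\frac{1}{2}\right) \;=\; \frac{1 - (-1)^n}{2}.\]
Since by hypothesis $n$ is odd, this equals $1$ for every admissible $k$.

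Finally, collect the contributions and use the symmetry identity $\sum_{k=0}^{(i-1)/2}\binom{i}{k} = 2^{i-1}$ (which holds because $i$ is odd, so the binomial row has no central term and splits evenly into two halves via $\binom{i}{k} = \binom{i}{i-k}$). This yields
\[\frac{1}{2^{i-1}}\sum_{k=0}^{(i-1)/2} \binom{i}{k}\cdot 1 \;=\; \frac{2^{i-1}}{2^{i-1}} \;=\; 1,\]
as claimed. The only subtlety worth a moment's care is the parity bookkeeping: we genuinely need both $m$ odd (to flip the sign of $\cos((n+\tfrac12)\theta)$) and $n$ odd (so that $(-1)^n = -1$ makes the two halves reinforce rather than cancel); the assumption that $n$ itself is odd is what makes the value $1$ rather than $0$. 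I do not anticipate a substantive obstacle beyond this elementary parity check.
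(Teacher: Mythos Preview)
Your proof is correct and follows essentially the same approach as the paper: both use the odd power-reduction formula for cosine, apply Lemma~\ref{lem:alt-cosine} to each resulting inner sum, exploit the oddness of $i-2k$ and of $n$ to collapse each inner sum to $1$, and then invoke $\sum_{k=0}^{(i-1)/2}\binom{i}{k}=2^{i-1}$. Your write-up is a bit more explicit about the parity bookkeeping, but the argument is the same.
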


\begin{proof}
For odd $i$, consider the well-known power reduction formula
\[\cos^i\theta = 2^{1-i}\sum_{k=0}^{(i-1)/2} {i\choose k} \cos((i-2k)\theta).\]
Applying the previous lemma with $\theta = (i-2k)\pi / n$,
\begin{align*}
\sum_{j=0}^n (-1)^j \cos^i \left(\frac{j\pi}{n}\right) &= 2^{1-i} \sum_{k = 0}^{(i-1)/2} {i \choose k} \sum_{j=0}^n (-1)^j \cos \left(\frac{(i-2k)j\pi}{n}\right) \\
&= 2^{1-i} \sum_{k = 0}^{(i-1)/2} {i \choose k} \left(\frac{1}{2} + (-1)^n \frac{\cos((i-2k)\pi / 2n + (i-2k)\pi)}{2\cos((i-2k)\pi/2n)}\right). \\
&= 2^{1-i} \sum_{k = 0}^{(i-1)/2} {i \choose k} = 1.
\end{align*}
\end{proof}

\begin{lemma} \label{lem:sec-sum}
Let $n = 2m+1$ be odd. Then
\[\sum_{k=0}^n (-1)^k \sec \left(\frac{k\pi}{n}\right) = (-1)^mn + 1.\]
\end{lemma}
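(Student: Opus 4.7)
The plan is to identify the target sum, up to sign, as the value at $x = 0$ of a rational function admitting a clean closed form in terms of the Chebyshev polynomial $T_n$. Writing $x_k = \cos(k\pi/n)$, I would introduce
\[R(x) = \sum_{k=0}^n \frac{(-1)^k}{x - x_k},\]
so that $-R(0) = \sum_{k=0}^n (-1)^k \sec(k\pi/n)$ is exactly the quantity of interest. The key observation is that $R$ is the logarithmic derivative of the formal product $\prod_{k=0}^n (x - x_k)^{(-1)^k}$; equivalently, $R(x) = \frac{1}{2}\, F'(x)/F(x)$ for the rational function
\[F(x) = \frac{\prod_{k \text{ even}} (x - x_k)^2}{\prod_{k \text{ odd}} (x - x_k)^2}.\]

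To put $F$ in closed form, I would exploit the fact that $T_n$ takes the value $\pm 1$ precisely at the extremal nodes $x_k$, with each interior extremum appearing as a double zero. For odd $n = 2m + 1$, counting multiplicities gives
\[T_n(x) - 1 = 2^{n-1}(x-1) \prod_{j=1}^m (x - \cos(2j\pi/n))^2, \qquad T_n(x) + 1 = 2^{n-1}(x+1) \prod_{j=0}^{m-1} (x - \cos((2j+1)\pi/n))^2.\]
Since the even- and odd-indexed $x_k$'s are exactly the roots of $T_n - 1$ and $T_n + 1$ respectively, these factorizations collapse $F$ to
\[F(x) = \frac{(x-1)(T_n(x) - 1)}{(x+1)(T_n(x) + 1)}.\]
Differentiating $\log F$ and combining fractions then yields
\[R(x) = \frac{1}{x^2 - 1} + \frac{T_n'(x)}{T_n(x)^2 - 1}.\]

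To finish, I would specialize to $x = 0$. From $T_n(\cos\theta) = \cos(n\theta)$ one reads off $T_n(0) = \cos(n\pi/2) = 0$ (since $n$ is odd) and $T_n'(0) = n\, U_{n-1}(0) = n \sin(n\pi/2) = (-1)^m n$. Substituting gives $R(0) = -1 - (-1)^m n$, and therefore
\[\sum_{k=0}^n (-1)^k \sec\left(\frac{k\pi}{n}\right) = -R(0) = (-1)^m n + 1,\]
as required.

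The main obstacle will be spotting this closed form for $R(x)$. Elementary trigonometric pairings (such as combining $k$ with $n-k$ via $\sec(\pi - \theta) = -\sec\theta$) halve the sum but do not close it without essentially rediscovering the Chebyshev identity above. A contour-integration approach applied to $\sec(\pi z/n)/\sin(\pi z)$ is another natural attempt, but the vertical edges of the rectangular contour contribute nontrivially, so more work would be needed there. The logarithmic-derivative viewpoint sidesteps both difficulties by converting the alternating sign $(-1)^k$ into a ratio of even- to odd-indexed $x_k$'s, which is precisely the locus where the factorizations of $T_n \mp 1$ appear.
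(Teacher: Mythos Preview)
Your argument is correct. The logarithmic-derivative identity $R(x) = \frac{1}{x^2-1} + \frac{T_n'(x)}{T_n(x)^2-1}$ follows cleanly from the factorizations of $T_n \mp 1$ that you record, and the evaluation at $x=0$ is accurate (in particular, $T_n(0)=0$ and $T_n'(0)=(-1)^m n$ for $n=2m+1$, so both denominators are $-1$ and the specialization is unproblematic).

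Your route is genuinely different from the paper's. The paper simply quotes the closed-form identity $\sum_{k=0}^m \sec(2k\pi/(2m+1)) = \tfrac12(-1)^m(2m+1) + \tfrac12$ from an external reference and then uses the symmetry $\sec(2k\pi/n) = -\sec((n-2k)\pi/n)$ to double the half-sum into the full alternating sum. By contrast, you give a self-contained derivation: you package the alternating signs as the logarithmic derivative of a ratio of even- to odd-indexed Chebyshev nodes, collapse that ratio via the factorizations of $T_n \mp 1$, and read off the value at $0$. What this buys you is an actual proof rather than a citation, and moreover a one-parameter identity $R(x)$ of which the lemma is the special case $x=0$; what the paper's version buys is brevity, at the cost of deferring the real work to the quoted source.
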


\begin{proof}
This follows from the identity \cite{wolfram}
\[\sum_{k=0}^m \sec \left(\frac{2k\pi}{2m + 1}\right) = \frac{1}{2}(-1)^m(2m+1) + \frac{1}{2},\]
and the observation that $\sec(2k\pi/n) = -\sec((n-2k)\pi / n)$.
\end{proof}

\begin{lemma}\label{lem:even-sine-power-sum}
Let $2 \le 2i < n$. Then
\[\sum_{j=0}^{n} (-1)^j \sin^{2i} \left(\frac{j\pi}{2n}\right) = \frac{1}{2}(-1)^n.\]
\end{lemma}

\begin{proof}
Consider the power reduction formula
\[\sin^{2i}(\theta) = 2^{-2i}{2i \choose i} + 2^{1-2i} \sum_{k=0}^{i-1} (-1)^{i-k} {2i \choose k} \cos ((2i - 2k)\theta). \]
Let $\theta = j\pi/2n$. Then
\[\sum_{j=0}^{n} (-1)^j \sin^{2i} \left(\frac{j\pi}{2n}\right) = \frac{1}{2}(1 + (-1)^n)2^{-2i}{2i \choose i} + 2^{1-2i} \sum_{k=0}^{i-1} (-1)^{i-k} {2i \choose k} \sum_{j=0}^n (-1)^j \cos \left(\frac{(i - k)j\pi}{n}\right).\]
By \lemref{lem:alt-cosine}, the sum on the right simplifies to
\begin{align*}
2^{1-2i} \sum_{k=0}^{i-1} &(-1)^{i-k} {2i \choose k} \left(\frac{1}{2} + (-1)^n \frac{\cos((i-k)\pi/2n + (i-k)\pi)}{2 \cos((i-k)\pi/2n)}\right) \\
&= 2^{1-2i} \sum_{k=0}^{i-1} (-1)^{i-k} {2i \choose k} \left(\frac{1}{2} + \frac{1}{2}(-1)^{n + i-k}\right) \\
&= 2^{-2i} \sum_{k=0}^{i-1} (-1)^{i-k} {2i \choose k} + (-1)^n2^{-2i} \sum_{k=0}^{i-1} {2i \choose k}\\
\end{align*}
Using the identity
\[\sum_{k=0}^{2i} (-1)^{k} {2i \choose k} = 0\]
and the symmetry of the binomial coefficients, the first sum evaluates to $-\frac{1}{2} {2i \choose i}$. Therefore,
\begin{align*}
\sum_{j=0}^{n} (-1)^j \sin^{2i} \left(\frac{j\pi}{2n}\right) &= (1 + (-1)^n)2^{-1-2i}{2i \choose i} - 2^{-1-2i}{2i \choose i} + (-1)^n2^{-2i} \left(2^{2i-1} - \frac{1}{2} {2i \choose i}\right)\\
&= \frac{1}{2}(-1)^n.
\end{align*}
\end{proof}

\begin{lemma} [{\cite[No. 440]{jolley}}] \label{lem:csc-sum}
\[\sum_{j=1}^{n-1} \csc^2 \left(\frac{j\pi}{2n}\right) = \frac{4n^2 - 4}{6}.\]
\end{lemma}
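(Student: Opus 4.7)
The identity as displayed depends on $n$ alone, so the intended reading (matching the use in the proof of Corollary \ref{cor:markovatone}) is the case $\theta = \pi/(2n)$, i.e.\
$$\sum_{j=1}^{n-1} \csc^2 \!\left(\frac{j\pi}{2n}\right) = \frac{4n^2 - 4}{6}.$$
I would prove this by reducing to the classical identity
$$(*)\qquad \sum_{k=1}^{N-1} \csc^2 \!\left(\frac{k\pi}{N}\right) = \frac{N^2 - 1}{3}$$
and then folding the sum using the symmetry $\csc(\pi - x) = \csc(x)$.

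The main plan is as follows. First I would establish $(*)$. The cleanest route is to observe that the Chebyshev polynomial $U_{N-1}(y) = \sin(N\theta)/\sin\theta$ (with $y=\cos\theta$) has simple roots exactly at $y_k := \cos(k\pi/N)$ for $k = 1, \dots, N-1$. Then $\csc^2(k\pi/N) = 1/(1 - y_k^2)$, so the sum in $(*)$ equals $\sum_{k=1}^{N-1} \frac{1}{1 - y_k^2}$, which is a rational symmetric function of the roots of $U_{N-1}$ that can be read off via Vieta / Newton's identities. Equivalently, one may use the logarithmic-derivative approach: differentiate $\log\sin(N\theta) = \log\sin\theta + \log U_{N-1}(\cos\theta)$ twice with respect to $\theta$, and evaluate at $\theta = k\pi/N$; the coefficient comparison produces exactly $(*)$. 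Either derivation reduces the question to a finite polynomial computation with no analytic input.

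Once $(*)$ is in hand, applying it with $N = 2n$ gives
$$\sum_{k=1}^{2n-1} \csc^2 \!\left(\frac{k\pi}{2n}\right) = \frac{4n^2 - 1}{3}.$$
I then split this sum at the middle term $k = n$, which contributes $\csc^2(\pi/2) = 1$. Pairing $k$ with $2n - k$ and using $\csc(\pi - x) = \csc(x)$, the remaining $2n-2$ terms break into two equal halves:
$$2 \sum_{j=1}^{n-1} \csc^2 \!\left(\frac{j\pi}{2n}\right) + 1 = \frac{4n^2 - 1}{3}.$$
Solving gives $\sum_{j=1}^{n-1}\csc^2(j\pi/(2n)) = (4n^2 - 4)/6$, as required.

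The only substantive step is proving $(*)$; the rest is algebra. I would expect the main obstacle to be carrying out the Vieta/Newton computation cleanly, since one needs the sum of $1/(1 - y_k^2)$ rather than a simple power sum of the roots $y_k$. A convenient workaround is to use the partial-fraction identity $\frac{P'(y)}{P(y)} = \sum_k \frac{1}{y - y_k}$ for $P(y) = U_{N-1}(y)$, differentiated and evaluated appropriately at $y = \pm 1$, so that the quantity $\sum_k 1/(1-y_k^2) = \tfrac12 \sum_k \left(\tfrac{1}{1-y_k} + \tfrac{1}{1+y_k}\right)$ is extracted directly from $U_{N-1}'(\pm 1)/U_{N-1}(\pm 1)$ (using the known values $U_{N-1}(1) = N$, $U_{N-1}(-1) = (-1)^{N-1}N$, and $U_{N-1}'(\pm 1)$ obtained from the explicit formula for $T_N$). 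This gives $(*)$ in closed form, after which the folding argument above finishes the lemma.
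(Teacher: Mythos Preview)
Your derivation is correct. The paper does not actually prove this lemma; it merely cites it from Jolley's tables (Nos.\ 439, 440), so there is no ``paper's own proof'' to compare against. Your reading of the intended $\theta = \pi/(2n)$ is right (this is exactly how the lemma is used in \corref{cor:markovatone} and in the proof of \lemref{lem:alt-csc-sum}), and your reduction to the classical identity $\sum_{k=1}^{N-1}\csc^2(k\pi/N) = (N^2-1)/3$ followed by the folding $k \leftrightarrow 2n-k$ is a clean and standard way to obtain the stated value. The Chebyshev/logarithmic-derivative computation of $(*)$ that you outline also goes through: with $P = U_{N-1}$ one has $P(1)=N$ and $P'(1)=N(N^2-1)/3$, giving $P'(1)/P(1)=(N^2-1)/3$, and the symmetric contribution at $y=-1$ doubles this to yield $(*)$ directly.
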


\begin{lemma} \label{lem:alt-csc-sum}
\[\sum_{j=1}^{n-1}(-1)^j \csc^2 \left(\frac{j\pi}{2n}\right) = -\frac{n^2}{3}  - \frac{1}{6} - \frac{1}{2}(-1)^n.\]
\end{lemma}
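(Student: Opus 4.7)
The plan is to reduce the alternating sum to the non-alternating version from Lemma \ref{lem:csc-sum} via a parity split on the summation index.

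Write $S := \sum_{j=1}^{n-1}(-1)^j\csc^2(j\pi/(2n))$ and $T := \sum_{j=1}^{n-1}\csc^2(j\pi/(2n))$. By Lemma \ref{lem:csc-sum}, $T = \frac{2(n^2-1)}{3}$. Splitting $T$ and $S$ according to whether the index $j$ is even or odd, one immediately obtains $S + T = 2 S_{\mathrm{even}}$, where $S_{\mathrm{even}}$ collects the terms with even $j \in \{1,\dots,n-1\}$. A half-angle rewrite $\csc^2(2k\pi/(2n)) = \csc^2(k\pi/n)$ gives
\[
S_{\mathrm{even}} = \sum_{k=1}^{\lfloor (n-1)/2 \rfloor} \csc^2\!\left(\tfrac{k\pi}{n}\right).
\]

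The calculation now bifurcates on the parity of $n$. When $n = 2m$ is even, the inner sum becomes $\sum_{k=1}^{m-1}\csc^2(k\pi/(2m))$, which is precisely the content of Lemma \ref{lem:csc-sum} with $n$ replaced by $m$, yielding $S_{\mathrm{even}} = \tfrac{2(m^2-1)}{3} = \tfrac{n^2-4}{6}$. When $n$ is odd, the points $\{k\pi/n : 1 \le k \le n-1\}$ are symmetric under $k \leftrightarrow n-k$, so the classical identity $\sum_{k=1}^{n-1}\csc^2(k\pi/n) = (n^2-1)/3$ gives $S_{\mathrm{even}} = (n^2-1)/6$.

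Substituting into $S = 2S_{\mathrm{even}} - T$ produces $S = -\tfrac{n^2+2}{3}$ for even $n$ and $S = -\tfrac{n^2-1}{3}$ for odd $n$; one then verifies that both cases coincide with the uniform expression $-\tfrac{n^2}{3} - \tfrac{1}{6} - \tfrac{(-1)^n}{2}$ after extracting the parity-dependent term.

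The main friction point is that the identity $\sum_{k=1}^{n-1}\csc^2(k\pi/n) = (n^2-1)/3$ used in the odd case is not literally Lemma \ref{lem:csc-sum} (whose spacing is $\pi/(2n)$), but rather a companion Jolley identity at spacing $\pi/n$; it follows from the same partial-fraction expansion of $\csc^2$ that underpins Lemma \ref{lem:csc-sum}, or alternatively from the standard $\sum_{k=1}^{n-1}\cot^2(k\pi/n) = (n-1)(n-2)/3$ together with $\csc^2 = 1 + \cot^2$. Once this is in hand, the remainder of the argument is straightforward parity bookkeeping.
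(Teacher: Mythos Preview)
Your argument is correct. Both your proof and the paper's rest on the same parity-split idea, $S = T - 2\sum_{\text{odd}} = 2\sum_{\text{even}} - T$, together with Lemma~\ref{lem:csc-sum} for $T$; the difference is purely tactical. The paper evaluates the odd-index half directly by quoting another Jolley identity, $\sum_{j \text{ odd}}^{n-1}\csc^2(j\pi/(2n)) = \tfrac{n^2}{2} + \tfrac{1}{2}((-1)^n-1)$, and so avoids any case split on the parity of $n$. You instead handle the even-index half by re-indexing $\csc^2(2k\pi/(2n)) = \csc^2(k\pi/n)$, which has the pleasant feature that for even $n$ it collapses to another instance of Lemma~\ref{lem:csc-sum}, but for odd $n$ forces you to invoke the companion identity $\sum_{k=1}^{n-1}\csc^2(k\pi/n) = (n^2-1)/3$ (which you correctly flag). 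Net effect: same skeleton, one auxiliary $\csc^2$ identity either way; the paper's route is marginally cleaner in that the parity of $n$ is absorbed into the quoted formula rather than branching the argument.
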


\begin{proof}
Consider the identity \cite[Nos. 441, 442]{jolley}
\[\sum_{\substack{j=1 \\ j \text{ odd}}}^{n-1} \csc^2 \left(\frac{j\pi}{2n}\right) = \frac{n^2}{2}  + \frac{1}{4} ((-1)^n - 1).\]
Let $\theta = \pi/2n$ and subtract twice the second identity from the identity in \lemref{lem:csc-sum}. Then we get
\[\sum_{j=1}^{n-1}(-1)^j \csc^2 \left(\frac{j\pi}{2n}\right) = \frac{4n^2 - 4}{6} - n^2  - \frac{1}{2} ((-1)^n - 1) = -\frac{n^2}{3}  - \frac{1}{6} - \frac{1}{2}(-1)^n.\]
\end{proof}

\end{document}